\newif\ifRR
\title{Degree of sequentiality of weighted automata~\thanks{This work has been funded by the DeLTA project (ANR-16-CE40-0007).}}
\author{Laure Daviaud\inst{1}\thanks{This author was partially supported by ANR Project ELICA ANR-14-CE25-0005, ANR Project RECRE ANR-11-BS02-0010 and by project \textsc{lipa} that has received funding from the European Research Council (\textsc{erc}) under the European Union’s Horizon 2020 research and innovation programme (grant agreement Nb 683080).} \and Isma{\"e}l Jecker\inst{2} \and Pierre-Alain Reynier\inst{3} \and Didier Villevalois\inst{3}}
\institute{
Warsaw University, Poland
\and
Université Libre de Bruxelles, Belgium
\and
Aix-Marseille Univ, LIF, CNRS, France
}
\newcommand{\ie}{\emph{i.e.}\ }
\newcommand{\resp}{\emph{resp.}\ }
\newcommand{\intro}[1]{\textit{#1}}
\renewcommand{\leq}{\leqslant}
\renewcommand{\geq}{\geqslant}
\newcommand{\trans}[2]{\ensuremath{\textcolor{magenta}{#1}|\textcolor{blue}{#2}}}
\newcommand{\weight}[1]{\ensuremath{\textcolor{blue}{#1}}}
\renewcommand{\emptyset}{\varnothing}
\newcommand{\dom}{\textsf{dom}}
\newcommand{\N}{\mathbb{N}}
\newcommand{\alphabet}{A}
\newcommand{\semi}{\mathbb{S}}
\newcommand{\monoid}{\mathbb{M}}
\newcommand{\group}{\mathbb{G}}
\newcommand{\free}{\mathcal{F}(B)}
\newcommand{\unit}{\mathbf{1}}
\newcommand{\zero}{\mathbf{0}}
\newcommand{\partie}[1]{\mathcal{P}_\mathsf{fin}(#1)}
\newcommand{\up}{\mathcal{UP}}
\newcommand{\val}{\mathcal{V}}
\newcommand{\semipart}{\mathbb{P}_\mathsf{fin}}
\newcommand{\commentPA}[1]{}%\marginpar{\textcolor{blue}{Please read} #1}}
\newcommand{\gene}{\Gamma}
\newcommand{\base}{M_{\poids}}
\newcommand{\dist}{\textsl{dist}}
\newcommand{\lcp}{\textsl{lcp}}
\newcommand{\delay}[2]{delay(#1,#2)}
\newcommand{\delaymot}[2]{\textsl{dist}(#1,#2)}
\newcommand{\cayley}[2]{d(#1,#2)}
\newcommand{\BTP}[1]{\ensuremath{\textsl{BTP}_{#1}}}
\newcommand{\lip}[1]{\ensuremath{\textsl{Lip}_{#1}}}
\newcommand{\Kc}{\ensuremath{L}}
\newcommand{\TP}[1]{\ensuremath{\textsl{TP}_{#1}}}
\newcommand{\poids}{W}
\newcommand{\reg}{C}
\newcommand{\alphab}{B}
\newcommand{\PTIME}{\ensuremath{\textsc{Ptime}}\xspace}
\newcommand{\PSPACE}{\ensuremath{\textsc{Pspace}}\xspace}
\newcommand{\Reg}{\mathcal{X}}
\newcommand{\inter}[1]{ \lbrack \! \lbrack #1 \rbrack \!\rbrack}
\newcommand{\rank}[1]{ \textsf{rank}(#1)}
\newcommand{\selec}[1]{ \textsf{select}(#1)}
\newcommand{\sele}{ \textsf{select}}
\newcommand{\select}[1][]{%
\ifthenelse{\equal{#1}{}}{ \textsf{select}_1}{ \textsf{select}_1(#1)}%
}
\newcommand{\selectt}[1][]{%
\ifthenelse{\equal{#1}{}}{ \textsf{select}_2}{ \textsf{select}_2(#1)}%
}
\newcommand{\prop}[1]{\textbf{\textsf{P}$_{#1}$}}
\newcommand{\inds}{\tau}
\newcommand{\indr}{\eta}
\newcommand{\tinit}{t_{\textsl{init}}}
\newcommand{\tfinal}{t_{\textsl{final}}}
\newcommand{\chunk}[1]{\ensuremath{(#1)\text{-chunk}}}
\newcommand{\ini}[3]{\ensuremath{#1 \in \{ #2, \ldots, #3 \}}}
\newcommand{\inc}{\ensuremath{\!\!+\!\!+}}
\newcommand{\flast}{\ensuremath{f_{last}}}
\tikzset{
  initial/.style={
    initial by arrow,
    initial where=left,
    initial distance=1cm,
    initial text=,
    append after command={%
      \pgfextra
        \pgfinterruptpath
          \node[above left, xshift=-0.2cm] at (\tikzlastnode.west) {#1};
        \endpgfinterruptpath
      \endpgfextra
    },
  },
  accepting/.style={
    accepting by arrow,
    accepting where=right,
    accepting distance=1cm,
    accepting text=,
    append after command={%
      \pgfextra
        \pgfinterruptpath
          \node[above right, xshift=+0.2cm] at (\tikzlastnode.east) {#1};
        \endpgfinterruptpath
      \endpgfextra
    },
  },
}
\tikzstyle{accepting by relation}=    [after node path=
\tikzstyle{every accepting by relation}=[]
\tikzstyle{initial by relation}=   [after node path=
\tikzstyle{every initial by relation}=[]
\def\tikz@initial@text@anchor{west}
\def\tikz@accepting@text@anchor{west}
\newcommand{\notePA}{}%\marginpar{PA: please read!}}
\begin{document}

\maketitle

%%%%%%%%%%% ABSTRACT %%%%%%%%%%%%%%%%%%%%%%%%%%%%%%%%%%%%%%%%%%%%%

\begin{abstract}
% !TEX root =  main.tex

%abstract

Weighted automata (WA) are an important formalism to
describe quantitative properties. 
Obtaining equivalent deterministic machines is a longstanding 
research problem. %, though we know that it is not always possible.
In this paper we consider WA with a set semantics,
meaning that the semantics is given by the set of weights of accepting runs.
We focus on multi-sequential WA that are defined as finite
unions of sequential WA. 
The problem we address is to minimize the size of this union. We call this minimum the
degree of sequentiality of (the relation realized by) the WA.

For a given positive integer $k$, we provide multiple
characterizations of relations realized by a union of $k$
sequential \notePA WA over an infinitary finitely generated group: 
a Lipschitz-like machine independent property,
a pattern on the automaton (a new twinning property)
and a subclass of cost register
automata. When possible, we effectively translate a WA into an
equivalent union of $k$ sequential WA.
%
%We introduce a new twinning property as well as a new Lipschitz property of order~$k$ and
%use them to characterize relations that can be realized by a union of $k$ sequential 
%WA over an infinitary finitely generated group. In addition, when possible,
%we effectively build an equivalent union of $k$ sequential WA.
We also provide a decision procedure for our twinning property for commutative computable groups
thus allowing to compute the degree of sequentiality. Last, we show that these results also hold
for word transducers and that the associated decision problem is \PSPACE-complete.

\end{abstract}

%%%%%%%%%%%%%%%%%%%%%%%%%%%%%%%%%%%%%%%%%%%%%%%%%%%%%%%%%%%%%%%%%%

\section{Introduction}
\label{sec:introduction}
% !TEX root =  main.tex

%% introduction.tex

%1. weighted automata
%
%model well established
%
%close to untractable in general
%
%important subclasses : deterministic, unambiguous, finitely ambiguous
%
%determination and minimization are natural problems but difficult ones
%highly depends on the considered semi ring

\paragraph*{Weighted automata.}
Finite state automata can be viewed as functions from words to
Booleans and, thus, describe languages. Such automata have been
extended to define functions from words to various
structures yielding a very rich literature, with
recent applications in quantitative verification~\cite{CDH10}.
Weighted automata~\cite{Sch61} (WA) is the oldest of such formalisms.
They are defined over
semirings $(S,\oplus,\otimes)$ by adding weights from $S$ on
transitions;
the weight of a run is the product of the weights of the transitions,
and the weight of a word $w$ is the sum of the weights of the
accepting runs on $w$.

The decidability status of natural decision problems
such as universality and equivalence highly depends on the considered
semiring~\cite{Sakabook}. The first operation of the semiring,
used to aggregate the values computed by the different runs,
plays an important role in the (un)decidability results.
Inspired by the setting of word transducers, recent works have
considered a set semantics that consists in keeping all these
values as a set, instead of aggregating them~\cite{Manu-LMCS},
and proved several decidability results for the resulting class
of finite-valued weighted automata~\cite{DBLP:conf/fsttcs/FiliotGR14}.

For automata based models, a very important problem is to simplify
the models. For instance,
deterministic (a.k.a. sequential) machines allow to derive efficient
evaluation algorithms.
% Similarly, reducing the size of the models allows to
%reduce the computation time of most algorithms, and thus minimisation
%has been extensively studied.
In general, not every WA can be transformed into an equivalent
sequential one. The sequentiality problem
then asks, given a WA on some semiring $(S,\oplus,\otimes)$,
whether there exists an equivalent sequential WA over $(S,\oplus,\otimes)$.
This problem ranges from trivial to undecidable, depending on
the considered semiring, see~\cite{LS06} for a survey and~\cite{KirstenL09,Kirsten12} for more recent works.
%Different subclasses of weighted automata
%have thus been introduced by considering restrictions on the
%underlying automaton such as being unambiguous
%or finitely ambiguous~\cite{KirstenL09}.
%Regarding size reduction, the problem is usually to minimize the number of states of deterministic instances. Algorithms that, given a weighted automaton, minimize the number of states of an equivalent deterministic one are given in \cite{Mohri00} and \cite{Maletti08}.

%2. Multi sequential machines
%
%Initially introduced in the setting of functional transducers,
%multi sequential functions are defined as those that can be
%realized by a finite union of sequential transducers.
%
%Cite + characterization
%
%Why is it interesting : develop
%
%Observe that it is a strict subclass of rational functions.
%
%Considered for relations recently by Ismael and Manu:
%decision in Ptime
%Note : only finite valued relations, by definition
%
%Natural open problem: minimize the size of the union.
%Decision problem: given k, is it k-sequential?
%
%Observation: of course, this definition can be considered
%also in the setting of weighted automata.
%
%Our objective: solve this minimization problem in
%the largest possible setting of weighted automata

\paragraph*{Sequential transducers.}

Transducers define rational relations over words. They
can be viewed as weighted automata over the semiring of finite sets of
words (thus, built over the free monoid); sum is the set union and
product is the concatenation extended to sets.
%Transducers are weighted automata over
%the semiring of sets of finite words, equipped
%with union and concatenation.
 When the underlying
automaton is deterministic, then the transducer
is said to be \emph{sequential}. The class of sequential functions, \emph{i.e.}
those realized by sequential transducers, has been characterized among
the class of rational functions by Choffrut, see for instance~\cite{BealC02} for a
presentation:
\begin{theorem}[\cite{Choffrut77}]
\label{thm:Choffrut}
Let $T$ be a functional finite state transducer and $\inter{T}$ be the function realized by $T$.
The following assertions are equivalent:
\begin{enumerate}[$i)$]
\item $\inter{T}$ satisfies the bounded variation property
\item $T$ satisfies the twinning property
\item $\inter{T}$ is computed by a sequential transducer
\end{enumerate}
\end{theorem}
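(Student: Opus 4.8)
The plan is to prove the cycle of implications $iii) \Rightarrow i) \Rightarrow ii) \Rightarrow iii)$, the classical route for characterizations of this shape. For $iii) \Rightarrow i)$, suppose $\inter{T}$ is realized by a sequential transducer $S$, with maximal output length $c$ on a transition and maximal final output of length $c'$. Given $u,v\in\dom(\inter{T})$ with longest common prefix $x$, determinism forces $S$ to read $x$ through a single run reaching a unique state and emitting a common output prefix; the remaining parts of $\inter{T}(u)$ and $\inter{T}(v)$ are produced while reading the suffixes of $u$ and $v$ (of total length $\dist(u,v)$) and finishing, hence each has length at most $c\cdot\dist(u,v)+c'$. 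Therefore $\dist(\inter{T}(u),\inter{T}(v))\le 2c\cdot\dist(u,v)+2c'$, \ie $\inter{T}$ has bounded variation.

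For $i)\Rightarrow ii)$, I argue by contraposition. If $T$ violates the twinning property there are states $p,q$, input words $u,v$, runs from initial states reading $u$ to $p$ and to $q$ with respective outputs $\alpha_1,\alpha_2$, and loops reading $v$ on $p$ and on $q$ with respective outputs $\beta_1,\beta_2$, such that $\delay{\alpha_1\beta_1^n}{\alpha_2\beta_2^n}$ grows unboundedly with $n$. Completing $p$ and $q$ to accepting states by words $w_1,w_2$, the inputs $uv^nw_1$ and $uv^nw_2$ lie in $\dom(\inter{T})$ and are at distance at most $|w_1|+|w_2|$, independently of $n$. It then remains to check that the unbounded delay between $\alpha_1\beta_1^n$ and $\alpha_2\beta_2^n$ cannot be absorbed by appending the bounded words produced along $w_1$ and $w_2$, which is a consequence of Fine and Wilf's periodicity theorem; hence $\dist(\inter{T}(uv^nw_1),\inter{T}(uv^nw_2))$ is unbounded although the input distance stays bounded, contradicting bounded variation.

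For $ii)\Rightarrow iii)$, the main construction, I determinize $T$ with delayed output. A state of the sequential transducer is a pair $(P,\delta)$ where $P\subseteq Q$ is the set of states of $T$ reachable on the input consumed so far and $\delta$ assigns to each $p\in P$ the suffix of the output of the run ending at $p$ that has not yet been committed; on reading a letter one updates $P$ and the residuals, emits their longest common prefix, and keeps the rest. The crux of the argument, and the step I expect to be the main obstacle, is to prove that the twinning property bounds $\max_{p\in P}|\delta(p)|$ by a constant $N$, so that the state space is finite: a residual longer than $N$ would, by pumping, exhibit two runs sharing a loop with strictly growing delay, contradicting the twinning property. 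Correctness then follows from the invariant that for every reachable state $(P,\delta)$ and every $p\in P$, the output already emitted concatenated with $\delta(p)$ equals the output of the corresponding run of $T$; together with functionality of $T$ this fixes the final production, so the sequential transducer realizes exactly $\inter{T}$.
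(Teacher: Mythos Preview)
The paper does not prove this theorem; it is quoted from~\cite{Choffrut77} as motivation and later invoked as the base case $k=1$ in the inductive proof of Proposition~\ref{prop:BTP-kseq}. So there is no in-paper proof to compare your attempt against directly. Your sketch is the classical argument and is correct; it also mirrors the scheme the paper follows for general $k$: your $iii)\Rightarrow i)$ is the $k=1$ instance of Proposition~\ref{prop:kseq-lip}, your contrapositive $i)\Rightarrow ii)$ is how one direction of Proposition~\ref{prop:BTP-lip-equivalence} is proved, and your delayed-output determinization for $ii)\Rightarrow iii)$ is exactly the B\'eal--Carton subset construction~\cite{BealC02} that the paper adapts in Section~\ref{sec:result}.

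Two minor points on the write-up. In $i)\Rightarrow ii)$, you should make the case split explicit: if $|\beta_1|\neq|\beta_2|$ the output lengths diverge, and if $|\beta_1|=|\beta_2|$ then Fine--Wilf forces a mismatch that survives the bounded suffixes $\gamma_1,\gamma_2$; invoking Fine--Wilf without saying which case it handles leaves the argument opaque. In $ii)\Rightarrow iii)$, the contradiction you reach is not a loop with ``strictly growing delay'' but a synchronized loop along which the delay \emph{changes}; the finiteness of the residuals then follows because twinning freezes the delay on every synchronized loop, so the delay can only vary over the non-looping part of the run, whose length is bounded by the number of state pairs.
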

In this result, two key tools are introduced: a property of the function, known
as the \emph{bounded variation property}, and a pattern property of the transducer,
known as the \emph{twinning property}.

\paragraph*{Multi-sequential weighted automata.}

Multi-sequential functions
of finite words have been introduced in~\cite{ChoffrutS86}
as those functions that can be realized by a finite
union of sequential transducers. A characterization of these
functions among the class of rational functions is given in~\cite{ChoffrutS86}.
Recently, this definition has been lifted to relations
in~\cite{JeckerF15} where it is proved that the class of so-called
multi-sequential relations can be decided in \PTIME
among the class of rational relations.

We consider in this paper \emph{multi-sequential weighted automata},
defined as finite unions of sequential WA. As described above,
and following~\cite{Manu-LMCS}, we consider weighted automata
with a set semantics.
We argue that multi-sequential WA are an interesting
compromise between sequential and non-deterministic ones.
Indeed, sequential WA have a very low expressiveness, while
it is in general difficult to have efficient evaluation procedures
for non-deterministic WA. Multi-sequential WA allow
to encode standard examples requiring non-determinism,
% (any
%regular look-ahead on the input word for instance),
yet provide a natural evaluation procedure.
% by evaluating simultaneously
%the different sequential WA of the union.
Multi-sequential WA can indeed be efficiently 
evaluated in parallel by using a thread per member of the union, thus
avoiding inter-thread communication.
% It is a natural problem to minimize 
%the number of threads used for the evaluation and this was our motivation for this work.

A natural problem consists in minimizing the size
of the union of multi-sequential WA that is,
given a WA and a natural number $k$,
decide whether it can be realized as a union of $k$
sequential WA. We are also interested
in identifying the minimal such $k$, that we call \emph{degree of sequentiality}
of the WA.

\paragraph*{Contributions.}

In this paper, we propose a solution to the problem of the computation of
the degree of sequentiality of WA. Following previous works~\cite{Manu-LMCS,daviaud_2016}, we consider
WA over infinitary finitely generated groups. We introduce new generalizations of
the tools of Choffrut that allow us to characterize the relations that can be
defined as unions of $k$ sequential WA: first, a property of relations
that extends a Lipschitz property for transducers, and is called
\emph{Lipschitz property of order~$k$} (\lip{k} for short); second, a
pattern property of transducers, called \emph{branching
twinning property of order~$k$} (\BTP{k} for short). We prove:
\begin{theorem}
\label{theorem:main}
Let $W$ be a weighted automaton with set semantics over an infinitary finitely generated group
and $k$ be a positive integer.  The following
assertions are equivalent:
\begin{enumerate}[$i)$]
\item $\inter{W}$ satisfies the Lipschitz property of order~$k$, \label{th-lip}
\item $W$ satisfies the branching twinning property of order~$k$,\label{th-btp}
\item $\inter{W}$ is computed by a $k$-sequential weighted automaton, \label{th-wa}
%\item $\inter{W}$ is computed by a CRA with $k$ independent registers. \label{th-cra}
\end{enumerate}
In addition, the equivalent model of property \ref{th-wa}) % and \ref{th-cra})
can be effectively computed.
\end{theorem}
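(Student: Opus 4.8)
The plan is to close the cycle of implications $\ref{th-wa}) \Rightarrow \ref{th-lip}) \Rightarrow \ref{th-btp}) \Rightarrow \ref{th-wa})$, the last arrow carrying the effective construction. Throughout, the finitely generated hypothesis is what makes the word (Cayley-graph) metric on the group available, so that both \lip{k} and the delays appearing in \BTP{k} are well defined, while the infinitary hypothesis is used in the form: iterating a loop whose net output is non-trivial drives its group value out of every bounded ball (no accidental cancellation or finite order).

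For $\ref{th-wa}) \Rightarrow \ref{th-lip})$, suppose $\inter{W} = \inter{W_1} \cup \dots \cup \inter{W_k}$ with each $W_i$ sequential. A sequential WA over a finitely generated group is order-$1$ Lipschitz: its unique run on an input determines the output, and the output moves by a bounded amount per letter, so close inputs have close outputs. Given $k+1$ inputs that are pairwise close, each of the relevant output values (in the set semantics) is produced by one of the $k$ components; by the pigeonhole principle two of them are produced by the same $W_i$, whose order-$1$ bound controls the corresponding output distance. The only care needed is to match the chosen values across the $k+1$ inputs in the set semantics; this yields \lip{k}.

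For $\ref{th-lip}) \Rightarrow \ref{th-btp})$ I would argue contrapositively: if $W$ violates \BTP{k}, take a witnessing order-$k$ branching pattern, i.e.\ $k+1$ runs laid out on a common input scaffold with synchronized loops whose net weights pairwise disagree. Pumping all loops $n$ times keeps the $k+1$ inputs within a distance independent of $n$ (they differ only in loop multiplicities on the same scaffold), while, by the infinitary assumption, the pairwise output distances grow without bound in $n$. Hence no uniform $N$ can witness \lip{k}, so \lip{k} fails; equivalently $\ref{th-lip}) \Rightarrow \ref{th-btp})$.

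The crux is $\ref{th-btp}) \Rightarrow \ref{th-wa})$, which also yields effectiveness. First, from the mere absence of the order-$k$ branching pattern I would extract, by a Ramsey/pumping argument over the finitely many control states of $W$ (using that balls in a finitely generated group are finite, and the infinitary assumption to rule out cancellations), a computable bound $\Delta$ such that any family of runs on a common input that start close splits into at most $k$ sub-families, inside each of which all pairwise delays stay below $\Delta$ — otherwise one reconstructs the forbidden witness. Then I would determinize $W$ while maintaining, as a macro-state, a bounded collection of at most $k$ ``threads'', each a pair (state of $W$, delay vector bounded by $\Delta$); whenever the live runs can no longer be kept inside the current thread-families, fork, the bound guaranteeing at most $k$ coexisting families. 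Projecting onto each family gives a deterministic machine along which delays are bounded, so its output can be emitted deterministically exactly as in the weighted analogue of Choffrut's sequentialization (Theorem~\ref{thm:Choffrut}); the $k$-fold union realizes $\inter{W}$, and every bound involved is computable from $W$ and $k$. The main obstacle is precisely this step: bridging the syntactic non-existence of the order-$k$ pattern to the quantitative statement that $k$ bounded-delay families always suffice, which demands a careful combinatorial argument controlling the branching structure and the group-valued delays simultaneously; the parallel determinization across the $\le k$ families is then a controlled extension of the $k=1$ (twinning $\Rightarrow$ sequential) construction.
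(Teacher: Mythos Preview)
Your cycle iii)$\Rightarrow$i)$\Rightarrow$ii)$\Rightarrow$iii) matches the paper's, and your pigeonhole argument for iii)$\Rightarrow$i) is exactly Proposition~\ref{prop:kseq-lip}. The other two arrows have real gaps.

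For i)$\Rightarrow$ii) (contrapositive), you have misread the ``branching'' in \BTP{k}. The $k+1$ runs read potentially \emph{different} input words; a violation says that for every pair $(j,j')$ there is a loop index $i$ at which the delay changes, with the inputs agreeing only \emph{up to} loop $i$. Pumping all loops $n$ times therefore does \emph{not} keep the inputs at bounded distance: past the branching point of $j$ and $j'$ the loop words $v_{i',j},v_{i',j'}$ may differ, and their $n$-fold repetition makes $\dist(w_j,w_{j'})$ grow linearly in $n$. What you wrote is essentially the argument for the non-branching $\TP{k}$ of~\cite{daviaud_2016}, a strictly weaker property. The paper's Lemma~\ref{lemma:nonBTP_nonlip} fixes this by choosing pump amounts $t_k,\dots,t_1$ inductively in \emph{decreasing} loop index: once $t_{i+1},\dots,t_k$ are fixed, $\dist(w_j,w_{j'})$ is bounded for every pair whose delay-changing loop is $i$, and the infinitary hypothesis lets one take $t_i$ large enough that the output delay there exceeds $\Kc\cdot(\dist(w_j,w_{j'})+1)$.

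For ii)$\Rightarrow$iii), your Ramsey-and-fork sketch leaves the hard step unargued and diverges from the paper. The paper does not extract one global bound $\Delta$ and run $k$ parallel threads; it proceeds by \emph{induction on $k$}. One builds the (infinite) subset-with-delays automaton $D_W$ and proves the key splitting lemma \prop{2}: if a reachable macro-state $S$ contains a delay exceeding $N_W$, then $S$ partitions into $S',S''$ with $\rank{S'}+\rank{S''}\le k$. This is shown by locating, on the path to $S$, a synchronized loop along which some delays change and others do not, and then \emph{grafting} arbitrary \BTP{}-counterexamples from $W_{S'}$ and $W_{S''}$ onto this prefix-plus-loop to manufacture a \BTP{}-counterexample for $W$ with one more run; the induction hypothesis then sequentializes $W_{S'}$ and $W_{S''}$ separately. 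Your ``fork when the families no longer fit'' neither explains why at most $k$ families ever coexist nor how the partition into families is kept \emph{consistent} along a run---the paper explicitly names this consistency as the obstruction to reusing the construction of~\cite{daviaud_2016}. The inductive rank-splitting is precisely the mechanism that handles both issues and yields the effective construction.
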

As demonstrated by this result, the first important contribution of our work is thus to
identify the correct adaptation of the properties of Choffrut suitable to characterize
$k$-sequential relations. Sequential functions are characterized
by both a bounded variation and a Lipschitz property~\cite{berstel_transductions_2013}.
In~\cite{daviaud_2016}, we introduced a generalization
of the bounded variation property to characterize relations that can be
expressed using a particular class of cost register automata with exactly $k$ registers, that encompasses
the class of $k$-sequential relations.
Though, to characterize $k$-sequential relations,
we here introduce
a generalization of the Lipschitz property. We actually believe that this class
cannot be characterized by means of a generalization of the bounded variation
property.
% we introduce in this paper
%a generalization of the Lipschitz property, while we used
%in~\cite{daviaud_2016} a generalization of the bounded variation property.
Similarly, the difference between the twinning property of order~$k$ introduced
in~\cite{daviaud_2016} and the branching twinning property of order~$k$
introduced in this paper is subtle: we allow here to consider runs on different input
words, and the property requires the existence of two runs whose outputs are close
on their common input words.

%Following previous works, we consider WA over infinitary finitely generated groups,
%and prove the following:
%let $W$ be a weighted automaton over such a group $\group$ realizing the relation
%$\inter{W}$ between finite words and elements of $\group$, then the following properties are
%equivalent:
%\vspace{-.3cm}
%\begin{enumerate}[$i)$]
%\item $W$ satisfies the branching twinning property of order~$k$,
%\item $\inter{W}$ satisfies the Lipschitz property of order~$k$,
%\item $\inter{W}$ is computed by a union of $k$ sequential WA,
%\item $\inter{W}$ is computed by a CRA over $\group$ with $k$ independent registers.
%\end{enumerate}
%This constitutes a strong equivalence between four properties with different roles:
%$i)$ is a property expressed by means of a pattern on a weighted automaton, that
%we use to derive efficient decision procedures;
%$ii)$ is a machine independent characterization;
%$iii)$ corresponds to the class that we want to characterize, and thus
%allows us to minimise the degree of sequentiality;
%$iv)$ corresponds to an equivalent presentation of this class by means
%of CRA.

We now discuss  the proof of Theorem~\ref{theorem:main}
whose structure is depicted in the picture on the right.
In~\cite{Choffrut77}, as well as in~\cite{daviaud_2016},
%the proof is divided in three
%parts: first, the equivalence between the property of the function and
%the pattern property (Proposition~\ref{}). This proof is rather easy.
%Second, the implication from a sequential machine to the property of
%the function (Propostion~\ref{}). This proof follows from the pigeonhole principle.
%Last,
the difficult part is the construction, given
a machine satisfying the pattern property, of an equivalent deterministic
machine.
%, \emph{i.e.} the implication
%$ii) \Rightarrow iii)$ in Theorem~\ref{thm:Choffrut}.
Here again,
%As we show, $k$-sequential WA have an equivalent model in terms of CRA,
%and we actually build, starting from a WA satisfying the $\BTP{k}$, an equivalent
%CRA with $k$ independent registers. This constitutes
the most intricate proof of our work is that of Proposition~\ref{prop:BTP-kseq}:
%the implication
%$\ref{th-btp}) \Rightarrow \ref{th-cra})$ of Theorem~\ref{theorem:main}:
the construction, given a WA satisfying the $\BTP{k}$, of an equivalent
$k$-sequential weighted automaton.
It is worth noting that it is not a simple extension of~\cite{Choffrut77} and~\cite{daviaud_2016}.
\begin{wrapfigure}{r}{0.4\textwidth}
 \centering
%\vspace{-.6cm}
\ifRR
\includegraphics[width=5cm]{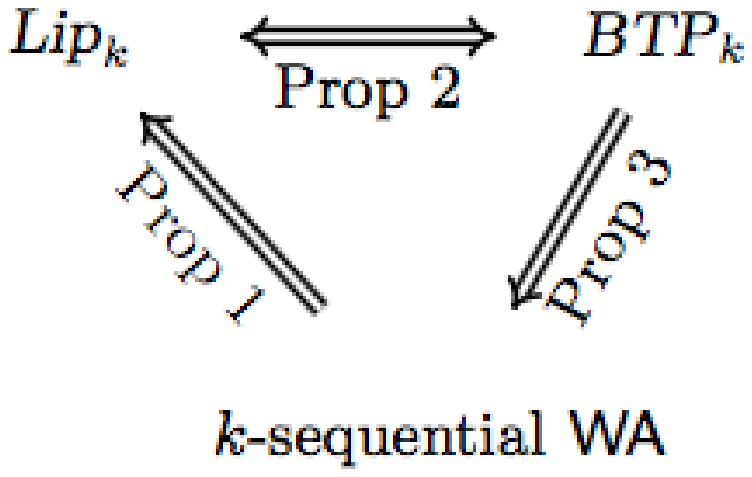}
\else
 % !TEX root = ../main.tex
%
\begin{tikzpicture}[
  node distance=1.3cm,
  characterization/.style={font=\small,align=center,minimum width=1.7cm,minimum height=.75cm},
  tall/.style={minimum height=1.3cm},
  logic/.style={line width=.8pt,double equal sign distance},
  equivalence/.style={font=\small,Implies-Implies,logic},
  implication/.style={font=\small,-Implies,logic},
]

  \node[characterization] (BTP) {$\BTP{k}$};
  \node[characterization] (LIP) [left=of BTP] {$\lip{k}$};
  % \node[characterization,tall] (CRA) [below=of BTP] {\CRA{} with\\ $k$ independent\\ registers};
  \node[characterization,tall] (WA)  [below=of LIP, xshift=1.87cm, yshift=.3cm] {$k$-sequential \textsf{WA}};

  \path (BTP) edge [equivalence] node [below,sloped]        {Prop~\ref{prop:BTP-lip-equivalence}}   (LIP);
  % \path (CRA) edge [equivalence] node [above,sloped]        {Prop~\ref{prop:kseq-crak-equivalence}}   (WA);

  \path (BTP) edge [implication] node [below,sloped]        {Prop~\ref{prop:BTP-kseq}}   (WA);
  \path (WA)  edge [implication] node [below,sloped]        {Prop~\ref{prop:kseq-lip}}   (LIP);

\end{tikzpicture}
 \vspace{-1.2cm}
 \fi
\end{wrapfigure}
Our proof proceeds by induction on $k$, and the result of~\cite{Choffrut77} constitutes
the base case while the tricky part resides in the induction step.
Compared with~\cite{daviaud_2016}, the construction of~\cite{daviaud_2016} stores pairwise
delays between runs,
and picks a minimal subset of "witness" runs that allows to express every other run.
In~\cite{daviaud_2016}, the choice of these witnesses may evolve along an execution
while in order to define a $k$-sequential WA,
the way we choose the representative runs should be consistent during the execution.
The technical part of our construction is thus the identification of a partition of size at most $k$ of
the different runs of the non-deterministic WA such that each element of this partition
defines a sequential function.
 This relies on the branching
structure of the twinning property we introduce in this paper.

%We actually first prove the equivalence between $i)$ and $ii)$,
%thus showing that the twinning property is machine independent, and then use
%this result together with automata constructions to prove the equivalence between $i)$ and $iii)$.
%The equivalence between $iii)$ and $iv)$ can be proved directly.
%Note that our constructions are effective: given a WA satisfying the
%\BTP{k}, we effectively build an equivalent CRA with $k$ independent registers %(Proposition~\ref{prop:BTP-kseq}).
%This transformation constitutes the most intricate proof of this paper, and extends in a nice way the
%determinization construction known for transducers (see for instance~\cite{BealC02}).
% Intuitively, a state consists of
%finitely many sets of pairs $(p,\alpha)$ where $p$ is a state and $\alpha$ is a delay, and each such set is associated
%with a non-empty subset of the $k$ registers. The proof consists then in showing that
%delays $\alpha$ have bounded size.
%

Our result can also be rephrased in terms of cost register automata~\cite{DBLP:conf/lics/AlurDDRY13}.
These
are deterministic automata equipped with registers that aim to store along
the run values from a given semiring $S$.
The restriction of this model to updates of the form $X:=X\alpha$ (we say that registers are \emph{independent})
exactly coincides (if we allow $k$ registers) with the class of $k$-sequential relations.
Hence, our result also allows to solve the register
minimization problem for this class of CRA.

Beyond weighted automata over infinitary groups, we also prove that
our results apply to transducers from $A^*$ to $B^*$.
%In order to apply the above results,
%and as the set of words equipped with concatenation
%is not a group, we consider the free group over $B$. We prove that if a CRA over the free group
%only produces as output words in $B^*$, then there exists an equivalent CRA over $B^*$.
%The above results yield: a finite-valued transducer $T$ realizing a relation $R$ satisfies the
%twinning property of order~$k$ iff
%$R$ satisfies the $k$-bounded variation property iff $R$ can be expressed by
%a  $\CRAmonostar{\cdot c}(B^*)$ with at most $k$ registers.

Regarding decidability, we show that if the group $\group$
is commutative and has a computable internal operation, then
checking whether the \BTP{k} is satisfied is decidable.
As a particular instance of our decision procedure, we
obtain that this can be decided in \PSPACE for $\group=(\mathbb{Z},+,0)$,
and show that the problem is \PSPACE-hard.
%\fbox{to be completed}
%As a corollary, we obtain that the problem of register
%minimisation is PSPACE-complete for $\CRAmonostar{+c}(\mathbb{Z})$
%hence generalizing the result of~\cite{DBLP:conf/icalp/AlurR13}
%for ACRA.
Last, we prove that checking the \BTP{k}
for finite-state transducers
is also \PSPACE-complete.
%and
%\fbox{to be completed}
%as a corollary the problem of register minimisation
%is PSPACE-complete for $\CRAmonostar{\cdot c}(B^*)$.
%This result entails that the register complexity of any finite-valued
%rational relations, given as a cost register automata with updates of the
%form $x:=yu$ with $x,y$ some registers and $u$ a finite word on $B$,
%can be computed.

\paragraph*{Organization of the paper.}
We start with definitions in Section~\ref{sec:preliminaries}. In
Section~\ref{sec:tp}, we introduce our original Lipschitz and branching twinning properties.
We present our main construction in %state our main result in
Section~\ref{sec:result}.
Section~\ref{sec:cra} is devoted to the presentation
of our results about cost register automata,
while transducers are dealt with
%In order to simplify
%the presentation, Sections~\ref{sec:tp} and~\ref{sec:result} are
%presented in the setting of finitely generated groups, and we discuss
%in Section~\ref{sec:nonfg} how to lift our results to arbitrary
%groups.
in Section~\ref{sec:transducers}. Last
we present our decidability results and their application to the
computation of the degree of sequentiality in Section~\ref{sec:decision}.
Omitted proofs can be found in the Appendix.

\section{Definitions and examples}
\label{sec:preliminaries}
% !TEX root = main.tex

\paragraph*{Prerequisites and notation.}
We denote by $\alphabet$ a finite alphabet, by $\alphabet^*$ the set of finite
words on $\alphabet$, by $\varepsilon$ the empty word and by $|w|$ the length of a word $w$.
For a set $S$, we denote by $|S|$ the cardinality of $S$.

A \intro{monoid} $\monoid=(M,\otimes,\unit)$ is a set $M$ equipped
with an associative binary operation $\otimes$ with $\unit$ as neutral
element; the product $\alpha\otimes \beta$ in $M$ may be simply
denoted by $\alpha \beta$.
If every element of a monoid possesses an
inverse - for all $\alpha\in M$, there exists $\beta$ such that
$\alpha\beta=\beta\alpha=\unit$ (such a $\beta$ is unique and is
denoted by $\alpha^{-1}$) - then $M$ is called a group. The monoid (\textit{resp.} group)
is said to be \intro{commutative} when $\otimes$ is commutative.
Given a finite alphabet $\alphab$, we denote by $\free$ the free group
generated by $\alphab$.

A \intro{semiring} $\semi$ is a set $S$ equipped with two binary
operations $\oplus$ (sum) and $\otimes$ (product) such that
$(S,\oplus,\zero)$ is a commutative monoid with neutral element $\zero$,
$(S,\otimes,\unit)$ is a monoid with neutral element $\unit$, $\zero$ is
absorbing for $\otimes$ (\textit{i.e.}
$\alpha\otimes \zero=\zero\otimes \alpha=\zero$) and $\otimes$
distributes over $\oplus$ (\textit{i.e.} $\alpha\otimes
(\beta \oplus \gamma) = (\alpha\otimes \beta) \oplus
(\alpha\otimes \gamma)$ and $(\alpha \oplus \beta)\otimes \gamma =
(\alpha \otimes \gamma)\oplus (\beta \otimes \gamma))$.

Given a set $S$, the set of the finite subsets of $S$ is denoted by
$\partie{S}$.  For a monoid $\monoid$, the set $\partie{M}$ equipped
with the two operations $\cup$ (union of two sets) and the set
extension of $\otimes$ is a semiring denoted $\semipart(\monoid)$.

From now on, we may identify algebraic structures (monoid, group,
semiring) with the set they are defined on when the operations are
clear from the context.

\paragraph*{Delay and infinitary group.}
There exists a classical notion of \intro{distance} on words (\ie on
the free monoid) measuring their difference: $\textsl{dist}$ is
defined for any two words $u,v$ as $\delaymot{u}{v}=|u|+|v|-2*
|\lcp(u,v)|$ where $\lcp(u,v)$ is the longest common prefix of $u$ and
$v$.

When considering a group $\group$ and $\alpha,\ \beta \in \group$,
we define the \intro{delay} between $\alpha$ and $\beta$ as
$\alpha^{-1}\beta$, denoted by $\delay{\alpha}{\beta}$.

\begin{lemma}
\label{lemma:properties-delay}
Given a group $\group$, for all $\alpha,\alpha',\beta,\beta',\gamma,\gamma' \in \group$,
\vspace{-.15cm}
\begin{enumerate}
\item $\delay{\alpha}{\beta}=\unit$ if and only if $\alpha=\beta$,
\item if $\delay{\alpha}{\alpha'}=\delay{\beta}{\beta'}$ then $\delay{\alpha\gamma}{\alpha'\gamma'}= \delay{\beta\gamma}{\beta'\gamma'}$.
\end{enumerate}
\end{lemma}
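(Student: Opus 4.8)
The plan is to derive both statements directly from the definition $\delay{\alpha}{\beta} = \alpha^{-1}\beta$, using only the group axioms; neither item presents a genuine obstacle, so the whole argument is a short computation. The single point that needs a little attention, in the second item, is to keep track of the order of factors, since $\group$ is not assumed commutative.

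For the first item, I would simply observe that $\delay{\alpha}{\beta} = \alpha^{-1}\beta = \unit$ holds if and only if $\beta = \alpha$, because left multiplication by $\alpha$ is a bijection of $\group$ (equivalently, cancellation holds in any group).

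For the second item, I would expand the delay of a product using $(\alpha\gamma)^{-1} = \gamma^{-1}\alpha^{-1}$:
\[
  \delay{\alpha\gamma}{\alpha'\gamma'} \;=\; (\alpha\gamma)^{-1}(\alpha'\gamma') \;=\; \gamma^{-1}\,(\alpha^{-1}\alpha')\,\gamma' \;=\; \gamma^{-1}\,\delay{\alpha}{\alpha'}\,\gamma',
\]
and likewise $\delay{\beta\gamma}{\beta'\gamma'} = \gamma^{-1}\,\delay{\beta}{\beta'}\,\gamma'$. The hypothesis $\delay{\alpha}{\alpha'} = \delay{\beta}{\beta'}$ then allows substituting one middle factor for the other, which immediately gives $\delay{\alpha\gamma}{\alpha'\gamma'} = \delay{\beta\gamma}{\beta'\gamma'}$. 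The only thing to be careful about is that the conjugating factors $\gamma^{-1}$ on the left and $\gamma'$ on the right remain on their respective sides; no commutativity is used, so the statement holds for arbitrary groups. (Note also that item~2 specializes, with $\alpha=\beta$ and $\alpha'=\beta'$, to the familiar fact that appending a common suffix does not change the delay of a pair of runs, which is the form in which it will be used later.)
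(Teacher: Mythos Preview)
Your proof is correct; both items follow directly from the definition $\delay{\alpha}{\beta}=\alpha^{-1}\beta$ and the group axioms exactly as you compute, and your care with the order of factors in item~2 is appropriate. The paper does not give a proof of this lemma at all, treating it as an immediate consequence of the definition, so your argument is precisely the intended (and essentially unique) verification.
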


For a finitely generated group $\group$, with a fixed finite set of
generators $\gene$, one can define a distance between two elements
derived from the Cayley graph of $(\group,\gene)$. We consider here an
undirected right Cayley graph : given $\alpha \in \group$,
$\beta \in \gene$, there is a (non-oriented) edge between $\alpha$ and
$\alpha \beta$.
Given $\alpha,\ \beta \in \group$, the \intro{Cayley
distance} between $\alpha$ and $\beta$ is the length of the shortest
path linking $\alpha$ and $\beta$ in the undirected right Cayley graph
of $(\group, \gene)$. It is denoted by $\cayley{\alpha}{\beta}$.

For any  $\alpha \in \group$, we define the \intro{size} of $\alpha$ (with respect to the set of generators $\gene$)
as the natural number $\cayley{\unit}{\alpha}$. It is denoted by $|\alpha|$. Note that for a word $u$,
considered as an element of $\mathcal{F}(A)$,
the size of $u$ is exactly the length of $u$ (that is why we use the same notation).

\begin{lemma}
Given a finitely generated group $\group$ and a finite set of
generators $\gene$, for all $\alpha, \beta \in \group$,
$\cayley{\alpha}{\beta} = |\delay{\alpha}{\beta}|$.
\end{lemma}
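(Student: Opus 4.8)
The plan is to prove this as the left-invariance of the Cayley metric, by unwinding the definition of a path in the undirected right Cayley graph of $(\group,\gene)$. First I would note that, because the graph is undirected, an edge incident to a vertex $x$ can be crossed in either direction, and crossing it amounts to multiplying $x$ on the right by some element of $\gene \cup \gene^{-1}$ (right multiplication by $s \in \gene$ to go from $x$ to $xs$, right multiplication by $s^{-1}$ to go back). Consequently, a path of length $n$ from $\alpha$ to $\beta$ in this graph is exactly a sequence $s_1, \dots, s_n \in \gene \cup \gene^{-1}$ such that $\beta = \alpha\, s_1 \cdots s_n$, i.e.\ such that $\alpha^{-1}\beta = s_1 \cdots s_n$. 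Hence $\cayley{\alpha}{\beta}$ is the least $n$ for which $\alpha^{-1}\beta$ can be written as a product of $n$ elements of $\gene \cup \gene^{-1}$.

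This characterization depends on $\alpha$ and $\beta$ only through the element $\alpha^{-1}\beta$. Instantiating it with $\unit$ in place of $\alpha$ and $\alpha^{-1}\beta$ in place of $\beta$ shows that this least $n$ is also equal to $\cayley{\unit}{\alpha^{-1}\beta}$. By definition of the size and of the delay, $\cayley{\unit}{\alpha^{-1}\beta} = |\alpha^{-1}\beta| = |\delay{\alpha}{\beta}|$, which gives the claimed equality. (Equivalently, one can observe that left translation $x \mapsto \alpha^{-1}x$ is an automorphism of the undirected right Cayley graph, since it sends the edge $\{x,xs\}$ to the edge $\{\alpha^{-1}x, \alpha^{-1}xs\}$; being a graph automorphism it is an isometry, and it maps $\alpha$ to $\unit$ and $\beta$ to $\alpha^{-1}\beta$.)

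Finally, since $\gene$ is a generating set of $\group$, every element admits a finite factorization over $\gene \cup \gene^{-1}$, so the Cayley graph is connected and both sides of the equality are genuine natural numbers. There is no real obstacle in this proof; the only point deserving a line of care is the passage between "path in the undirected graph" and "factorization over $\gene \cup \gene^{-1}$", which is precisely where the undirectedness of the graph (allowing steps by $s^{-1}$) is used.
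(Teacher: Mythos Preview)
Your argument is correct: you unwind the definition to see that a length-$n$ path from $\alpha$ to $\beta$ in the undirected right Cayley graph is exactly a factorization $\alpha^{-1}\beta = s_1 \cdots s_n$ with each $s_i \in \gene \cup \gene^{-1}$, and this characterization depends only on $\alpha^{-1}\beta$, giving $\cayley{\alpha}{\beta} = \cayley{\unit}{\alpha^{-1}\beta} = |\delay{\alpha}{\beta}|$. The alternative phrasing via the left-translation automorphism $x \mapsto \alpha^{-1}x$ is equally valid.

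As for comparison: the paper states this lemma without proof, treating it as a standard fact about Cayley graphs (left-invariance of the word metric). Your write-up is therefore strictly more detailed than what the paper provides, and there is nothing to contrast it against.
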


A group $\group$ is said to be \intro{infinitary} if for all
$\alpha,\beta,\gamma \in \group$ such that
$\alpha\beta\gamma \neq \beta $, the set
$\{\alpha^n \beta \gamma^n \mid n\in \N \}$ is infinite.
Classical examples of infinite groups such as $(\mathbb{Z},+,0)$,
$(\mathbb{Q},\times,1)$ and the free group generated by a finite
alphabet are all infinitary. See~\cite{Manu-LMCS} for other examples.

\paragraph*{Weighted automata.}
Given a semiring $\semi$, weighted automata (WA) are non-deterministic
finite automata in which transitions have for weights elements of
$\semi$. Weighted automata compute functions from the set of words to $\semi$: the
weight of a run is the product of the weights of the transitions along
the run and the weight of a word $w$ is the sum of the weights of the
accepting runs labeled by $w$.

We will consider, for some monoid $\monoid$, weighted automata over
the semiring $\semipart(\monoid)$. In our settings, instead of considering the
semantics of these automata in terms of functions from $\alphabet^*$ to
$\semipart(\monoid)$, we will consider it in terms of relations over
$\alphabet^*$ and $\monoid$.
More precisely, a weighted automaton
(with initial and final relations),
is formally defined as follows:

\begin{definition}
Let $\alphabet$ be a finite alphabet, a \intro{weighted automaton} $W$
over some monoid $\monoid$
is a tuple $(Q, t_{\textsl{init}},
t_{\textsl{final}}, T)$ where $Q$ is a finite set of states,
$t_{\textsl{init}} \subseteq Q \times \monoid$ (resp. $t_{\textsl{final}} \subseteq Q \times \monoid$)
is the finite initial (resp. final) relation,
% is the finite output relation and
$T \subseteq Q \times \alphabet \times \monoid \times Q$ is the finite
set of transitions.
\end{definition}

A state $q$ is said to be \intro{initial} (\resp \intro{final})
  if there is $\alpha \in \monoid$ such that
    $(q,\alpha) \in t_{\textsl{init}}$ (\resp $(q,\alpha) \in t_{\textsl{final}}$),
  depicted as $\xrightarrow{\alpha} q$ (resp. $q \xrightarrow{\alpha}$).
A run $\rho$ from a state $q_1$ to a state $q_k$ on a word $w = w_1 \dotsm w_k \in \alphabet^*$
  where for all $i$, $w_i \in \alphabet$, is a sequence of transitions:
  $(q_1,w_1,\alpha_1,q_2),(q_2,w_2,\alpha_2,q_3),\ldots,(q_k,w_k,\alpha_k,q_{k+1})$.
The \intro{output} of such a run is the element of $\monoid$, $\alpha = \alpha_1\alpha_2 \dotsm \alpha_k$.
We depict this situation as $q_1 \xrightarrow{ w\mid\alpha} q_{k+1}$.
The run $\rho$ is said to be \intro{accepting} if $q_1$ is initial and $q_{k+1}$ final.
This automaton $W$ computes a relation $\inter{W} \subseteq \alphabet^* \times \monoid$
defined by the set of pairs $(w,\alpha \beta \gamma)$ such that there are $p,q \in Q$
with $\xrightarrow{\alpha} p \xrightarrow{w \mid \beta} q \xrightarrow{\gamma}$.

%A state $q$ is \intro{accessible} (\resp \intro{co-accessible})
%  if there exists a run from an initial state to $q$ (\resp a run from $q$ to a final state).
An automaton is \intro{trimmed} if each of its states appears in some accepting run.
W.l.o.g., we assume that the automata we consider are trimmed.

Given a weighted automaton $W=(Q, t_{\textsl{init}},
t_{\textsl{final}}, T)$ over some finitely generated group $\group$
with finite set of generators $\Gamma$, we define the constant $\base$
with respect to $\Gamma$ as
$\base = \max\{|\alpha| \mid (p,a,\alpha,q)\in T \text{ or }
(q,\alpha) \in t_{\textsl{init}} \cup t_{\textsl{final}}\}$.

For any positive integer $\ell$, a relation $R \subseteq X \times Y$ is said to be \intro{$\ell$-valued}
  if, for all $x \in X$, the set $\{y \mid (x,y) \in R\}$ contains at most $\ell$ elements.
  It is said to be \intro{finitely valued} if it is $\ell$-valued for some $\ell$.
A weighted automaton $W$ is said to be \intro{$\ell$-valued} (resp. \intro{finite-valued})
  if it computes a $\ell$-valued (resp. finite-valued) relation.

The union of two weighted automata $W_i = (Q_i, t_{init}^i, t_{final}^i, T_i)$, for $i \in \{1,2\}$,
with disjoint states $Q_1 \cap Q_2 = \emptyset$ is the automaton $W_1 \cup W_2 =
(Q_1 \cup Q_2, t_{init}^1 \cup  t_{init}^2, t_{final}^1 \cup  t_{final}^2, T_1 \cup  T_2)$.
States can always be renamed to ensure disjointness. It is trivial to
verify that $\inter{W_1\cup W_2}=\inter{W_1}\cup\inter{W_2}$.
This operation can be generalized
to the union of $k$ weighted automata.

\begin{definition}
A weighted automaton $(Q, t_{\textsl{init}},
t_{\textsl{final}}, T)$ over $\monoid$ is said to be
\intro{sequential} if $|t_{\textsl{init}}| = 1$ and if for all $p\in Q$, $a\in \alphabet$ there is at most one transition in $T$ of the form $(p,a,\alpha,q)$.
It is said to be \intro{$k$-sequential} if it is a union of $k$ sequential automata.
It is said to be \intro{multi-sequential} if it is $k$-sequential for some $k$.
A relation is said to be \intro{$k$-sequential} (resp. \intro{multi-sequential}) \notePA
  if it can be computed by a $k$-sequential (resp. multi-sequential) automaton.
The \emph{degree of sequentiality} of the relation is the minimal $k$ such that it is $k$-sequential.
\end{definition}

%It should be noticed that this definition of sequentiality does not coincide with the usual definition
%used for weighted automata over $\semipart(\monoid)$. In both cases, the automaton is deterministic but in this last case
%a set of outputs is allowed on each transition, contrary to our setting where the output of a given run
%is an element of $\monoid$. \notePA
%Note also

Observe that,
unlike the standard definition of sequential weighted automata over $\monoid$
  (see for instance~\cite{Manu-LMCS}),
we allow finite sets of weights to be associated with final states,
  and not only singletons.
This seems more appropriate to us regarding the parallel evaluation model
for multi-sequential weighted automata:
we prefer to merge threads that only differ by their final outputs.
%
%For such a weighted automaton,
% in our setting,
If we define $OutMax = \max_{q \in Q} |\{ (q,\alpha) \in t_{\textsl{final}} \}|$,
  then the standard definition of sequential machines requires $OutMax = 1$.
%  and, with this standard definition, $k$-sequentiality implies $k$-valuedness.
%However, if $OutMax \geq 1$,
Being $k$-sequential implies being $(k \cdot OutMax)$-valued.
Hence, multi-sequential weighted automata are included in finite-valued ones.
However, multi-sequential weighted automata are strictly less expressive than finite-valued ones.

Allowing a final output relation obviously has an impact on the sequentiality degree.
%
%but we could handle this standard setting as well.
%Our definition allows us to simplify the presentation of our results.
%using a simple adaptation of our results.
%
We believe that it is possible to fit the usual setting
by appropriately reformulating our characterizations.
However, this cannot be directly deduced from our current results.

\begin{example}
\label{example:weighted}
Let us consider $\alphabet =\{a,b\}$ and $(\monoid,\otimes,\unit)=(\mathbb{Z},+,0)$.
The weighted automaton $W_0$ given in Figure~\ref{example-wa}
  computes the function $\flast$ that associates with a word $wa$ (resp. $wb$)
  its number of occurrences of the letter $a$ (resp. $b$),
  and associates $0$ with the empty word.
  It is easy to verify that the degree of sequentiality of $\flast$ is $2$.
  %-sequential but not sequential.
  \notePA
It is also standard that the function $\flast^*$ mapping
  the word $u_1\#\ldots\#u_n$ (for any $n$)
  to $\flast(u_1) + \cdots + \flast(u_n)$
  is not multi-sequential (see for instance~\cite{JeckerF15})
  whereas it is single-valued.
%We can see that $\flast$ is at best $2$-sequential.
%We define $\flast^2(u\#v)=\flast(u)+\flast(v)$
%  (where $\#$ is a fresh symbol)
%  which is at best $4$-sequential.
%Finally, we define $\flast^*$ that associates
%  a word $u_1\#\ldots\#u_n$ for any $n$
%  to $f(u_1) + \cdots + f(u_n)$
%  which isn't even multi-sequential.
\end{example}

\ifRR
  \begin{figure}[htb!]
    \centering
    % \vspace{0pt}
    \scalebox{.95}{% !TEX root = ../main.tex
%
\begin{tikzpicture}[
  ->,
  >=stealth',
  shorten >=1pt,
  node distance=1cm,
  scale=0.8,
  every state/.style={minimum width=0.7cm,draw=blue!50,very thick,fill=blue!20, scale=0.8},
  every node/.style={font=\scriptsize},
  initial/.style={
    initial by relation,
    initial where=above,
    initial distance=.7cm,
    initial text=#1,
    initial text anchor=base west,
  },
  accepting/.style={
    accepting by relation,
    accepting where=below,
    accepting distance=.7cm,
    accepting text=#1,
    accepting text anchor=west,
  },
]
  \node[state, initial=$0$] (a) [] {$q_a$};
  \node[state, initial=$0$, accepting=$0$] (f) [right=of a] {$q_f$};
  \node[state, initial=$0$] (b) [right=of f] {$q_b$};

  \path[->]     (a) edge [loop left] node         {$b: 0$}   ();
  \path[->]     (a) edge [loop below] node         {$a: 1$}   ();
  \path[->]     (a) edge []  node [above, sloped] {$a: 1$}   (f);

  \path[->]     (b) edge [loop right] node         {$a: 0$}   ();
  \path[->]     (b) edge [loop below] node         {$b: 1$}   ();
  \path[->]     (b) edge []  node [above, sloped] {$b: 1$}   (f);
\end{tikzpicture}}
    \caption{Example of a weighted automaton $W_0$ computing the function $\flast$.
    \label{example-wa}}
  \end{figure}
\else
  \begin{figure}[htb!]
    \centering
    \subfigure[]{
        \scalebox{.9}{% !TEX root = ../main.tex
%
\begin{tikzpicture}[
  ->,
  >=stealth',
  shorten >=1pt,
  node distance=1cm,
  scale=0.8,
  every state/.style={minimum width=0.7cm,draw=blue!50,very thick,fill=blue!20, scale=0.8},
  every node/.style={font=\scriptsize},
  initial/.style={
    initial by relation,
    initial where=above,
    initial distance=.7cm,
    initial text=#1,
    initial text anchor=base west,
  },
  accepting/.style={
    accepting by relation,
    accepting where=below,
    accepting distance=.7cm,
    accepting text=#1,
    accepting text anchor=west,
  },
]
  \node[state, initial=$0$] (a) [] {$q_a$};
  \node[state, initial=$0$, accepting=$0$] (f) [right=of a] {$q_f$};
  \node[state, initial=$0$] (b) [right=of f] {$q_b$};

  \path[->]     (a) edge [loop left] node         {$b: 0$}   ();
  \path[->]     (a) edge [loop below] node         {$a: 1$}   ();
  \path[->]     (a) edge []  node [above, sloped] {$a: 1$}   (f);

  \path[->]     (b) edge [loop right] node         {$a: 0$}   ();
  \path[->]     (b) edge [loop below] node         {$b: 1$}   ();
  \path[->]     (b) edge []  node [above, sloped] {$b: 1$}   (f);
\end{tikzpicture}}
        \label{example-wa}
    }
    \subfigure[]{
      \raisebox{2.1mm}{
        \scalebox{.9}{% !TEX root = ../main.tex
%
\begin{tikzpicture}[
  ->,
  >=stealth',
  shorten >=1pt,
  node distance=1.2cm,
  scale=0.8,
  every state/.style={minimum width=0.7cm,draw=blue!50,very thick,fill=blue!20, scale=0.8},
  every node/.style={font=\scriptsize},
  initial/.style={
    initial by relation,
    initial where=above,
    initial distance=.7cm,
    initial text=,
  },
  accepting/.style={
    accepting by relation,
    accepting where=below,
    accepting distance=.7cm,
    accepting text=#1,
  },
]
  \node[state, initial, accepting=$\{X_a\}$, accepting text anchor=east] (a) {$q_a$};
  \node[state, accepting=$\{X_b\}$, accepting text anchor=west] (b) [right=of a] {$q_b$};

  \path[->]     (a) edge [loop left] node         {$a: X_a\inc$}   ();
  \path[->]     (b) edge [loop right] node         {$b: X_b\inc$}   ();

  \path[bend right, ->]    (b) edge node [above] {$a: X_a\inc$}   (a);
  \path[bend right, ->]    (a) edge node [below] { $b: X_b\inc$}   (b);
\end{tikzpicture}}
      }
      \label{example-cra}
    }
    \caption{
      \subref{example-wa} Example of a weighted automaton $W_0$ computing the function $\flast$. \protect\linebreak
      \subref{example-cra} Example of a cost register automaton $C_0$ computing the function $\flast$.
      The updates are abbreviated: $X_a\inc$ means both $X_a:=X_a+1$ and $X_b:=X_b$ (and conversely).
    }
  \end{figure}
\fi

\section{Lipschitz and branching twinning properties}
\label{sec:tp}

% !TEX root = main.tex

\newcommand{\lipschitzCitation}{}

Sequential transducers have been characterized in~\cite{Choffrut77} by Choffrut by means
of a so-called bounded-variation property and a twinning property.
The bounded-variation property is actually equivalent to a Lipschitz-like property
(see for instance~\cite{berstel_transductions_2013}).
We provide adaptations of the Lipschitz and twinning properties
so as to characterize $k$-sequential WA.

%In this section, we present a branching twinning property and a Lipschitz property
%that generalize the original notions (of twinning property and Lipschitz property)
%introduced by Choffrut in~\cite{\lipschitzCitation}.

We consider a finitely generated infinitary group $\group$
and we fix a finite set of generators $\Gamma$.

\subsection{Lipschitz property of order~$k$}

Given a partial mapping $f : A^* \rightharpoonup B^*$,
  the Lipschitz property states that there exists $\Kc \in \N$
  such that for all $w,w' \in A^*$ such that $f(w),f(w')$ are defined, \notePA
  we have $\dist(f(w),f(w')) \leq \Kc \, \dist(w,w')$ (see \cite{berstel_transductions_2013}).
Intuitively, this property states that, for two words, their images by $f$ differ proportionally to those words.
This corresponds to the intuition that the function can be expressed by means of a sequential automaton.
%Note that we add a constant $1$ in the right-hand side to deal with output words produced on final states.

When lifting this property to functions that can be expressed using a $k$-sequential automaton,
  we consider $k+1$ input words
  and require that two of those must have proportionally close images by $f$.
The extension to relations $R \subseteq A^* \times B^*$ requires that
  for all $k+1$ pairs chosen in $R$,
  two of those have their range components proportionally close to their domain components. \notePA
  In addition, for relations, an input word may have more than one output word,
  we thus need to add a constant $1$ in the right-hand side.
Finally, our framework is that of infinitary finitely generated groups.
Instead of $\dist(,)$, we use the Cayley distance $\cayley{}{}$
  to compare elements in the range of the relation.

%The original Lipschitz property is defined on functions and is thus a machine independent property:
%  whenever two weighted automata are equivalent, either both or none of them satisfy this property.
%This observation still holds for our Lipschitz property of order~$k$ which is defined only in terms of relations.

\begin{definition}
  A relation $R \subseteq \alphabet^* \times \group$
   satisfies the \intro{Lipschitz property of order~$k$}
  if  there is a natural $\Kc$ such that for all pairs $(w_0,\alpha_0), \ldots,
  (w_k,\alpha_k) \in R$,
  there are two indices $i, j$ such that $0\leq i < j \leq k$ and
  $\cayley{\alpha_i}{\alpha_j} \leq \Kc \, (\dist(w_i,w_j) + 1)$.
\end{definition}

\begin{example}
  The group $(\mathbb{Z},+,0)$ is finitely generated with $\{1\}$ as a set of generators. \notePA
  The function $\flast$ %computed by the weighted automaton of Example~\ref{example:weighted}
    does not satisfy the Lipschitz property of order $1$ (take $w_1=a^Na$ and $w_2=a^Nb$), but it
 % Indeed, let $\Kc \in \N$
 %   and set $w_0 = a^{3\Kc+2}$, $w_1 = a^{3\Kc+1}b$, $\alpha_0 = 3\Kc + 2$ and $\alpha_1 = 1$.
 % Then, $\cayley{\alpha_0}{\alpha_1} = 3\Kc + 1 > \Kc (\dist(w_0,w_1) + 1) = 3\Kc$.
 % One can however prove that this relation
  satisfies the Lipschitz property of order $2$.
\end{example}

Using the pigeon hole principle, it is easy to prove the implication from
$\ref{th-wa})$ to $\ref{th-lip})$ of Theorem~\ref{theorem:main}:
\begin{proposition}\label{prop:kseq-lip}
  A $k$-sequential relation satisfies the Lipschitz property of order~$k$.
\end{proposition}

\subsection{Branching twinning property of order~$k$}

The idea behind the branching twinning property of order~$k$ is
  to consider $k+1$ runs labeled by arbitrary words
  with $k$ cycles.
If the branching twinning property is satisfied
  then there are two runs among these $k+1$
  such that the values remain close
  (i.e. the Cayley distance between these values is bounded)
  along the prefix part of these two runs that read the same input.
%
%Note that the branching twinning property of order~$k$
%  is stronger than the twinning property of order~$k$ (\TP{k} for short)
%  previously defined in~\cite{daviaud_2016}. \notePA
%Indeed, it requires the same conclusion as the \TP{k} but only for runs
%labeled by the same input words.
%  and thus that the $k$ loops are synchronized.
%  Then two of those $k+1$ runs have values that remain close all along.
%In particular, satisfaction of the branching twinning property of order~$k$
%  implies that of the \TP{k} of \cite{daviaud_2016}.
%
%
%Unlike the \TP{k}, the branching twinning property considers runs \notePA
%on different input words.
%Rather, we expect that there exist two runs
%  that have values that only remain close
%  along the prefix part of those two runs that read the same input.
This property is named after the intuition
  that the $k+1$ runs can be organized in a tree structure
  where the prefixes of any two runs are on the same branch
  up to the point where those two runs do not read the same input anymore.

\begin{definition}
  \label{def:tp} A weighted automaton over $\group$ satisfies
  the \intro{branching twinning property of order~$k$} (denoted by \BTP{k}) if: (see Figure~\ref{figure:BTPk})
\begin{itemize}
\item for all states $\{q_{i,j} \mid i,j \in \{0,\dots, k\}\}$ with $q_{0,j}$ initial for all $j$,
\item for all $\gamma_j$ such that $(q_{0,j}, \gamma_j) \in t_{\textsl{init}}$ with $j \in \{0,\dots, k\}$,
\item for all words $u_{i,j}$ and $v_{i,j}$ with $1\leq i \leq k$ and $0\leq j \leq k$ such that there are $k+1$ runs satisfying for all $0 \leq j \leq k$, for all $1 \leq i \leq k$, $q_{i-1,j} \xrightarrow{u_{i,j} \mid \alpha_{i,j}} q_{i,j}$ and $q_{i,j} \xrightarrow{v_{i,j} \mid \beta_{i,j}} q_{i,j}$,
\end{itemize}
there are $j \neq j'$ such that for all $i \in \{1,\ldots, k\}$, if for every $1\leq i' \leq i$,
we have $u_{i',j}=u_{i',j'}$ and $v_{i',j}=v_{i',j'}$, then we have
\begin{align*}
\delay{\gamma_j \alpha_{1,j} \dotsm \alpha_{i,j}}{\gamma_{j'} \alpha_{1,j'} \dotsm \alpha_{i,j'}} = \delay{\gamma_j \alpha_{1,j} \dotsm \alpha_{i,j} \beta_{i,j}}{\gamma_{j'} \alpha_{1,j'}  \dotsm \alpha_{i,j'} \beta_{i,j'}}.
\end{align*}
\end{definition}

\begin{example}\label{example-btp}
  The weighted automaton $W_0$, given in Figure~\ref{example-wa},
  %given in Example~\ref{example:weighted},
  %  that computes the function $\flast$,
    does not satisfy the $\BTP{1}$ (considering loops around $q_a$ and $q_b$). \notePA
%  Indeed, consider $q_{0,0} = q_{1,0} = q_a$, $q_{0,1} = q_{1,1} = q_b$,
%    $u_{1,0} = u_{1,1} = \varepsilon$ and $v_{1,0} = v_{1,1} = a$.
%  Then, $\delay{0}{0} = 0 \neq \delay{1}{0} = 1$.
  One can prove however that it satisfies the $\BTP{2}$.

Let us denote by $W_1$ the weighted automaton obtained by concatenating
$W_0$ with itself, with a fresh $\#$ separator letter. $W_1$
realizes the function $\flast^2$ defined as
 $\flast^2(u\#v)=\flast(u)+\flast(v)$. We can see that the minimal $k$ such that $W_1$ satisfies
 the $\BTP{k}$ is $k=4$. As we will see, this is the sequentiality degree of $\flast^2$.
%
% One can observe that
% %    in order to compute the function $\flast^2$,
% %    we can see that neither $\BTP{2}$ nor $\BTP{3}$ are satisfied whereas $\BTP{4}$ is.
%   the $\TP{2}$ from \cite{daviaud_2016} is however satisfied. Indeed, there exists
%   a cost register automata with only 2 registers realizing $\flast^2$, but the two registers are not independent
%   (see Appendix Section~\ref{sec:-app-exple-BTP} for details).
% %
%   %
% %  Also, when this weighted automaton is (Kleene-)starred (with a fresh $\#$ separator letter),
% %    in order to compute the function $\flast^*$,
% %    we can see that there is no $k$ such that $\BTP{k}$ is satisfied.
\end{example}

\begin{figure}
  \centering
  \def \loopCount {k}
  % !TEX root = ../main.tex
\scalebox{.8}{
\begin{tikzpicture}[
  ->,
  >=stealth',
  shorten >=1pt,
  auto,
  node distance=2.8cm,
  every state/.style={minimum width=1cm,draw=blue!50,very thick,fill=blue!20},
  every node/.style={font=\small,scale=0.8},
]
  \tikzstyle{every edge}=[draw=black,font=\small]

  \foreach \j/\y in {0/0,1/-1.8,k/-4.1}
  {
    \node[initial=$\weight{\gamma_\j}$,state] (A0) at (0,\y) {$q_{0,\j}$};

    \foreach \i/\x in {1/2.5,2/5,\loopCount/9} {
      \node[state] (A\i)  at (\x,\y) {$q_{\i,\j}$};

      \path (A\i) edge [loop above] node {\trans{v_{\i,\j}}{\beta_{\i,\j}}} (A\i);
    }

    \foreach \im/\i in {0/1,1/2} {
      \path (A\im) edge node {\trans{u_{\i,\j}}{\alpha_{\i,\j}}} (A\i);
    }
    \foreach \im/\i in {2/\loopCount} {
      \path (A\im) edge [dashed] node {} (A\i);
    }
  }

  \node[rotate=90] (textleft) at (-2,-2) {$k+1$ runs};
\end{tikzpicture}
}
  \caption{Branching twinning property of order~$k$}
  \label{figure:BTPk}
\end{figure}
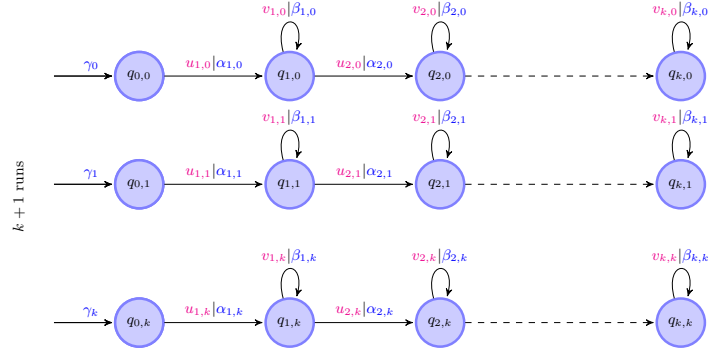

\subsection{Equivalence of Lipschitz and branching twinning properties}

% !TEX root = main.tex

We can prove that a weighted automaton satisfies the $\BTP{k}$ if and only if its semantics satisfies the Lipschitz property of order~$k$. This implies that the branching twinning property of order~$k$ is a machine independent property, \emph{i.e.} given two WA $W_1,W_2$
such that $\inter{W_1}=\inter{W_2}$, $W_1$ satisfies the $\BTP{k}$
iff $W_2$ satisfies the $\BTP{k}$.

\begin{proposition}\label{prop:BTP-lip-equivalence}
A weighted automaton $\poids$ over an infinitary finitely generated group $\group$ satisfies $\BTP{k}$ if and only if $\inter{W}$ satisfies the Lipschitz property of order~$k$.
\end{proposition}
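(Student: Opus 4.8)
The plan is to prove the two implications separately, both by contraposition, using the infinitary hypothesis as the crucial ingredient that converts "arbitrarily large delay" into "arbitrarily long words". Throughout, I will use the constant $\base$ associated with $\poids$ (the maximal size of a weight on a transition, initial or final edge), so that any run on a word $w$ has output of size at most $\base\cdot(|w|+1)$, and more generally two runs $\rho,\rho'$ reading a common prefix $w$ have $\cayley{\mathrm{out}(\rho)}{\mathrm{out}(\rho')} \leq 2\base(|w|+1)$; this is the "Lipschitz for free on common prefixes" bound that makes only the behaviour \emph{on} cycles relevant.

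\textbf{From $\BTP{k}$ to $\lip{k}$.} Suppose $\inter{W}$ does not satisfy the Lipschitz property of order~$k$. Then for every $\Kc$ there are $k+1$ pairs $(w_0,\alpha_0),\dots,(w_k,\alpha_k)\in\inter{W}$ such that for all $i<j$, $\cayley{\alpha_i}{\alpha_j} > \Kc(\dist(w_i,w_j)+1)$. Fix accepting runs $\rho_j$ witnessing each pair. For each pair of runs, organize the $k+1$ runs into a branching/tree structure: reading from the initial states, two runs "stay together" as long as they read the same input letters and diverge afterwards. Since there are $k+1$ runs, along any root-to-leaf path one can extract, by a pigeonhole/Ramsey-style argument on state tuples, at most $k$ "synchronized cycles" — positions where a tuple of states repeats while the relevant runs read a common input factor. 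Taking $\Kc$ large enough (larger than any bound depending only on $|Q|$ and $\base$ coming from the acyclic parts), the large-delay assumption forces at least one such synchronized cycle to strictly change the delay between two of the runs, i.e. it yields states $q_{i,j},q_{i,j'}$ and words $u,v$ contradicting the equality required by $\BTP{k}$. The bookkeeping of the tree structure — matching the indices $q_{i,j}$ and the condition "$u_{i',j}=u_{i',j'}$ and $v_{i',j}=v_{i',j'}$ for all $i'\leq i$" of Definition~\ref{def:tp} with the branch structure of the $k+1$ chosen runs — is the delicate combinatorial point here.

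\textbf{From $\lip{k}$ to $\BTP{k}$.} Suppose $\poids$ does not satisfy $\BTP{k}$. Then there are states $q_{i,j}$, initial weights $\gamma_j$, and words $u_{i,j},v_{i,j}$ realizing the pattern of Figure~\ref{figure:BTPk} such that for every pair $j\neq j'$ there is an index $i$ at which the runs have read a common prefix ($u_{i',j}=u_{i',j'}$, $v_{i',j}=v_{i',j'}$ for $i'\leq i$) but the $i$-th cycle changes the delay:
\begin{align*}
\delay{\gamma_j\alpha_{1,j}\dotsm\alpha_{i,j}}{\gamma_{j'}\alpha_{1,j'}\dotsm\alpha_{i,j'}} \neq \delay{\gamma_j\alpha_{1,j}\dotsm\alpha_{i,j}\beta_{i,j}}{\gamma_{j'}\alpha_{1,j'}\dotsm\alpha_{i,j'}\beta_{i,j'}}.
\end{align*}
Using Lemma~\ref{lemma:properties-delay}(2), pumping that cycle $n$ times makes the delay between the two (common) prefixes run through the set $\{\delta_1\lambda^n\delta_2 \mid n\in\N\}$ for suitable group elements, and the inequality above exactly says $\delta_1\lambda\delta_2\neq\delta_2$-type condition is violated — wait, more precisely it says the "one more turn of the loop" does not preserve the delay, so since $\group$ is \emph{infinitary} this pumped family of delays is infinite, hence unbounded in Cayley distance. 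Now trim and complete the $k+1$ runs to accepting runs on words $w_j$ obtained by pumping all $k$ loops simultaneously (the same power $n$ everywhere): for each pair $j\neq j'$ the contribution of the cycles after the branching point is fixed (bounded independently of $n$ on the non-common part, by $\base$ times bounded lengths), while the delay accumulated up to the branching point grows without bound in $n$; and $\dist(w_j,w_{j'})$ grows only linearly in $n$ with a coefficient bounded by the total size of the branching pattern. Choosing $n$ large enough relative to any prospective $\Kc$ (and then the pattern is finite, so these bounds are uniform), every pair $i<j$ violates $\cayley{\alpha_i}{\alpha_j}\leq \Kc(\dist(w_i,w_j)+1)$, so $\inter{W}$ fails $\lip{k}$.

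\textbf{Main obstacle.} The routine estimates (the $\base$-linear bounds, the pumping via Lemma~\ref{lemma:properties-delay}, the use of infinitariness) are standard adaptations of Choffrut's argument. The genuinely new difficulty is the \emph{branching} bookkeeping in the $\BTP{k}\Rightarrow\lip{k}$ direction: one must show that $k+1$ accepting runs with large pairwise delays can always be arranged so that the "shared prefix then diverge" tree structure of Definition~\ref{def:tp} is realized with at most $k$ synchronized cycles per relevant pair, and that a delay jump must occur on one of those cycles rather than on an acyclic segment. This is where a careful Ramsey-type extraction on tuples of states along a tree of runs (rather than along a single pair of runs as in the classical twinning property) is required, and it is the technical heart of the proof.
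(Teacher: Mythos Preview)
Your plan for the first direction (from $\BTP{k}$ to $\lip{k}$, argued by contraposition) is essentially what the paper does, modulo one imprecision: the pigeonhole extraction on state-tuples does not yield ``at most $k$'' synchronized cycles but rather up to $(k+1)|Q|^{k+1}$ of them, and one must first prove that the $\BTP{k}$ is equivalent to the variant allowing any number $m\geq k$ of loops per run. The paper establishes this equivalence as a separate combinatorial lemma; without it your argument does not produce a violation of $\BTP{k}$ in the precise form of Definition~\ref{def:tp}.

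The genuine gap is in the direction you label as routine. Pumping all $k$ loops with the \emph{same} power $n$ does not work. For a pair $(j,j')$ that shares input only up to loop $i<k$, the suffix after loop $i$ includes loops $i{+}1,\dots,k$ each pumped $n$ times, so both $\dist(w_j,w_{j'})$ and the size of the outputs produced along that suffix grow linearly in $n$. Infinitariness only guarantees that the delay accumulated at loop $i$ is \emph{unbounded} in $n$, not that it grows at any prescribed linear rate; your claim that ``the contribution of the cycles after the branching point is fixed (bounded independently of $n$)'' is therefore false, and the inequality $\cayley{\alpha_j}{\alpha_{j'}} > \Kc(\dist(w_j,w_{j'})+1)$ need not hold for large $\Kc$. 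The paper's remedy is to choose the exponents $t_k,t_{k-1},\dots,t_1$ by induction in \emph{decreasing} order of the loop index: once $t_{i+1},\dots,t_k$ are fixed, the suffix length $L_i$ after loop $i$ is a constant, and infinitariness then lets one pick $t_i$ large enough that the delay at loop $i$ exceeds $2L_i(\base+\Kc)+\Kc$ for every pair whose witnessing loop is $i$. The subsequent choices of $t_{i'}$ for $i'<i$ do not perturb this bound, because for a pair whose \emph{minimal} witnessing loop is $i$ all earlier loops preserve the delay and can be cancelled via Lemma~\ref{lemma:properties-delay}. This decreasing-index induction is the missing idea in your argument.
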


\begin{proof}[Sketch]
Let us sketch the proof of the Proposition. % The complete proof is given in the Appendix, Section \ref{appendix_BTP_lip}.
First, suppose that $\poids$ does not satisfy the $\BTP{k}$. Then consider a witness of this non satisfaction. Fix an integer $\Kc$. By pumping the loops in this witness (enough time and going backward), one can construct $k+1$ words that remain pairwise sufficiently close while their outputs are pairwise at least at distance $\Kc$. This leads to prove that $\inter{\poids}$ does not satisfy the Lipschitz property of order~$k$.

Conversely, consider that the $\BTP{k}$ is satisfied.
For all $k+1$ pairs of words and weights in $\inter{\poids}$,
  we have $k+1$ corresponding runs in $\poids$ labeled by those words.
By exhibiting cycles on these runs, we can get an instance of $\BTP{k}$ as in Figure \ref{figure:BTPk} such that the non-cycling part is bounded (in length).
By $\BTP{k}$, there are two runs that have the same delays before and after the loops appearing in their common prefix. Thus, we can bound the distance between the
two weights produced by those runs
proportionally to the distance between the two input words, proving that the Lipschitz property is satisfied. \qed
\end{proof}
%Those runs can be split to exhibit enough cycles
%to bound the size of the weights produced by the non-cycling parts.
%We proceed by induction on the number of runs
%  to group and strip them according to the longest common prefixes of their inputs.
%The built split of runs is shown to be a correct instance for the $\BTP{k}$.
%Thus, there are two runs that have equal delays around the loops in their prefix that read the same input.
%We can then bound the distance between the two weights produced by those runs
%  proportionally to the distance between the two input words,
%  therefore satisfying the Lipschitz property of order~$k$.

%%%%%%%%%%%

\section{Constructing a $k$-sequential weighted automaton}
\label{sec:result}

As explained in the introduction, the most intricate part in the proof of Theorem~\ref{theorem:main} is to
prove that $\ref{th-btp})$ implies $\ref{th-wa})$. We give a constructive proof of this fact as
stated in the following proposition.

\begin{proposition}\label{prop:BTP-kseq}
Given a weighted automaton $W$ satisfying the $\BTP{k}$, one can effectively build $k$ sequential weighted automata whose union is equivalent to $W$.
\end{proposition}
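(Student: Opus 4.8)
The plan is to proceed by induction on $k$, using Choffrut's theorem (Theorem~\ref{thm:Choffrut}, in its weighted-automata incarnation) as the base case $k=1$: a WA satisfying $\BTP{1}$ is exactly a WA satisfying the classical twinning property, hence equivalent to a sequential one. For the induction step, suppose $W$ satisfies $\BTP{k}$. The key observation is that $\BTP{k}$ imposes, on the ``first layer'' of cycles, a $\BTP{1}$-like constraint: intuitively, among any $k+1$ runs there must be two that are ``twinned'' on their common prefix. We want to exploit this to split the states of $W$ into a part that behaves sequentially and a ``remainder'' on which we can recurse with parameter $k-1$.

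\medskip
\noindent\textbf{Key steps.} First I would set up an \emph{equivalence-of-runs} machinery: for a state $q$ reachable by a run with output $\alpha$, record the delay information relative to a chosen reference run reaching $q$ on the same input. Using the infinitary hypothesis and Lemma~\ref{lemma:properties-delay}, bounded delays are preserved by appending common suffixes, so ``being twinned'' is a congruence-like relation that can be tracked by a finite amount of information (the set of bounded delays, which by the infinitary property must stabilize). Second, I would argue that $\BTP{k}$ forces the following \emph{covering property}: there is a finite set of at most $k$ ``profiles'' such that every accepting run of $W$ is, on every prefix, within bounded delay of some run following one of these profiles; moreover these profiles can be chosen \emph{consistently}, i.e. a run that agrees with a profile on a prefix can be assumed to keep agreeing with a compatible profile. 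This is the combinatorial heart. Third, using this, I would build a deterministic device: simulate all runs via a subset construction enriched with delay information, and at each point partition the currently-alive runs into at most $k$ classes according to their profile; each class, being internally twinned, can be output by a single sequential transducer (again by Choffrut applied to the subautomaton restricted to that class, after a determinization that merges twinned runs). Running the $k$ classes in parallel gives the $k$ sequential WA whose union equals $\inter{W}$. Finally I would verify equivalence: completeness (every output of $W$ appears in some sequential component — because every run sits in some class) and soundness (every output produced by a sequential component is a genuine output of $W$ — because each component only replays outputs of $W$ up to delay, and final outputs absorb the residual delay, which is where allowing a final \emph{relation} rather than a single final output is used).

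\medskip
\noindent\textbf{Main obstacle.} The hard part is the consistency of the choice of representative/profile runs along an execution, as the introduction itself stresses: a run may be twinned with representative $R_1$ on a short prefix and only with $R_2$ later, and naively re-choosing representatives breaks the $k$-sequential structure (a sequential component cannot ``change its mind''). The resolution must come from the \emph{branching} structure of $\BTP{k}$: the property quantifies over runs that may read different inputs and only constrains the common prefixes, which is precisely what lets one organize the at most $k$ classes in a tree so that, once two runs diverge in input, they are allowed to fall into different classes, while as long as they read the same input the $\BTP{k}$ guarantees a consistent twinned representative exists. Formalizing this tree-of-profiles and proving that at most $k$ leaves suffice — by a pigeonhole/Ramsey-type argument on long enough runs exhibiting $k$ cycles, contradicting $\BTP{k}$ if $k+1$ mutually non-twinned profiles were needed — is the technically delicate step, and it is genuinely more than a reduction to \cite{Choffrut77} or \cite{daviaud_2016}.
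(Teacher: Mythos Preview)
Your framework matches the paper's: induction on $k$ with Choffrut as base case, and a subset construction with delays (the paper's $D_W$) to track all runs simultaneously. You also correctly identify the central obstacle --- choosing representatives consistently along an execution. The gap is in how you propose to resolve it.

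You aim for a \emph{covering property}: at most $k$ global profiles, each internally satisfying $\BTP{1}$, so that Choffrut applies to each and you are done. But this is essentially the conclusion restated, and your Ramsey sketch does not obviously deliver it: the relation ``twinned on the common prefix'' is not transitive across runs on different inputs, so ``among any $k+1$ runs two are twinned'' does not yield $k$ equivalence classes. Nor is it clear that one can always peel off a single rank-$1$ piece and recurse with parameter $k-1$, as your opening paragraph suggests. The paper does neither. It runs $D_W$ as long as all recorded delays stay below an explicit bound $N_W$; this region $U$ of states is finite and already sequential. The induction fires only when $D_W$ enters a state $S$ containing some large delay. The key lemma then gives a \emph{binary} split $S=S'\cup S''$ with $\rank{S'}+\rank{S''}\leq k$ (where $\rank{S}$ is the least $k'$ such that $W_S$ satisfies $\BTP{k'}$): along the path reaching $S$ one finds, by pigeonhole on the short non-loop part, a synchronized loop along which the delay between two specific runs changes, and $S',S''$ are defined by whether a run's delay relative to a fixed reference is preserved or changed along that loop. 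The rank inequality is obtained by \emph{combining} non-satisfied $\BTP{r'}$ and $\BTP{r''}$ instances over $W_{S'}$ and $W_{S''}$ --- prepending each run with its path to $S$ together with the distinguishing loop --- into $r'+r''+2$ pairwise-distinguished runs of $W$; since $W$ satisfies $\BTP{k}$, this forces $r'+r''+2\leq k$. The induction hypothesis then applies to $W_{S'}$ and $W_{S''}$ separately (each of rank strictly below $k$), and the $i$-th sequential automaton is assembled by following $D_W$ while inside $U$ and, upon exit through some $S$, switching to the $i$-th recursively obtained component for $S$. So the missing idea is this loop-driven binary split with \emph{additive} ranks, rather than a direct $k$-way partition into rank-$1$ pieces.
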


% The proof is done by induction over $k$.
Let $W = (Q,t_{\textsl{init}},t_{\textsl{final}}, T)$ be a weighted automaton that satisfies the \BTP{k}.
% If $k=1$, the $\BTP{1}$ is equivalent to the $\TP{1}$ introduced in \cite{daviaud_2016}, and to the twinning property introduced in \cite{Choffrut77}, and any of the corresponding results  can be used to prove that $W$ is equivalent to a sequential weighted automaton.
% Let us now suppose that $k>1$, and that the proposition is true for every $k'$ strictly smaller than $k$.
The construction is done in two steps.
First, we build an infinite sequential weighted automaton $D_W$ equivalent to $W$, using the subset construction with delays presented in \cite{BealC02}.
Then, by replacing infinite parts of $D_W$ with finite automata, we build $k$ sequential weighted automata whose union is equivalent to $W$.

Let us sketch the main ideas behind the construction of $D_W$.
The states of $D_W$ are the subsets $S$ of $Q \times \group$.
On input $u \in A^*$, $D_W$ selects an initial run
$
\rho : \xrightarrow{\alpha_0} p_0 \xrightarrow{u|\alpha} p
$
of $W$, outputs the corresponding $\alpha \in \group$, and, in order to keep track of all the runs 
$
\rho' : \xrightarrow{\beta_0} q_0 \xrightarrow{u|\beta} q
$
 of $W$ over the input $u$, stores in its state the corresponding pairs $(q',\delay{\alpha_0 \alpha}{\beta_0 \beta})$.
%Formally, the states of $D_W$ are the subsets of $Q \times \group$, and the transition relation is defined in order for $D_W$ to be sequential, and to satisfy the following properties.
%
%For every initial run $\xrightarrow{\alpha_0} S_0 \xrightarrow{u|\alpha} S$ of $D_W$,
%\begin{description}
%\item[\prop{1}:]
%there exists $q \in Q$ such that $(q,\unit) \in S'$;
%\item[\prop{2}:]
%$S = \{ (q,\beta) | \textup{ there exists a run } \rho: \xrightarrow{\gamma_0} p \xrightarrow{u|\gamma} q \textup{ in } W  \textup{ s.t. } \alpha_0 \alpha \beta = \gamma_0 \gamma \}$.
%\end{description}
The detailed construction, together with the proofs of its properties, adapted from \cite{BealC02} to fit our settings, can be found in the appendix.

If $W$ is a transducer, i.e., a weighted automaton with weights in a free monoid, and $W$ satisfies the $\BTP{1}$, which is equivalent to the twinning property, Lemma 17 of \cite{BealC02} proves that the trim part of $D_W$ is finite.
This lemma can be generalized to any kind of weighted automata, proving our proposition in the particular case $k = 1$.
Let us now prove the general result by induction.
Suppose that $k > 1$, and that the proposition is true for every integer strictly smaller than $k$.
We begin by exposing two properties satisfied by $D_W$.

Since $W$ satisfies the \BTP{k}, it also satisfies the notion of $\TP{k}$ introduced in \cite{daviaud_2016}, and, by Proposition 1 of that paper, it is $\ell$-valued for some integer $\ell$ effectively computable.
Let $N_W = 2M_W|Q|^{\ell |Q|}$, let $S \in Q \times \group$ be a state of the trim part of $D_W$, and let $W_S = (Q,S,t_{\textsl{final}}, T)$ be the weighted automaton obtained by replacing the initial output relation of $W$ with $S$. The following properties are satisfied.
\begin{description}
\item[\prop{1}:]
The size of $S$ is bounded by $\ell |Q|$;
\end{description}
\begin{description}
\item[\prop{2}:]
If there exists a pair $(q, \alpha) \in S$ such that $|\alpha|>N_W$, $\inter{W_S}$ is $k$-sequential.
\end{description}
The proof of \prop{1} follows from the $\ell$-valuedness of $W$.
The main difficulty of the demonstration of Proposition \ref{prop:BTP-kseq} lies in the proof of \prop{2}, which can be sketched as follows.
Using the fact that there exists $(q, \alpha) \in S$ such that $|\alpha|>N_W$, we expose a partition of $S$ into two subsets $S'$ and $S''$ satisfying the \BTP{k'}, respectively the \BTP{k''}, for some $1 \leq k', k'' < k$ such that $k' + k'' \leq k$.
This is proved by using the fact that $W$ satisfies the $\BTP{k}$, and that the branching nature of the \BTP{} allows us to combine unsatisfied instances of the \BTP{} over $W_{S'}$ and $W_{S''}$ to build unsatisfied instances of the \BTP{} over $W$.
Then, since $k' < k$ and $k'' < k$, $\inter{S'}$ is $k'$-sequential and $\inter{S''}$ is $k''$-sequential by the induction hypothesis.
Finally, as $S$ is the union of $S'$ and $S''$, $W_S$ is equivalent to the union of $W_{S'}$ and $W_{S''}$, and \prop{2} follows, since $k' + k'' \leq k$.

The properties \prop{1} and  \prop{2} allow us to expose $k$ sequential weighted automata $\overline{V}_1, \ldots, \overline{V}_k$ whose union is equivalent to $W$. 
Let $U$ denote the set containing the accessible states $S$ of $D_W$ that contain only pairs $(q,\alpha)$ satisfying $|\alpha| \leq N_W$.
%Moreover, let $P'$ be the set of states of $D$ accessible in one step from $P$, i.e.,
%\[
%P'_n = \{ S' | \exists S \in P_n, a \in A, \alpha \in \group \textup{ s.t. } $(S,a,\alpha,S') \in T' \}.
%\]
As there are only finitely many $\alpha \in \group$ such that $|\alpha| \leq N_W$, \prop{1} implies that $U$ is finite.
%Note that this implies the finiteness of $P'$.
Moreover, as a consequence of $\prop{2}$, for every state $S \notin U$ in the trim part of $D_W$, $W_{S}$ can be expressed as the union of $k$ sequential weighted automata $V_i(S)$, with $1 \leq i \leq k$.
For every $1 \leq i \leq k$, let $\overline{V}_i$ be the sequential weighted automaton that copies the behaviour of $D_W$ as long as the latter stays in $U$, and swaps to $V_i(S)$ as soon as $D_W$ enters a state $S \notin U$.
% Formally, $D_i$ is the union of $D_W$ restricted to the states $S \in P$, and the $W_{S'.i}$, $S' \in P' \setminus P$, with the difference that the only initial state is $S_0$, the initial state of $D_W$, to which we add, for every transition $(S,a,\alpha,S')$ where $S \in P$ and $S' \notin P$, a transition $(S,a, \alpha \alpha_{S'.i},q_{S'.i})$, where $(q_{S'.i},\alpha_{S'.i})$ is the initial relation of $D_{S'.i}$.
Then $D_W$ is equivalent to the union of the $\overline{V}_i$, $1 \leq i \leq k$, which proves the desired result, since $D_W$ is equivalent to $W$.
Once again, the detailed proofs can be found in the appendix.

%%%%%%%%%%%

\section{Cost register automata with independent registers}
\label{sec:cra}
% !TEX root = main.tex
%
% \paragraph*{Cost register automata.}

Recently, a new model of machine, named
cost register automata (CRA), has been introduced in~\cite{DBLP:conf/lics/AlurDDRY13}.
We present in this section how the class of $k$-sequential relations is also
characterized by a specific subclass of cost register automata.

A cost register automaton (CRA) \cite{DBLP:conf/lics/AlurDDRY13}
is a deterministic automaton with registers containing values from a
set $S$ and that are updated through the transitions: for
each register, its new value is computed from the old ones and from
elements of $S$ combined using some operations over $S$.
The output value is computed from the values taken by the registers at
the end of the processing of the input. Hence, a CRA defines a
relation in $A^* \times S$.

In this paper, we focus on a particular structure $(\monoid, \otimes
c)$ defined over a monoid $(\monoid, \otimes, \unit)$. In such a structure, the only
updates are unary and are of the form $X:= Y \otimes c$, where
$c \in \monoid$ and $X,Y$ are registers.
When $\monoid$ is $(\mathbb{Z},+,0)$, this class of
automata is called additive cost register
automata \cite{DBLP:conf/icalp/AlurR13}. When $\monoid$ is the free
monoid $(A^*,.,\varepsilon)$, this class is a subclass of streaming
string transducers \cite{DBLP:conf/fsttcs/AlurC10} and turns out to be
equivalent to the class of rational functions on words, \ie those
realized by finite-state transducers.

While cost register automata introduced
in~\cite{DBLP:conf/lics/AlurDDRY13} define functions from $A^*$ to
$\monoid$, we are interested in defining finite-valued relations. To
this aim,
we slightly modify the definition of CRA, allowing
to produce a set of values computed from register contents.
%The new
%class of machines denoted $\CRAmono{\otimes c}(\monoid)$ is
%defined formally as follows:

\begin{definition}
A \intro{cost register automaton} on the alphabet $\alphabet$ over the monoid $(\monoid,\otimes,\unit)$ is a tuple $(Q, q_{\textsl{init}}, \Reg, \delta, \mu)$ where $Q$ is a finite set of states, $q_{\textsl{init}} \in Q$ is the initial state and $\Reg$ is a finite set of registers. The transitions are given by the function
$\delta : Q \times \alphabet \to (Q \times \up(\Reg))$
where $\up(\Reg)$ is the set of functions $\Reg \to \Reg \times \monoid$ that represents the updates on the registers. Finally, $\mu$ is a finite set of $Q \times \Reg \times \monoid$ (the output relation).
\end{definition}

The semantics of such an automaton is as follows: if an update function $f$ labels a transition and $f(Y)=(X,\alpha)$, then the register $Y$ after the transition will take the value $\beta  \alpha$ where $\beta$ is the value contained in the register $X$ before the transition.  More precisely, a valuation $\nu$ is a mapping from $\Reg$ to $\monoid$ and let $\val$ be the set of such valuations.
The initial valuation $\nu_{\textsl{init}}$ is the function associating with each register the value $\unit$. A configuration is an element of $Q \times \val$. The initial configuration is $(q_{\textsl{init}},\nu_{\textsl{init}})$. A run on a word $w = w_1 \dotsm w_k \in \alphabet^*$ where for all $i$, $w_i \in \alphabet$, is a sequence of configurations $(q_1,\nu_1)(q_2,\nu_2)\ldots (q_{k+1},\nu_{k+1})$ satisfying that for all $1\leq i \leq k$, and all registers $Y$, if
$\delta(q_i,w_i)=(q_{i+1},g_i)$ with $g_i(Y)=(X,\alpha)$, then $\nu_{i+1}(Y)= \nu_{i}(X) \alpha$.  Moreover, the run is said to be accepting if $(q_1,\nu_1)$ is the initial configuration and there are $X,\alpha$ such that $(q_{k+1},X,\alpha)\in \mu$.

A cost register automaton $C$ computes a relation $\inter{C} \subseteq \alphabet^* \times \monoid$
defined by the set of pairs $(w,\nu_{k+1}(X)\alpha)$ such that $(q_1,\nu_1)(q_2,\nu_2) \ldots  (q_{k+1},\nu_{k+1})$ is an accepting run of $C$ on $w$ and
$(q_{k+1},X,\alpha)\in \mu$.

\begin{definition}
A cost register automaton is said to be with \intro{independent registers}
if for any update function $f$ which labels a transition, if $f(Y)=(X,\alpha)$ then $X=Y$.
\end{definition}

\begin{example}
\label{example:register}
Consider $\alphabet =\{a,b\}$ and $(\monoid,\otimes,\unit)=(\mathbb{Z},+,0)$. \notePA
The cost register automaton $C_0$ given in Figure~\ref{example-cra} computes the function $\flast$ introduced in Example~\ref{example:weighted}.
The register $X_a$ (\emph{resp.} $X_b$) stores the number of occurrences of the letter $a$ (\emph{resp.} $b$). Observe that these two registers are independent.
%Note that, although $\flast^2$ can also be computed with $2$ registers,
%  if we restrict to using independent registers then it needs $4$.
\end{example}

%\note{Il faut faire une remarque ici sur le fait que ça correspond à copyless}

Independence of registers is tightly related to sequentiality of WA. We prove:
%We prove the following expressiveness result:can easily prove the following proposition that relates $k$-sequentiality with cost register automaton with $k$ independent registers:

\begin{proposition}\label{prop:kseq-crak-equivalence}
  For all positive integers $k$, a relation is $k$-sequential if and only if it is computed by a cost register automaton with $k$ independent registers.
\end{proposition}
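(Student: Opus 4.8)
The plan is to prove both directions of the equivalence between $k$-sequentiality and computability by a CRA with $k$ independent registers, essentially by exhibiting translations in each direction that preserve the semantics.

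\textbf{From $k$-sequential WA to a CRA with $k$ independent registers.}
Suppose $\inter{W}$ is computed by a union $W = W_1 \cup \cdots \cup W_k$ of sequential weighted automata $W_m = (Q_m, t^m_{\textsl{init}}, t^m_{\textsl{final}}, T_m)$ over $\monoid$, with pairwise disjoint state sets. I would build a single CRA $C$ with state set $Q_1 \times \cdots \times Q_k$ (augmented with a sink-like behaviour to account for sequential automata getting stuck on a letter) and register set $\Reg = \{X_1,\dots,X_k\}$, one register dedicated to simulating the output of $W_m$. Since each $W_m$ is sequential, on a given input letter $a$ from the current state $q_m$ there is at most one transition $(q_m,a,\alpha_m,q_m')$; the CRA transition updates only $X_m := X_m \otimes \alpha_m$, which is an independent update, and moves the $m$-th component of the state accordingly. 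The initial valuation is $\unit$ everywhere, so to realize the initial weights $\gamma_m$ with $(q^{0}_m,\gamma_m)\in t^m_{\textsl{init}}$ I would pre-load them: either fix the initial state component to the unique initial state of $W_m$ and compensate in the output relation, or more simply absorb $\gamma_m$ into the output relation $\mu$, which may list $(q, X_m, \gamma_m \otimes \delta)$ for each final weight $\delta$ of $W_m$ at the state in component $m$. Then the accepting runs of $C$ correspond exactly to the tuples of (possibly blocked) runs of the $W_m$'s, and the output set $\inter{C}$ is the union over $m$ of $\inter{W_m} = \inter{W}$. A minor technical point is handling the case where $W_m$ has no transition on some letter: I keep a boolean "alive" flag per component in the finite state, and only components still alive at the end contribute to $\mu$.

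\textbf{From a CRA with $k$ independent registers to a $k$-sequential WA.}
Conversely, let $C = (Q, q_{\textsl{init}}, \Reg, \delta, \mu)$ be a CRA with independent registers $\Reg = \{X_1,\dots,X_k\}$. Because updates are independent, the value of register $X_m$ after reading $w_1\cdots w_n$ is simply $\unit \otimes c_{1,m} \otimes \cdots \otimes c_{n,m}$ where $c_{i,m}$ is the constant appended to $X_m$ on the (deterministic) transition taken at step $i$; in particular, each register evolves entirely on its own, driven only by the deterministic state sequence of $C$. For each register index $m$ I would build a sequential weighted automaton $W_m$ with the same state set $Q$, same initial state $q_{\textsl{init}}$ (this is why $|t_{\textsl{init}}|=1$), the transition $(q, a, c, q')$ whenever $\delta(q,a) = (q', g)$ and $g(X_m) = (X_m, c)$ — which is deterministic since $C$ is — and final relation containing $(q, \alpha)$ for each $(q, X_m, \alpha) \in \mu$. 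Then $\inter{W_m}$ is exactly the set of pairs $(w, \nu(X_m)\otimes\alpha)$ produced by $C$ using output descriptors referring to register $X_m$, and $\inter{C} = \bigcup_{m=1}^k \inter{W_m}$, so $\inter{C}$ is $k$-sequential. If some output descriptors in $\mu$ refer to a register other than those we want to allocate, we simply note $\Reg$ has at most $k$ registers and pad with dummy ones if needed.

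\textbf{Main obstacle.} The conceptual content is light once one observes that independence makes registers non-interacting; the only real care needed is bookkeeping at the two "ends": the initial valuation of a CRA is forced to be $\unit$ (unlike WA, which carry an initial weight relation), so initial weights must be pushed into the output relation or into a fixed initial state; and symmetrically the output relation of a CRA attaches a constant $\alpha$ to a register value, which matches the final-weight relation we allowed for sequential WA. I would also double-check the degenerate cases (a sequential automaton blocking on a letter, an input whose register is never used in $\mu$) so that the claimed equality $\inter{C} = \bigcup_m \inter{W_m}$ holds on the nose, including the empty word. I expect this matching of boundary conventions, rather than any deep argument, to be the trickiest part to state cleanly.
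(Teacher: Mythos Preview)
Your approach is essentially the same as the paper's: build the CRA as a product of the $k$ sequential automata with one register per component, and conversely project a CRA with independent registers onto each register to get $k$ sequential WA. The paper's proof is a two-line sketch saying exactly this, so your elaboration is faithful.

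One small slip to fix: in the first direction you propose to ``absorb $\gamma_m$ into the output relation $\mu$, which may list $(q, X_m, \gamma_m \otimes \delta)$''. In a non-commutative monoid (e.g.\ the free group used for transducers in this very paper) this yields $\nu(X_m)\,\gamma_m\,\delta$ instead of the required $\gamma_m\,\nu(X_m)\,\delta$. The standard repair is not to push the initial weight to the output but to the \emph{first} transitions: introduce a fresh initial state copying the outgoing transitions of $q^0_m$ with weights $\gamma_m\otimes\alpha$ (and add $(q'^0_m, X_m, \gamma_m\delta)$ to $\mu$ for the empty word). With that adjustment your argument goes through; the rest of your bookkeeping (alive flags, projection per register) is correct and matches the paper's intent.
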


CRA are deterministic by definition, and a challenging minimisation problem
is captured by the notion of \emph{register complexity}.
It is defined for a relation as the minimal integer $k$ such that it can be defined
by a CRA with $k$ registers. By Proposition~\ref{prop:kseq-crak-equivalence},
results on the computation of the degree of sequentiality
presented in Section~\ref{sec:decision} thus also allow to compute
the register complexity for CRA with independent registers.

One can also show that the class of CRA with $k$ independent registers
is equivalent to the class of CRA with $k$ registers,
updates of the form $X:=Y\alpha$, and that are copyless
(every register appears at most once in the right-hand side of an update function).

The class of CRA with $k$ non-independent registers
was characterized in~\cite{daviaud_2016}
using the twinning property of order~$k$.
This property is weaker than our branching twinning property of order~$k$
  as it requires the same conclusion
  but only for runs labeled by the same input words.
%  and thus that the $k$ loops are synchronized.
%  Then two of those $k+1$ runs have values that remain close all along.
%In particular, satisfaction of the branching twinning property of order~$k$
%  implies that of the \TP{k} of \cite{daviaud_2016}.

\ifRR
\begin{example}\label{example-btp}
  We have seen that the minimal $k$ such that $W_1$ satisfies the $\BTP{k}$ is $k=4$.
  Thus it can be computed by a cost register automata with $4$ independent registers.
  One can observe that the twinning property of order $2$ from \cite{daviaud_2016} is however satisfied. Indeed, there exists
  a cost register automata with only 2 registers realizing $\flast^2$, but the two registers are not independent
  (see Appendix Section~\ref{sec:-app-exple-BTP} for details).
\end{example}
\fi

\ifRR
  \begin{figure}[htb!]
    \centering
    % \vspace{0pt}
    \scalebox{.95}{% !TEX root = ../main.tex
%
\begin{tikzpicture}[
  ->,
  >=stealth',
  shorten >=1pt,
  node distance=1.2cm,
  scale=0.8,
  every state/.style={minimum width=0.7cm,draw=blue!50,very thick,fill=blue!20, scale=0.8},
  every node/.style={font=\scriptsize},
  initial/.style={
    initial by relation,
    initial where=above,
    initial distance=.7cm,
    initial text=,
  },
  accepting/.style={
    accepting by relation,
    accepting where=below,
    accepting distance=.7cm,
    accepting text=#1,
  },
]
  \node[state, initial, accepting=$\{X_a\}$, accepting text anchor=east] (a) {$q_a$};
  \node[state, accepting=$\{X_b\}$, accepting text anchor=west] (b) [right=of a] {$q_b$};

  \path[->]     (a) edge [loop left] node         {$a: X_a\inc$}   ();
  \path[->]     (b) edge [loop right] node         {$b: X_b\inc$}   ();

  \path[bend right, ->]    (b) edge node [above] {$a: X_a\inc$}   (a);
  \path[bend right, ->]    (a) edge node [below] { $b: X_b\inc$}   (b);
\end{tikzpicture}}
    \caption{Example of a cost register automaton $C_0$ computing the function $\flast$.
    The updates are abbreviated: $X_a\inc$ means both $X_a:=X_a+1$ and $X_b:=X_b$ (and conversely).
    \label{example-cra}}
  \end{figure}
\fi

%%%%%%%%%%%

\section{The case of transducers}
\label{sec:transducers}
% !TEX root =  main.tex

A transducer is defined as a weighted automaton with weights in the monoid
$B^*$. It can thus be seen as a weighted automaton with weights in the free group
$\free$.
We say that a transducer $T$ satisfies the branching twinning property
of order~$k$ if, viewed as a weighted automaton over $\free$, it satisfies the \BTP{k}.
Similarly, a relation $R \subseteq A^*  \times \alphab^*$ is said to satisfy
the Lipschitz property of order~$k$ iff it is the case when viewing $R$ as a relation
in $A^* \times \free$.

A relation $R$ of $A^* \times B^*$ is said to be \intro{positive $k$-sequential} if
it is computed by a $k$-sequential weighted automaton with weights in $B^*$ (weights on the transitions in $B^*$ and initial and final relations in $Q\times B^*$ where $Q$ is its set of states). As for the general case, it is easy to see that a relation is positive $k$-sequential if and only if it is computed by a cost register automaton with $k$ independent registers, with updates of the form $X:=Xc$ where $c \in B^*$ and with an output relation $\mu \subseteq Q\times \Reg \times B^*$.

\begin{theorem}
\label{theorem:main-transducers}
Let $T$ be a transducer from $A^*$ to $B^*$, and
$k$ be a positive integer.
The following assertions are equivalent:
\begin{enumerate}[$i)$]
\item $\inter{T}$ satisfies the Lipschitz property of order~$k$, \label{th-trans-lip}
\item $T$ satisfies the branching twinning property of order~$k$, \label{th-trans-btp}
\item $\inter{T}$ is positive $k$-sequential. \label{th-trans-positive}
\end{enumerate}
\end{theorem}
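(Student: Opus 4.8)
The plan is to route everything through the free-group interpretation of transducers. A transducer from $A^*$ to $B^*$ can be seen as a weighted automaton over $\free$, which is an infinitary finitely generated group (generated by $B$); and for words $x,y \in B^*$ one has $\cayley{x}{y} = \dist(x,y)$. Consequently, $\ref{th-trans-lip}) \Leftrightarrow \ref{th-trans-btp})$ is Proposition~\ref{prop:BTP-lip-equivalence} applied with $\group = \free$, and $\ref{th-trans-positive}) \Rightarrow \ref{th-trans-lip})$ is immediate: a positive $k$-sequential relation, read inside $A^* \times \free$, is computed by a $k$-sequential weighted automaton over $\free$ (the same machine, with weights in $B^* \subseteq \free$), hence satisfies the Lipschitz property of order~$k$ by Proposition~\ref{prop:kseq-lip}.

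It remains to prove $\ref{th-trans-btp}) \Rightarrow \ref{th-trans-positive})$. Applying Theorem~\ref{theorem:main} (in fact Proposition~\ref{prop:BTP-kseq}) to $T$ viewed over $\free$ yields sequential weighted automata $V_1, \dots, V_k$ over $\free$ whose union is equivalent to $T$; write $\inter{T} = R_1 \cup \cdots \cup R_k$ with $R_i = \inter{V_i} \subseteq A^* \times B^*$. Each $R_i$ is a partial function up to its finitely many final outputs. The remaining task is therefore the following positivization statement: \emph{every relation that is computed by a trimmed sequential weighted automaton over $\free$ and is included in $A^* \times B^*$ is positive $1$-sequential}; applying it to each $V_i$ and taking the union of the resulting positive sequential transducers gives the wanted positive $k$-sequential automaton for $\inter{T}$. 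To prove the statement I would use the classical earliest-normalization of sequential transducers, now over $\free$: if $g(u)$ denotes the output accumulated by the unique run on $u$ (reaching a state $q$) and $o(u) \in B^*$ the longest common prefix of all outputs of inputs extending $u$, the new machine outputs $o(u)$ after reading $u$ and stores the pair $(q, o(u)^{-1}g(u))$; since $o(u)$ is a positive word that is a common prefix of all the (positive) outputs still reachable, it factors out of the relevant $\lcp$ computations, so that the incremental transition weights $o(u)^{-1}o(ua)$, the updated residuals $o(ua)^{-1}g(ua)$, and the final outputs $o(u)^{-1}g(u)\gamma$ are all in $B^*$ and depend only on $q$, the residual, the letter read, and (for the last) the chosen final weight $\gamma$.

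The crux, and the main obstacle, is to show that this yields a \emph{finite} machine, i.e.\ that the residual $o(u)^{-1}g(u)$ takes only finitely many values. Writing $g(u) = xy^{-1}$ in reduced form with $x,y \in B^*$, one bounds $|y|$ by the size of any fixed completion of $q$ (since $g(u)$ multiplied by that completion lies in $B^*$); and one shows that $o(u)$ agrees with $x$ up to a bounded discrepancy, by a short case analysis using that \emph{every} completion $c$ of $q$ satisfies $g(u)c \in B^*$ --- this is exactly the place where the hypothesis $R_i \subseteq A^* \times B^*$ (as opposed to merely $\subseteq A^* \times \free$) is essential, since it prevents the ``positive part'' of the accumulated outputs, and hence the residuals, from drifting away from $o(u)$. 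Apart from this, the genuinely hard content --- the construction underlying Proposition~\ref{prop:BTP-kseq}, with its induction on $k$ --- is inherited directly from Theorem~\ref{theorem:main}; equivalently, one may observe that the subset-construction-with-delays of~\cite{BealC02} used there is already the earliest determinization, so that, specialized to $\free$ with a target relation in $A^* \times B^*$, the whole construction can be carried out with all produced weights in $B^*$, the finite replacements being positive by induction.
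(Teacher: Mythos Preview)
Your overall plan coincides with the paper's: view $T$ as a weighted automaton over $\free$, get $\ref{th-trans-lip})\Leftrightarrow\ref{th-trans-btp})$ and $\ref{th-trans-positive})\Rightarrow\ref{th-trans-lip})$ for free from Theorem~\ref{theorem:main}, and for $\ref{th-trans-btp})\Rightarrow\ref{th-trans-positive})$ first produce a $k$-sequential machine over $\free$ and then ``positivize''. The paper phrases this last step through cost register automata (Proposition~\ref{prop:kseq-crak-equivalence}) and delegates the positivization to Proposition~\ref{proposition:positivite}, whose proof uses a different normalization than yours: it writes each stored value as $\alpha_1\alpha_2$ with $\alpha_1\in B^*$ the \emph{shortest} positive prefix leaving a tail $\alpha_2$ of length at most a fixed bound $N$, stores $\alpha_2$ in the state, and then checks by a case analysis that all produced weights lie in $B^*$. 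The point of that choice is that finiteness of the state space is immediate (the tail is explicitly $N$-bounded); the work goes into showing positivity of the outputs.

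Your earliest normalization reverses the difficulty: positivity of the transition and final weights is easy, but the finiteness of the residuals is not, and your sketch does not close this. Two concrete issues. First, the residual $o(u)^{-1}g(u)$ need not lie in $B^*$ (take $g(u)=ab^{-1}$ with completions $ba,bb$: outputs $aa,ab$, $o(u)=a$, residual $b^{-1}$), so that particular claim is wrong --- harmless for the construction, but it is the symptom of the real problem. Second, and this is the gap: your bound on $|y|$ is fine, but ``$o(u)$ agrees with $x$ up to a bounded discrepancy'' does not follow from ``every completion $c$ satisfies $g(u)c\in B^*$'' alone. A long completion $c$ can cancel an arbitrarily long suffix of $x$, so the lcp $o(u)$ may be witnessed far below $x$; nothing in your case analysis rules this out for a \emph{single} $g(u)$. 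One can rescue the argument, but it requires bringing in a \emph{short access run} $u_0$ to $q$ (so $|g(u_0)|$ is bounded) and exploiting that $g(u_0)c\in B^*$ for the very same $c$, which caps how much of $x$ any completion can erase; this is essentially the mechanism behind Lemma~``trans'' in the paper's proof of Proposition~\ref{proposition:positivite}, and it is not the ``short case analysis'' you announce. Your closing remark --- that the subset construction with delays already is the earliest determinization and hence stays in $B^*$ --- is also not justified: in the paper's construction the reference pair is picked by an arbitrary choice function $\sele$, not by an lcp, so the delays produced need not be positive.
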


The assertions $\ref{th-trans-lip})$ and $\ref{th-trans-btp})$ are equivalent by Theorem \ref{theorem:main}. The fact that the assertion $\ref{th-trans-positive})$ implies the assertion $\ref{th-trans-btp})$ is also a consequence of Theorem \ref{theorem:main} and of the fact that
the branching twinning property of order~$k$ is a machine-independent characterization. Finally, it remains to prove that the assertion $\ref{th-trans-btp})$ implies the assertion $\ref{th-trans-positive})$.

By hypothesis, $\inter{T} \subseteq A^* \times B^*$ is computed by a transducer that satisfies the branching twinning property of order~$k$.
Thus, by Theorem \ref{theorem:main}, it is computed by a cost register automaton over $\free$ with $k$ independent registers.
We conclude using the:%The following proposition allows us to conclude:
\begin{proposition}
\label{proposition:positivite}
A relation in $A^* \times B^*$ is computed by a cost register automaton over $\free$ with $k$ independent registers if and only if it is computed by a cost register automaton over $B^*$ with $k$ independent registers.
\end{proposition}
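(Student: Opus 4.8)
One direction is immediate: since $B^*$ embeds as a submonoid of $\free$, a cost register automaton over $B^*$ with $k$ independent registers is already one over $\free$ with $k$ independent registers, and it computes a relation in $A^* \times B^*$. For the converse, let $C$ be a cost register automaton over $\free$ with $k$ independent registers with $\inter C \subseteq A^* \times B^*$. By Proposition~\ref{prop:kseq-crak-equivalence}, $\inter C$ is $k$-sequential over $\free$, so $\inter C = \bigcup_{j=1}^k \inter{S_j}$ for sequential weighted automata $S_j$ over $\free$, and $\inter{S_j} \subseteq \inter C \subseteq A^* \times B^*$ for each $j$. Hence it suffices to show that a trimmed sequential weighted automaton $S$ over $\free$ with $\inter S \subseteq A^* \times B^*$ is equivalent to a sequential weighted automaton over $B^*$: the union of the resulting automata is then a positive $k$-sequential weighted automaton, and, as recalled before Theorem~\ref{theorem:main-transducers}, such a relation is computed by a cost register automaton over $B^*$ with $k$ independent registers.

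To turn $S$ into a sequential weighted automaton over $B^*$, I would run the classical output-pushing construction (the canonical sequential transducer), adapted to the free group. For a word $w$ labelling the unique run of $S$ from its initial state, let $q(w)$ be the state reached, $\gamma_0\beta(w)\in\free$ the output produced so far (with $\gamma_0$ the initial weight), and $p(w)\in B^*$ the longest common prefix of the outputs of $\inter S$ on all completions of $w$, which is well defined since $S$ is trimmed and those outputs lie in $B^*$. Set $d(w)=p(w)^{-1}\gamma_0\beta(w)\in\free$, the not-yet-emitted delay. The new automaton $S'$ has states $(q,d)$ with $q$ a state of $S$ and $d$ ranging over the reachable delays; along each transition it emits the corresponding increment of $p(\cdot)$, updates $d$, and assigns final weights $d\delta$ for $(q,\delta)\in\tfinal$. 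An induction shows $S'$ is sequential and equivalent to $S$; moreover all its weights lie in $B^*$, because the increments of $p(\cdot)$ are prefixes of outputs of $\inter S\subseteq A^*\times B^*$, and $p(w)\,d(w)\,\delta=\gamma_0\beta(w)\delta$ is an actual output of $\inter S$, hence in $B^*$ with prefix $p(w)$, so $d(w)\delta\in B^*$.

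The main obstacle is to prove that only finitely many delays $d(w)$ occur, \ie that $S'$ is finite; over the free monoid this is automatic, but over $\free$ the word $\gamma_0\beta(w)$ may a priori carry an unboundedly long cancellable tail. I would derive the bound from the Lipschitz property of order~$1$: $S$ being $1$-sequential over $\free$, $\inter S$ satisfies that property (Proposition~\ref{prop:kseq-lip}), and since the Cayley distance and the word distance coincide on $B^*$, the property also holds with $\dist$ on the output side; a run producing unboundedly many distinct delays would then, after pumping its cycles, yield two completions of a common input whose outputs are far apart relative to the input distance, contradicting the Lipschitz bound. This is the adaptation to our setting of the bounded-variation argument of~\cite{Choffrut77,BealC02}. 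All the above is effective.
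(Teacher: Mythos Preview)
Your overall strategy—reducing to the case $k=1$ via the equivalence with $k$-sequential weighted automata (Proposition~\ref{prop:kseq-crak-equivalence}), and then turning each sequential WA over $\free$ into one over $B^*$—is a legitimate route, different from the paper's, and the verifications you give for the initial, transition and final weights of $S'$ lying in $B^*$ are correct. However, the crucial finiteness step has a real gap.

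The problem is your argument that only finitely many delays $d(w)=p(w)^{-1}\gamma_0\beta(w)$ occur. Over the free monoid the classical argument works because $\gamma_0\beta(w)\in B^*$ is automatically a prefix of every completion's output, so the delay is bounded by the weight of a short completion. Over $\free$ this fails: $\gamma_0\beta(w)$ need not be a prefix of any output (e.g.\ if $\gamma_0\beta(w)=ab$ and a completion contributes $b^{-1}c$, the output is $ac$). Your appeal to the Lipschitz property does not close this gap: Lipschitz compares two actual outputs, whereas bounding $|d(w)|$ amounts to bounding $|o_1|-|p(w)|$ for a short-completion output $o_1$; the second output $o_2$ witnessing that $p(w)$ cannot be extended may live on an arbitrarily long completion $wx_2$, so the inequality $\dist(o_1,o_2)\le L(\dist(wx_1,wx_2)+1)$ gives no uniform bound. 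The ``pumping'' you invoke does not help either, since removing a loop from $x_2$ changes $o_2$ and there is no reason the shortened output still forces the same~$p(w)$. The references to \cite{Choffrut77,BealC02} concern transducers over $B^*$ and do not transfer.

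The paper supplies exactly the missing ingredient, and does so directly on the CRA without passing through $k$-sequential WA. The key lemma is that along any run of the CRA, every alive register value lies in $B^*(B\cup B^{-1})^{\le N}$ for an explicit $N=|Q|m+s$: since the register is alive, the run can be completed to an accepting one in at most $|Q|$ steps, and the resulting output must be in $B^*$, forcing the current value to be in $B^*$ up to a tail of length $\le N$. The construction then keeps only the $B^*$ prefix in the register and stores the bounded tail in the state; independence of the $k$ registers is preserved automatically because each register is handled separately. Your approach could be repaired by proving this bounded-tail lemma for each sequential piece, but as written the Lipschitz argument does not do the job.
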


%The proof of this Proposition is an adaptation of the proof of Proposition 4 in \cite{daviaud_2016}.% It is given in Appendix, Section \ref{appendix_trans}.

%%%%%%%%%%%

\section{Decidability of \BTP{k} and computation of the sequentiality degree}
\label{sec:decision}
% !TEX root =  main.tex

% decidabilite

In this section, we prove the decidability of the following problem
  under some hypotheses on the group $\group$:\\
\noindent\textbf{The \BTP{k} Problem:}
given a weighted automaton $W$ over some group $\group$
and a number $k$, does $W$ satisfy the \BTP{k}?\\

As a corollary of Theorem~\ref{theorem:main}, this allows to
compute the degree of sequentiality for weighted automata.
We will consider two settings: first weighted automata over
some computable commutative group and second, word transducers.
%decidability of the branching twinning property
%for some algebraic structures, and as a consequence, the decidability
%of the problem of the minimisation of the degree of sequentiality on these structures.
%We consider the two following problems:\\
%\medskip
%
%\medskip
%
%\noindent\textbf{The Sequentiality Degree Minimisation Problem:}
%given  a weighted automaton $W$ over some group $\group$ and a number $k$,
%is $\inter{W}$ $k$-sequential?
%
%Thanks to Theorem~\fbox{?}, these two problems are equivalent.

Our decision procedures non-deterministically guess a counter-example to
the \BTP{k}. First, we show that if there exists such a counter-example
with more than $k$ loops, then there exists one with $k$ loops.
For simplicity, we can assume that the counter-example contains $k(k+1)/2$ loops \emph{i.e.}
exactly one loop per pair $(j,j')$, with $0\leq j<j' \leq k$.
%In addition, it guesses for each such pair the index $i$ of the loop that
%induces distinct delays between runs of indices $j$ and $j'$.
%This guess should satisfy the constraint $i\leq \chi(j,j')$.
%
%
%Then, we prove that when the branching twinning property
%is violated, then there exists a counter-example with a particular
%shape.
This allows the procedure to first guess the "skeleton" of
the counter example, and then check that this skeleton
can be turned into a real counter-example. The skeleton
consists of the vectors of states, and,
for each pair $(j,j')$ of run indices, indicates the index
$\chi(j,j')$ of the last loop such that input words of runs $j$ and $j'$
are equal up to this loop, and the index $\eta(j,j')$ of the loop that induces
a different delay (with $\eta(j,j')\leq \chi(j,j')$).

\paragraph{Case of computable commutative groups.}
We write $W= (Q, t_{\textsl{init}}, t_{\textsl{final}}, T)$
and let $n=|Q|$.
In order to decide the branching twinning property, we will consider the $k+1$-th power of
$W$, denoted $W^{k+1}$, which accepts the set of $k+1$ synchronized runs in $W$.
We write its runs as $\vec{\rho}=(\rho_i)_{0\leq i \leq k}$ and denote by $\alpha_i$
the weight of run $\rho_i$.

\begin{theorem}\label{thm:BTP}
Let $\group=(G,\otimes)$ be a commutative group such that the operation $\otimes$
and the equality check are computable. Then the \BTP{k} problem is decidable.
\end{theorem}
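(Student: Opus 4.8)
The plan is to reduce the \BTP{k} problem to a decidable question about the $(k+1)$-th power automaton $W^{k+1}$, using the structural simplification already announced: a counter-example with more than $k$ loops can be shrunk to one with exactly $k(k+1)/2$ loops, one per pair $(j,j')$ with $0 \le j < j' \le k$. The skeleton data of such a counter-example is finite: the vectors of states visited (including the initial states), the choice of initial weights $\gamma_j$, and, for each pair $(j,j')$, two indices $\chi(j,j')$ and $\eta(j,j') \le \chi(j,j')$ telling us up to which loop the two runs read the same input and which loop witnesses the delay mismatch. Since $W$ has finitely many states and finitely many transitions, there are only finitely many skeletons, so the procedure enumerates them and, for each, must decide whether it can be realized by an actual family of runs.

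The heart of the argument is then: given a skeleton, decide whether there exist input words for the linear segments and the loops making all the required conditions hold. First I would fix, for a given skeleton, the pair $(j,j')$ that is supposed to witness the failure of \BTP{k}, i.e. the one for which the two runs agree on input up to loop $\chi(j,j')$ yet the delay $\delay{\gamma_j \alpha_{1,j} \cdots \alpha_{\eta,j}}{\gamma_{j'} \alpha_{1,j'} \cdots \alpha_{\eta,j'}}$ differs from $\delay{\gamma_j \alpha_{1,j} \cdots \alpha_{\eta,j}\beta_{\eta,j}}{\gamma_{j'} \alpha_{1,j'} \cdots \alpha_{\eta,j'}\beta_{\eta,j'}}$ at loop $\eta = \eta(j,j')$. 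By Lemma~\ref{lemma:properties-delay}, since $\group$ is commutative, the delay between two group elements is a homomorphic image, and the condition "delay unchanged after appending $\beta_{\eta,j}$ on one side and $\beta_{\eta,j'}$ on the other" becomes, after cancellation, simply the equation $\beta_{\eta,j} = \beta_{\eta,j'}$ in $\group$ (here commutativity is essential: it lets the loop outputs be moved to the front). So the skeleton is realizable as a genuine counter-example precisely when we can choose runs along the prescribed state sequences whose loop outputs $\beta_{\eta,j}, \beta_{\eta,j'}$ differ, while respecting the input-equality constraints imposed by the skeleton (which say that certain linear segments and certain loops of runs $j$ and $j'$ carry the same input word, hence — since the automaton's transitions are fixed by the state sequences — also the same output).

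Concretely, I would express this as follows. For each state sequence in the skeleton, the set of possible outputs of a run following it, as the input ranges over all compatible words, is a finitely generated subset of $\group$ of the form $\{ c_0 \, g_1^{n_1} \cdots g_m^{n_m} \mid n_i \in \N \}$ in the commutative case — more precisely an affine image of $\N^m$ under the group operation. The input-equality constraints between runs $j$ and $j'$ force the corresponding exponent vectors to be equal on the shared prefix. What remains is to decide whether the resulting system of group equations, together with the single disequation $\beta_{\eta,j} \ne \beta_{\eta,j'}$, has a solution in natural-number exponents. Because $\otimes$ and equality are computable and $\group$ is commutative, each such equation is effectively a linear Diophantine-style constraint over $\N$ composed with the (computable) group structure; satisfiability of a finite conjunction of such constraints plus one disequation is decidable by a bounded search once one argues a pumping bound on the exponents — and such a bound follows from the infinitary hypothesis combined with the finiteness of the state space, exactly as used to justify that $k$ loops suffice. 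The main obstacle I expect is precisely making this last step rigorous: showing that realizability of a skeleton is decidable requires either a concrete effective procedure for solving the arising group equations over $\N$ (leveraging only computability of $\otimes$ and $=$, not a richer structure theory), or a uniform bound on the witness sizes so that a finite search suffices. I would handle it by first establishing that, if a counter-example exists, one exists in which every loop is iterated a bounded number of times and every linear segment has bounded length, the bounds depending only on $W$ and $k$; then the whole search space is finite and each candidate is checked using the computable operations, giving decidability.
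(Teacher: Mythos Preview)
Your reduction of the delay condition to $\beta_{\eta,j} \ne \beta_{\eta,j'}$ via commutativity is correct and matches the paper. But you stop short of harvesting the full consequence: once the $\alpha$'s cancel completely, the weights on the $u$-segments are \emph{irrelevant} to the constraint --- only the existence of such segments matters, i.e.\ plain reachability between two state vectors in the appropriate product $W^{p}$. Likewise, for each designated loop one only needs to decide whether some cycle around the given state vector in $W^{p}$ has weight on component $j$ different from its weight on component $j'$; commutativity again yields a length bound on a witnessing cycle (any long closed walk decomposes as a concatenation of shorter ones, and either a short piece already has non-trivial weight difference, or removing it preserves the difference). These two subroutines --- product-automaton reachability and a bounded cycle search with incremental computation of $\delay{\alpha_j}{\alpha_{j'}}$ --- are exactly what the paper invokes, and both are decidable using only the computability of $\otimes$ and equality.

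Your route instead sets up a system of group equations plus one disequation parameterized by natural-number exponents and asks for its decidability. This is both more than needed and less than proved: for an arbitrary computable commutative group such existential fragments need not be decidable, and the ``uniform bound on witness sizes'' you defer to at the end is precisely the missing ingredient. The parenthetical claim that input-equality forces output-equality ``since the automaton's transitions are fixed by the state sequences'' is also off: the skeleton fixes only the checkpoint states $q_{i,j}$, not the intermediate transitions, and $W$ is non-deterministic, so two runs on the same input between the same checkpoints can carry different weights. Push the commutativity simplification one step further --- the $\alpha$'s disappear entirely, not merely simplify --- drop the constraint-solving framework, and the argument closes with just reachability and a bounded enumeration.
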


\begin{proof}[Sketch]
It is easy to observe that for commutative groups, the constraint expressed on the delay in
the \BTP{k} boils down to checking that loops have different weights.

The procedure first guesses the skeleton of a counter-example as explained above.
%For simplicity, we can assume that this skeleton contains $k(k+1)/2$ loops \emph{i.e.}
%exactly one loop per pair $(j,j')$, with $0\leq j<j' \leq k$.
%In addition, it guesses for each such pair the index $i$ of the loop that
%induces distinct delays between runs of indices $j$ and $j'$.
%This guess should satisfy the constraint $i\leq \chi(j,j')$.
%
The procedure then non-deterministically verifies that the skeleton can be completed into
a concrete counter-example. To this end, it uses the information stored in this skeleton
about how input words are shared between
runs (indices $\chi(j,j')$) to identify the power $p\leq k+1$ of $W$ in which the run should be identified.
The procedure is based on the two following subroutines:
% ($p \leq k+1$ denotes a positive integer)
\begin{itemize}
\item first, given two vectors of states $v,v'\in Q^{p}$, checking
that there exists a path from $v$ to $v'$ in $W^{p}$ is decidable,
\item second, the following problem is decidable:
given a vector of states $v\in Q^{p}$ and a pair $1\leq j\neq j' \leq p$,
check that there exists
a cycle $\vec{\rho}$ around $v$ in $W^{p}$ such that $\delay{\alpha_j}{\alpha_{j'}}\neq \unit$.
The procedure non-deterministically guesses the cycle in $W^{p}$ (its length
can be bounded by $2n^{p}$) and computes incrementally the value
of $\delay{\alpha_j}{\alpha_{j'}}$. \qed
\end{itemize}
%
%Observe that the procedure uses the information about how input words are shared between
%runs to identify the power of $W$ in which the run should be identified.
\end{proof}

If we consider the group $(\mathbb{Z},+)$, we can verify
that the above procedure runs in \PSPACE if $k$ is given in unary. In addition, using
ideas similar to a lower bound proved in~\cite{DBLP:conf/icalp/AlurR13}, we can reduce
the emptiness of $k$ deterministic finite state automata to the \BTP{k} problem,
 yielding:
\begin{theorem}
Over $(\mathbb{Z},+)$, the \BTP{k} problem is \PSPACE-complete~($k$ given in unary).
\end{theorem}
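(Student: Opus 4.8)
Over $(\mathbb{Z},+)$, the \BTP{k} problem is \PSPACE-complete ($k$ given in unary).

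The plan is to establish the two directions separately. For membership in \PSPACE, I would revisit the nondeterministic procedure sketched for Theorem~\ref{thm:BTP} and bound its space usage when $\group = (\mathbb{Z},+)$ and $k$ is encoded in unary. The procedure first guesses a skeleton: the vectors of states in $Q^{p}$ for the relevant powers $p \leq k+1$, together with the index functions $\chi(j,j')$ and $\eta(j,j')$ for each of the $k(k+1)/2$ pairs. Since $|Q^{p}| = |Q|^{p} \le |Q|^{k+1}$, a single vector of states takes polynomial space to store (it is a tuple of $p \le k+1$ state names, each of logarithmic size), and the index functions are tables of $O(k^2)$ entries each bounded by $k$. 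The two subroutines are: (i) reachability between two vectors of states in $W^{p}$, which is \textsc{NLogspace} in the (exponential-size) product graph, hence \PSPACE\ overall since the product graph has polynomially-many bits per vertex; and (ii) guessing a cycle in $W^{p}$ around a given vector realizing $\delay{\alpha_j}{\alpha_{j'}} \neq \unit$, i.e.\ in the $(\mathbb{Z},+)$ case, realizing $\alpha_j \neq \alpha_{j'}$. For the latter, the cycle length can be bounded by $2n^{p}$ (exponential), so we cannot store the cycle; instead we guess it transition-by-transition, maintaining a step counter up to $2n^p$ (polynomially many bits) and the running value of $\alpha_j - \alpha_{j'} \in \mathbb{Z}$. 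The key point is that along a cycle of length at most $2n^p$ with transition weights bounded by $\base$, the partial sums of $\alpha_j - \alpha_{j'}$ stay within $[-2n^p\base,\, 2n^p\base]$, which is representable in polynomial space. Putting these together, the whole guess-and-check runs in \NPSPACE $=$ \PSPACE\ by Savitch.

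For \PSPACE-hardness, I would reduce from the emptiness of an intersection of $k$ deterministic finite automata (the canonical \PSPACE-complete problem, with $k$ part of the input in unary), using the construction idea cited from~\cite{DBLP:conf/icalp/AlurR13}. Given DFAs $\mathcal{A}_1,\dots,\mathcal{A}_k$ over a common alphabet, I would build a weighted automaton $W$ over $(\mathbb{Z},+)$ whose nondeterminism picks one of the $\mathcal{A}_i$ at the outset and then simulates it, arranging the weights so that $W$ fails the \BTP{k} precisely when some word is accepted by all $k$ automata. The rough scheme: on reading an input word, the $i$-th branch of $W$ tracks the run of $\mathcal{A}_i$, produces a distinctive output that encodes "$\mathcal{A}_i$ accepted" (e.g.\ via a marker letter appended only from accepting states), and the \BTP{k} counterexample — which needs $k+1$ runs that pairwise diverge in output on their common prefix — can be assembled only if all $k$ branches simultaneously reach accepting behaviour on a shared word, i.e.\ only if $\bigcap_i L(\mathcal{A}_i) \neq \emptyset$. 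One has to take care that the loops required by the \BTP{k} pattern are genuinely present: this is arranged by padding each $\mathcal{A}_i$ so that accepting configurations carry self-loops with nonzero weight, which provides the "different delay" loop needed in the counterexample skeleton.

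I expect the main obstacle to be the hardness direction rather than membership: one must engineer $W$ so that a \BTP{k} counterexample exists \emph{if and only if} the DFA intersection is nonempty, which requires the "only if" direction too — i.e.\ arguing that if fewer than $k$ of the automata can be made to accept a common word, then every family of $k+1$ synchronized runs with $k$ loops admits the required close pair, so \BTP{k} holds. This forces careful bookkeeping of how the $k+1$ runs are distributed among the $k$ branches (two runs must share a branch by pigeonhole) and a verification that any two runs in the same branch have bounded delay along their common prefix because that branch is deterministic. The membership direction, by contrast, is essentially an exercise in confirming that every quantity the \BTP{k}-procedure of Theorem~\ref{thm:BTP} manipulates — state vectors, index tables, a step counter bounded by $2n^{k+1}$, and a $\mathbb{Z}$-valued delay bounded by $2n^{k+1}\base$ — fits in space polynomial in $|W|$ and $k$ (in unary), followed by an invocation of Savitch's theorem. \qed
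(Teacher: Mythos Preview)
Your membership argument is correct and matches the paper's: the nondeterministic procedure of Theorem~\ref{thm:BTP} manipulates only state-vectors of length at most $k+1$, index tables of size $O(k^2)$, a step counter bounded by $2n^{k+1}$, and a running integer delay bounded by $2n^{k+1}\base$, all of which fit in polynomial space when $k$ is in unary; Savitch then closes the deal.

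The hardness direction, however, has a genuine off-by-one gap. You build $W$ as a disjoint union of $k$ deterministic branches and ask that $W$ fail the $\BTP{k}$. But a union of $k$ sequential automata is $k$-sequential, and by Theorem~\ref{theorem:main} any $k$-sequential automaton satisfies the $\BTP{k}$ unconditionally. Your own pigeonhole observation (``two of the $k{+}1$ runs must share a branch, and two runs in the same deterministic branch coincide on their common prefix'') is exactly the reason: it shows the $\BTP{k}$ holds \emph{regardless} of whether the intersection is empty, so the reduction collapses. The paper avoids this by shifting the index: from $k$ DFAs it builds $W$ and shows that $W$ fails the $\BTP{k-1}$ iff the intersection is nonempty. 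Concretely, each branch $W_i$ accepts $u\#\$^m$ with weight $i\cdot m$ whenever $u\in\dom(D_i)$; a common $u$ then yields $k$ runs on $u\#$ whose $\$$-loops carry pairwise distinct weights $1,\ldots,k$, violating $\BTP{k-1}$. (Your ``marker letter'' idea does not type-check over $(\mathbb{Z},+)$, where outputs are integers, not words.)

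Note also that after fixing the parameter, your pigeonhole argument no longer settles the converse: with $k$ runs distributed over $k$ branches there is no forced collision. The paper handles this direction by building, when the intersection is empty, a CRA with $k-1$ independent registers (using that for every $u$ at most $k-1$ of the $D_i$ accept $u$) and then invoking Theorem~\ref{theorem:main} to conclude $\BTP{k-1}$. A direct combinatorial argument is also possible---one shows that any putative $\BTP{k-1}$ counterexample forces all $k$ runs to share the same prefix $u\#$ with $u$ in every $\dom(D_i)$---but it relies on the concrete $\#\$^m$ structure, not on pigeonhole alone.
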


\paragraph{Case of transducers.}
For word transducers, the authors of~\cite{DBLP:journals/iandc/WeberK95}
prove that a counter-example to the (classical) twinning property
is either such that loops have output words of different length,
or such that output words produced on the runs leading to the loops
have a mismatch.

Inspired by this result, we show that the skeleton described above
can be enriched with the information, for each pair of run indices $(j,j')$,
whether one should look for a loop whose output words have distinct lengths,
or for a mismatch on the paths leading to the loop.
These different properties can all be checked in \PSPACE, yielding:
\begin{theorem}
Over $(B^*,\cdot)$, the \BTP{k} problem is \PSPACE-complete ($k$ is given in unary)\footnote{The transducer is viewed as a weighted automaton over $\free$.}.
\end{theorem}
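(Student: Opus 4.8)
The plan is to prove the two directions separately: membership in \PSPACE{} by a guess-and-check procedure built on the skeleton idea already used for computable commutative groups, and \PSPACE-hardness by reusing the reduction behind the $(\mathbb{Z},+)$ lower bound. For the upper bound I would start exactly as in the commutative case: by the preliminary normalisation it suffices to look for a counter-example to the \BTP{k} using one loop per pair $(j,j')$ of the $k+1$ run indices, so that a counter-example is described by a \emph{skeleton} consisting of the vectors of states, the indices $\chi(j,j')$ up to which runs $j$ and $j'$ read the same input, and the indices $\eta(j,j') \le \chi(j,j')$ of the loops meant to break the delays. The only genuinely new ingredient is what ``breaking the delay'' means for outputs living in a free group: here I would invoke the dichotomy of~\cite{DBLP:journals/iandc/WeberK95}, namely that for a transducer a loop breaks the delay between two runs either because the two copies of the loop (on run $j$ and on run $j'$) produce output words of different lengths, or because the outputs produced along the common prefix of the two runs up to that loop are not prefix-comparable, \ie contain a mismatch. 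Accordingly I would enrich the skeleton with one flag per pair $(j,j')$ recording which of the two cases is used.

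Next I would argue that verifying a guessed skeleton is in \PSPACE. The indices $\chi(j,j')$ induce, stage by stage, a refining family of partitions of $\{0,\dots,k\}$ — the branching structure — and within one block of size $m \le k+1$ all runs read the same input, so checking that the block can be driven from one tuple of states to another by a common input reduces to reachability in the $m$-th power of $W$, which needs only nondeterministic space $O(k\log|Q|)$. For each pair $(j,j')$ the requested breaking loop at stage $\eta(j,j')$, where $j$ and $j'$ still lie in the same block, is realised by guessing that loop incrementally while walking all $k+1$ runs: in the length case one maintains the difference of the lengths of the two produced output words — a quantity of polynomially many bits, since loop lengths may be bounded by $2|Q|^{m}$ — and accepts when it is nonzero; in the mismatch case one additionally pre-guesses a position $p$ (in binary, hence polynomially many bits even when $p$ is exponentially large) and two distinct letters, maintains counters of how many output letters runs $j$ and $j'$ have produced so far, and accepts if the $p$-th output letters of the two runs turn out to be the two guessed letters. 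All data maintained along the way has polynomial bit-size, so the verification runs in nondeterministic polynomial space, and \PSPACE{} membership follows by Savitch's theorem.

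For the lower bound I would follow the $(\mathbb{Z},+)$ argument: it reduces the emptiness of the intersection of $k$ deterministic finite automata to the \BTP{k} problem, and the construction can be realised with output words over a single letter; since the free group over a one-letter alphabet is isomorphic to $(\mathbb{Z},+)$, the resulting transducer satisfies the \BTP{k} exactly when the original automaton does, so the problem is \PSPACE-hard, and together with the upper bound \PSPACE-complete.

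The step I expect to be the main obstacle is making the mismatch case rigorous: I must justify that the Weber--Klemm dichotomy, originally stated for the ordinary twinning property (two runs, a single loop), propagates to the branching setting with $k+1$ runs and nested loops — in particular that whenever the delay between runs $j$ and $j'$ is broken along their common prefix, it is broken either by a single loop with a length discrepancy or by a mismatch that already occurs on the path leading to one of the $\eta(j,j')$ loops — and that such a mismatch can always be pinned to a position of polynomial bit-size, so that the guess-and-count procedure above is both correct and space-bounded. Once this structural claim is in place, the rest is routine bookkeeping of counters and state vectors of polynomial size.
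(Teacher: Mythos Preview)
Your plan matches the paper's approach at the high level: guess a skeleton enriched with a length/mismatch flag per pair $(j,j')$, verify each flag in \PSPACE, and for hardness reuse the $(\mathbb{Z},+)$ reduction with unary outputs. The lower-bound argument is correct as stated.

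The substantive divergence is in the mismatch check, and this is exactly where the obstacle you flagged bites. You propose to guess an absolute mismatch position $p$ in binary and compare the $p$-th output letters. The paper does \emph{not} do this. Instead it proves a ``Nice shape'' lemma asserting that whenever case~b) applies (loop outputs of equal nonzero length, mismatch on the path), the two runs $q_{0,j}\xrightarrow{u_1\cdots u_i}q_{i,j}$ and $q_{0,j'}\xrightarrow{u_1\cdots u_i}q_{i,j'}$ can moreover be taken to be $M_T\,n^{k+1}$-\emph{close}: the difference between the lengths of the two output words never exceeds $M_T\,n^{k+1}$ along the run. The closeness is obtained by a pigeonhole: if the offset ever exceeds $M_T\,n^{k+1}$, the common input already contains a synchronised loop (on all $k+1$ components) where the outputs on $j$ and $j'$ have different lengths, and one falls back to case~a). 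Given closeness, the paper's mismatch check stores only the bounded offset, nondeterministically records the letter currently produced by the run that is ahead, continues until the other run catches up to that position, and compares --- no absolute position is ever guessed or stored.

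Your guess-$p$ variant is not wrong in spirit, but to make it sound you must bound $p$ by something with polynomially many bits, and for that you need to bound the total output length along the path --- which in turn requires controlling the path length \emph{after} the loop unfoldings used to manufacture the mismatch. You cannot simply shorten the $u_\ell$'s by removing extra cycles, since that changes the outputs $\alpha_{\ell,j}$ and may destroy the very mismatch you are after. The paper sidesteps this entirely via closeness, which bounds only the \emph{relative} offset and therefore needs no bound on the absolute path or output length. So the concrete missing ingredient in your plan is precisely the closeness clause of the Nice shape lemma; once you have it, the offset-tracking check is both simpler and immediately space-bounded.
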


\section{Conclusion}
\label{sec:conclusion}
% !TEX root =  main.tex

% conclusion

Multi-sequential machines are an interesting compromise between sequential
and finite-valued ones. This yields the natural problem of the minimization
of the size of the union. In this paper, we have solved this problem
for weighted automata over an infinitary finitely generated group,
a setting that encompasses standard groups.
To this end, we have introduced a new twinning property, as well as a new
Lipschitz property, and have provided an original construction from
weighted automata to $k$-sequential weighted automata,
extending the standard determinization of transducers in an intricate way.
In addition, the characterization by means of a twinning property
allows to derive efficient decision procedures, and all our results are also valid for
word transducers.

As a complement, these results can be generalized to non finitely generated
groups, using ideas similar to those developed in~\cite{daviaud_2016}.
As future work, we plan to lift these results to other settings, like
infinite or nested words. Another challenging research direction consists in considering
other operations to aggregate weights of runs.
%we will look for other minimization problems that can be handled
%using this approach.
% A challenging perspective consists in solving such
%minimization results

%Minimizing the size of the union of a multi sequential relation is a natural problem.

%Not so easy to solve this problem a priori.

%Nice to see taht the proof technique of LICS can be applied to another problem

%We also have completeness of the decision procedure.

%Plus efficiency of the decision procedure.

%We should be able to geenralize to non finitely generated groups as we didi in LICS

%Future work : other minimization problems handled this way?

% \subparagraph*{Acknowledgements.}
% I want to thank \dots

%\bibliographystyle{plain}
%\bibliography{biblio}

\ifRR
% !TEX root =  main.tex

\newpage

\appendix

\section*{Appendix}

In all of the Appendix, $A$ denotes a finite alphabet, $\group$ denotes an infinitary finitely generated group and $\gene$ a finite set of generators of $\group$.

% \section{Proofs of Section~\ref{sec:preliminaries}: Definitions and examples}

\section{Proofs of Section~\ref{sec:tp}: Branching twinning and Lipschitz properties}
% !TEX root = main.tex
%

\subsection{Example}\label{sec:-app-exple-BTP}

\begin{example}
The weighted automaton $W_1$ obtained by concatenating $W_0$ with itself is depicted in Figure~\ref{example-cra-flast-squared} (left).
This WA realizes the function $\flast^2$.
  As outlined in Example~\ref{example-btp}, $\flast^2$ can be realized by a CRA with two registers.
  Indeed, the weighted automaton $W_1$ can be shown to satisfy the $\TP{2}$ of \cite{daviaud_2016}.
  Figure~\ref{example-cra-flast-squared} (right) shows such a $2$-register machine $C_1$.
  (See \cite{daviaud_2016} for more details.)

  Note, however, that the two registers used in $C_1$ are not independent.
  (See the transitions reading $\#$.)
  Actually, we need at least $4$ independent registers to handle words
    like $a^na\#a^ma$, $a^nb\#a^ma$, $a^nb\#b^ma$ and $a^na\#b^ma$
    which can be used to produce values arbitrarily far one to another.
  Finally, $4$ independent registers are enough to realize $\flast^2$ as we only have to guess
    one of the four combinations of last letters of the two words.
\end{example}

\begin{figure}[htb!]
  \begin{minipage}[t]{.5\textwidth}
    \centering
    \vspace{0pt}\scalebox{.95}{% !TEX root = ../main.tex
%
\def\dh{1cm}
\def\dv{1.cm}
\begin{tikzpicture}[
  ->,
  >=stealth',
  shorten >=1pt,
  node distance=\dh,
  scale=0.8,
  every state/.style={minimum width=0.7cm,draw=blue!50,very thick,fill=blue!20, scale=0.8},
  every node/.style={font=\scriptsize},
  initial/.style={
    initial by relation,
    initial where=above,
    initial distance=.7cm,
    initial text=#1,
    initial text anchor=base west,
  },
  accepting/.style={
    accepting by relation,
    accepting where=below,
    accepting distance=.7cm,
    accepting text=#1,
    accepting text anchor=west,
  },
]
  \node[state, initial=$0$] (a) [] {$q_a$};
  \node[state, initial=$0$] (f) [right=of a] {$q_f$};
  \node[state, initial=$0$] (b) [right=of f] {$q_b$};

  \node[state] (a2) [below=\dv of a] {$q'_a$};
  \node[state, accepting=$0$] (f2) [below=\dv of f] {$q'_f$};
  \node[state] (b2) [below=\dv of b] {$q'_b$};

  \path[->]     (a) edge [loop left] node         {$b: 0$}   ();
  \path[->]     (a) edge [loop below] node         {$a: 1$}   ();
  \path[->]     (a) edge []  node [above, sloped] {$a: 1$}   (f);

  \path[->]     (a2) edge [loop left] node         {$b: 0$}   ();
  \path[->]     (a2) edge [loop below] node         {$a: 1$}   ();
  \path[->]     (a2) edge []  node [above, sloped] {$a: 1$}   (f2);

  \path[->]     (f) edge []  node [above, sloped] {$\#: 0$}   (a2);
  \path[->]     (f) edge []  node [above, sloped] {$\#: 0$}   (f2);
  \path[->]     (f) edge []  node [above, sloped] {$\#: 0$}   (b2);

  \path[->]     (b) edge [loop right] node         {$a: 0$}   ();
  \path[->]     (b) edge [loop below] node         {$b: 1$}   ();
  \path[->]     (b) edge []  node [above, sloped] {$b: 1$}   (f);

  \path[->]     (b2) edge [loop right] node         {$a: 0$}   ();
  \path[->]     (b2) edge [loop below] node         {$b: 1$}   ();
  \path[->]     (b2) edge []  node [above, sloped] {$b: 1$}   (f2);

\end{tikzpicture}}
  \end{minipage}
  \begin{minipage}[t]{.5\textwidth}
    \centering
    \vspace{0pt}\scalebox{.95}{% !TEX root = ../main.tex
%
\def\dh{1.2cm}
\def\dv{1.cm}
\begin{tikzpicture}[
  ->,
  >=stealth',
  shorten >=1pt,
  node distance=\dh,
  scale=0.8,
  every state/.style={minimum width=0.7cm,draw=blue!50,very thick,fill=blue!20, scale=0.8},
  every node/.style={font=\scriptsize},
  initial/.style={
    initial by relation,
    initial where=above,
    initial distance=.7cm,
    initial text=,
  },
  accepting/.style={
    accepting by relation,
    accepting where=below,
    accepting distance=.7cm,
    accepting text=#1,
  },
]
  \node[state, initial] (a) {$q_a$};
  \node[state] (b) [right=of a] {$q_b$};

  \node[state, accepting=$\{X_a\}$, accepting text anchor=east] (a2) [below=\dv of a] {$q'_a$};
  \node[state, accepting=$\{X_b\}$, accepting text anchor=west] (b2) [right=of a2] {$q'_b$};

  \path[->]     (a) edge [loop left] node         {$a: X_a\inc$}   ();
  \path[->]     (b) edge [loop right] node         {$b: X_b\inc$}   ();

  \path[bend right, ->]    (b) edge node [above] {$a: X_a\inc$}   (a);
  \path[bend right, ->]    (a) edge node [below] { $b: X_b\inc$}   (b);

  \path[->]    (a) edge node [left,xshift=3mm] {$\#: \begin{cases} X_a := X_a\\ X_b := X_a \end{cases}$}   (a2);
  \path[->]    (b) edge node [right] { $\#: \begin{cases} X_a := X_b\\ X_b := X_b \end{cases}$}   (b2);

  \path[->]     (a2) edge [loop left] node         {$a: X_a\inc$}   ();
  \path[->]     (b2) edge [loop right] node         {$b: X_b\inc$}   ();

  \path[bend right, ->]    (b2) edge node [above] {$a: X_a\inc$}   (a2);
  \path[bend right, ->]    (a2) edge node [below] { $b: X_b\inc$}   (b2);
\end{tikzpicture}}
  \end{minipage}
  \caption{A weighted automaton $W_1$ (left)
    and a cost register automaton $C_1$ (right) with 2 registers that compute $\flast^2$.
  The updates are abbreviated: $X_a\inc$ means both $X_a:=X_a+1$ and $X_b:=X_b$ (and conversely).}
  \label{example-cra-flast-squared}
\end{figure}

\subsection{Lipschitz property of order $k$}
We prove that $k$-sequential WA satisfies the Lipschitz property of order~$k$. It is given  by Proposition~\ref{prop:kseq-lip}.

%\begin{proposition}\label{prop:kseq-lip}
%  A $k$-sequential relation satisfies the Lipschitz property of order~$k$.
%\end{proposition}

\subsection*{Proof of Proposition~\ref{prop:kseq-lip}}

Consider a weighted automaton $\poids$ defined as the union of $k$ sequential weighted automata $\poids_1, \ldots, \poids_k$. Let $(w_0,\alpha_0), \ldots, (w_k,\alpha_k) \in \inter{\poids}$. By the pigeon hole principle, there are $1\leq j< j'\leq k$ and $1\leq i \leq k$ such that $(w_j,\alpha_j) \in \inter{\poids_i}$ and $(w_{j'},\alpha_{j'}) \in \inter{\poids_i}$. The result follows by sequentiality of $\poids_i$: there is a unique computation on the longest common prefix of $w_j$ and $w_{j'}$ in $\poids_i$, thus:
\[\cayley{\alpha_j}{\alpha_{j'}} \leq 2 \base + \base \dist(w_j,w_{j'}) + 2 \base \leq 4\base(\dist(w_j,w_{j'})+1)\]

\subsection{An alternative branching twinning property}
\label{appendix_loops_BTP}
% !TEX root = main.tex

%\btpLoops*

Let us consider a similar definition of the $\BTP{k}$ by increasing the number of loops.
This leads to an alternative branching twinning property, that we will call $\BTP{k}'$, obtained from the $\BTP{k}$ by 
requiring the property not only for $k$ cycles, but for $m$ cycles, for every $m \geq k$. We prove in Lemma \ref{lemma-BTP-loops} that $\BTP{k}$ and 
$\BTP{k}'$ are equivalent.
% The formal definition of $\BTP{k}'$ and the proof of this Lemma 
%is given in Appendix, Section \ref{appendix_loops_BTP}.

\begin{definition}
  \label{def:tp'} A weighted automaton over a group $\group$ satisfies $\BTP{k}'$ if:
  \begin{itemize}
    \item for all $m \geq k$
    \item for all states $\{q_{i,j} \mid \ini{i}{0}{m} \text{ and } \ini{j}{0}{k}\}$
      with $q_{0,j}$ initial for all $j$,
    \item for all $\gamma_j \in \group$ such that $(q_{0,j}, \gamma_j) \in t_{\textsl{init}}$ with $\ini{j}{0}{k}$,
    \item for all words $u_{i,j}$ and $v_{i,j}$ with $\ini{i}{1}{m} \text{ and } \ini{j}{0}{k}$ such that
      there are $k+1$ runs satisfying for all $\ini{i}{1}{m}$, for all $\ini{j}{0}{k}$,
        $q_{i-1,j} \xrightarrow{u_{i,j} \mid \alpha_{i,j}} q_{i,j}$ and
        $q_{i,j} \xrightarrow{v_{i,j} \mid \beta_{i,j}} q_{i,j}$ (see Figure~\ref{figure:BTPk'}),
  \end{itemize}
  there are $j \neq j'$ such that for all $\ini{i}{1}{m}$,
  if for every $1\leq i' \leq i$, we have $u_{i',j}=u_{i',j'}$ and $v_{i',j}=v_{i',j'}$,
  then we have
  \begin{align*}
    \delay{\gamma_j \alpha_{1,j} \dotsm \alpha_{i,j}}
      {\gamma_{j'} \alpha_{1,j'} \dotsm \alpha_{i,j'}} =
    \delay{\gamma_j \alpha_{1,j} \dotsm \alpha_{i,j} \beta_{i,j}}
      {\gamma_{j'} \alpha_{1,j'}  \dotsm \alpha_{i,j'} \beta_{i,j'}}.
  \end{align*}
\end{definition}

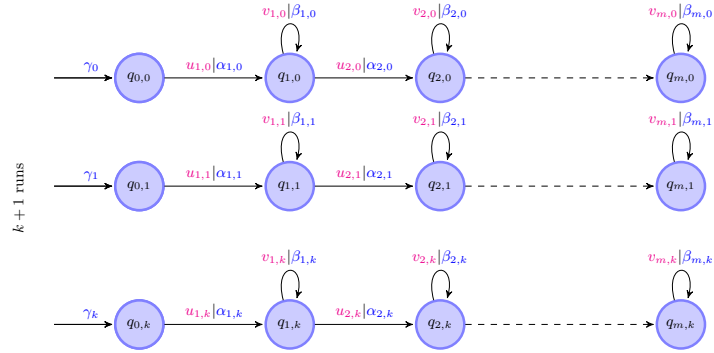
\begin{figure}[hb!]
  \centering
  \def \loopCount {m}
  % !TEX root = ../main.tex
\scalebox{.8}{
\begin{tikzpicture}[
  ->,
  >=stealth',
  shorten >=1pt,
  auto,
  node distance=2.8cm,
  every state/.style={minimum width=1cm,draw=blue!50,very thick,fill=blue!20},
  every node/.style={font=\small,scale=0.8},
]
  \tikzstyle{every edge}=[draw=black,font=\small]

  \foreach \j/\y in {0/0,1/-1.8,k/-4.1}
  {
    \node[initial=$\weight{\gamma_\j}$,state] (A0) at (0,\y) {$q_{0,\j}$};

    \foreach \i/\x in {1/2.5,2/5,\loopCount/9} {
      \node[state] (A\i)  at (\x,\y) {$q_{\i,\j}$};

      \path (A\i) edge [loop above] node {\trans{v_{\i,\j}}{\beta_{\i,\j}}} (A\i);
    }

    \foreach \im/\i in {0/1,1/2} {
      \path (A\im) edge node {\trans{u_{\i,\j}}{\alpha_{\i,\j}}} (A\i);
    }
    \foreach \im/\i in {2/\loopCount} {
      \path (A\im) edge [dashed] node {} (A\i);
    }
  }

  \node[rotate=90] (textleft) at (-2,-2) {$k+1$ runs};
\end{tikzpicture}
}
  \caption{Equivalent definition of the branching twinning property of order~$k$\label{figure:BTPk'}}
\end{figure}

\begin{lemma}\label{lemma-BTP-loops}
  For all positive integer k, a weighted automaton satisfies $\BTP{k}$ if and only if it satisfies $\BTP{k}'$
\end{lemma}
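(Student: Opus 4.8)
The plan is to avoid any direct surgery on the branching structure and instead route everything through the Lipschitz characterization of Proposition~\ref{prop:BTP-lip-equivalence}. One implication is immediate: $\BTP{k}$ is exactly the instance $m=k$ of $\BTP{k}'$, so $\BTP{k}'$ trivially implies $\BTP{k}$. For the converse I would argue by contraposition, showing $\neg\BTP{k}' \Rightarrow \neg\BTP{k}$. Concretely, I claim that any weighted automaton $W$ violating $\BTP{k}'$ already violates the Lipschitz property of order~$k$; since by Proposition~\ref{prop:BTP-lip-equivalence} failing the Lipschitz property of order~$k$ is equivalent to failing $\BTP{k}$, this gives $\neg\BTP{k}' \Rightarrow \neg\lip{k} \Rightarrow \neg\BTP{k}$, hence $\BTP{k} \Rightarrow \BTP{k}'$.

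To establish $\neg\BTP{k}' \Rightarrow \neg\lip{k}$, I would simply observe that the pumping argument used in the proof of Proposition~\ref{prop:BTP-lip-equivalence} (the direction turning a violation of the branching twinning property into a violation of the Lipschitz property) is completely insensitive to the number of loops in the witness. Fix a configuration witnessing $\neg\BTP{k}'$, with $m\geq k$ loops, $k+1$ runs indexed by $j$, initial outputs $\gamma_j$, segment outputs $\alpha_{i,j}$ and loop outputs $\beta_{i,j}$. By definition of the negation, for every pair $j\neq j'$ there is an index $i(j,j')\in\{1,\dots,m\}$ at which runs $j$ and $j'$ still read the same input (i.e. $u_{i',j}=u_{i',j'}$ and $v_{i',j}=v_{i',j'}$ for all $i'\leq i(j,j')$) but the delay changes, i.e. $\delay{\gamma_j\alpha_{1,j}\dotsm\alpha_{i(j,j'),j}}{\gamma_{j'}\alpha_{1,j'}\dotsm\alpha_{i(j,j'),j'}} \neq \delay{\gamma_j\alpha_{1,j}\dotsm\alpha_{i(j,j'),j}\beta_{i(j,j'),j}}{\gamma_{j'}\alpha_{1,j'}\dotsm\alpha_{i(j,j'),j'}\beta_{i(j,j'),j'}}$. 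Then, given a target $\Kc\in\N$, I would pump the $m$ loops ``going backward'' with geometrically decreasing multiplicities (loop $i$ pumped $n_i$ times with $n_1\gg n_2\gg\dots\gg n_m$, say $n_i=N^{m+1-i}$ for a large $N$), and complete the runs into accepting runs using suffixes of bounded length, producing $k+1$ pairs $(w_j,\alpha_j)\in\inter{W}$. For any pair $j\neq j'$: their inputs coincide on the pumped prefix up to loop $i(j,j')$ (same segment and loop labels, pumped by the same multiplicities), so $\dist(w_j,w_{j'})$ is bounded by the contribution of the loops strictly after $i(j,j')$ plus a constant, i.e. $O\!\left(\sum_{i>i(j,j')} n_i\right)=O(N^{m-i(j,j')})$; whereas, by the infinitary hypothesis on $\group$ together with Lemma~\ref{lemma:properties-delay}, pumping the (delay-changing) loop $i(j,j')$ by $n_{i(j,j')}=N\cdot N^{m-i(j,j')}$ times forces $\cayley{\alpha_j}{\alpha_{j'}}$ to grow faster than $\Kc\,(\dist(w_j,w_{j'})+1)$ once $N$ is large. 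Taking $N$ above the finitely many pairwise thresholds makes \emph{every} pair among the $k+1$ chosen ones violate the Lipschitz inequality with constant $\Kc$; as $\Kc$ was arbitrary, $\inter{W}$ does not satisfy the Lipschitz property of order~$k$.

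The main obstacle is precisely the uniformity of this pumping: all $k+1$ runs are pumped by a single multiplicity vector, yet each of the $\binom{k+1}{2}$ pairs ``wants'' its own bad loop $i(j,j')$ pumped far harder than everything following it, and the backward, geometrically decreasing schedule is exactly what reconciles these demands simultaneously. One also has to check, via the second part of Lemma~\ref{lemma:properties-delay} and the infinitary property, that the divergence created at a pair's bad loop genuinely survives the lower-order perturbations coming from later loops (which differ between the two runs) and from the bounded suffixes — this is the same bookkeeping already carried out for Proposition~\ref{prop:BTP-lip-equivalence}, reused verbatim with $m$ in place of $k$. An alternative would be to perform a purely combinatorial contraction of a $\neg\BTP{k}'$ witness into a $k$-loop witness of $\neg\BTP{k}$ (keeping one loop per pair and exploiting the laminar/tree structure of the ``agreement'' partitions of the $k+1$ runs), but bounding the number of retained loops by $k$ is delicate, so routing through the Lipschitz property is the preferable route.
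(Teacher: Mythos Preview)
Your route through the Lipschitz property is circular. The implication you borrow from Proposition~\ref{prop:BTP-lip-equivalence} is $\BTP{k}\Rightarrow\lip{k}$ (equivalently $\neg\lip{k}\Rightarrow\neg\BTP{k}$), but the paper's proof of precisely this direction \emph{relies on} Lemma~\ref{lemma-BTP-loops}: given $k+1$ pairs in $\inter{W}$, the associated runs are decomposed into a chunk carrying $m=(k+1)|Q|^{k+1}$ synchronized loops, and then $\BTP{k}'$ is applied with this $m$ --- which is available only after $\BTP{k}$ has first been upgraded to $\BTP{k}'$ via the very lemma you are proving. Your direct pumping step $\neg\BTP{k}'\Rightarrow\neg\lip{k}$ is fine (the backward schedule is indeed insensitive to the number of loops), but without an independent proof of $\BTP{k}\Rightarrow\lip{k}$ the chain does not close, and no such independent proof is apparent: one genuinely needs the property for arbitrarily many loops to absorb the chunk decomposition.

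The paper takes exactly the combinatorial route you set aside as ``delicate'', and bounding the retained loops by $k$ is in fact immediate. From a $\neg\BTP{k}'$ witness with $m$ loops, define a chain of partitions $P_0,\ldots,P_m$ of $\{0,\ldots,k\}$: $P_0$ is the single block, and $P_{i}$ refines $P_{i-1}$ by separating $j,j'$ exactly when their delay changes along loop~$i$. Since every pair has a delay-changing loop, $P_m$ consists of singletons; as the underlying set has $k+1$ elements, at most $k$ of the indices $i$ satisfy $P_{i-1}\neq P_i$. For each pair $j\neq j'$ the first separating index $i_s$ is one of these $\leq k$ indices and is the smallest delay-changing loop for that pair, hence $i_s\leq i(j,j')$ and the inputs of runs $j,j'$ still agree up to loop $i_s$. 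Retaining only the loops at these refinement indices (absorbing the rest into the adjacent segments) is already a $\neg\BTP{k}$ witness.
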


\begin{proof}
  By definition, $\BTP{k}'$ implies $\BTP{k}$. For the converse implication, suppose $\BTP{k}'$ is not satisfied. That is, there are:
  \begin{itemize}
    \item an integer $m \geq k$
    \item states $\{q_{i,j} \mid \ini{i}{0}{m} \text{ and } \ini{j}{0}{k}\}$
      with $q_{0,j}$ initial for all $j$,
    \item elements $\gamma_j \in \group$ such that $(q_{0,j}, \gamma_j) \in t_{\textsl{init}}$ with $\ini{j}{0}{k}$,
    \item words $u_{i,j}$ and $v_{i,j}$ with $\ini{i}{1}{m} \text{ and } \ini{j}{0}{k}$ such that
      there are $k+1$ runs satisfying for all $\ini{i}{1}{m}$, for all $\ini{j}{0}{k}$,
        $q_{i-1,j} \xrightarrow{u_{i,j} \mid \alpha_{i,j}} q_{i,j}$ and
        $q_{i,j} \xrightarrow{v_{i,j} \mid \beta_{i,j}} q_{i,j}$ (see Figure~\ref{figure:BTPk'}),
  \end{itemize}
  such that for all $j \neq j'$ there exists $\ini{i}{1}{m}$, also denoted $\zeta_{j,j'}$, such that
  \begin{itemize}
    \item for every $1\leq i' \leq i$, we have $u_{i',j}=u_{i',j'}$ and $v_{i',j}=v_{i',j'}$, but
    \item
      $\delay{\gamma_j \alpha_{1,j} \dotsm \alpha_{i,j}}
        {\gamma_{j'} \alpha_{1,j'} \dotsm \alpha_{i,j'}} \neq
      \delay{\gamma_j \alpha_{1,j} \dotsm \alpha_{i,j} \beta_{i,j}}
        {\gamma_{j'} \alpha_{1,j'}  \dotsm \alpha_{i,j'} \beta_{i,j'}}$.
  \end{itemize}

  We prove now that we can only consider $k$ loops and still preserve the property. We inductively build a partition $P_i$, for all $\ini{i}{0}{m}$, of the set $\{0,\ldots,k\}$ of run indexes:
  \begin{itemize}
    \item $P_0 = \{\{0,\ldots,k\}\}$,
    \item $P_{i+1}$ refines $P_i$ such that $j$ and $j'$ remains in the same class if and only if
    \begin{align*}
      & \delay{\gamma_j \alpha_{1,j} \dotsm \alpha_{i+1,j}}
        {\gamma_{j'} \alpha_{1,j'} \dotsm \alpha_{i+1,j'}} \\
      =\; & \delay{\gamma_j \alpha_{1,j} \dotsm \alpha_{i+1,j} \beta_{i+1,j}}
        {\gamma_{j'} \alpha_{1,j'}  \dotsm \alpha_{i+1,j'} \beta_{i+1,j'}}
    \end{align*}
  \end{itemize}

  We know that $P_m$ is the set of singleton sets. Moreover, since the partioned set contains $k+1$ elements, there are at most $k$ indexes $\ini{i}{1}{m}$ such that $P_{i-1} \neq P_{i}$. For all $j \neq j'$, consider $i_s$ the smallest index such that $j$ and $j'$ are not in the same class in $P_{i_s}$.
  In particular, $i_s$ is the smallest $i$ such that
  \begin{align*}
    \delay{\gamma_j \alpha_{1,j} \dotsm \alpha_{i,j}}
      {\gamma_{j'} \alpha_{1,j'} \dotsm \alpha_{i,j'}} \neq
    \delay{\gamma_j \alpha_{1,j} \dotsm \alpha_{i,j} \beta_{i,j}}
      {\gamma_{j'} \alpha_{1,j'}  \dotsm \alpha_{i,j'} \beta_{i,j'}}.
  \end{align*}

  But then $i_s \leq \zeta_{j,j'}$
  and thus for every $1\leq i' \leq i_s$, we have $u_{i',j}=u_{i',j'}$ and $v_{i',j}=v_{i',j'}$.
  This proves that the $\BTP{k}$ is not satisfied either.
\qed\end{proof}

\subsection{Equivalence of Lipschitz and branching twinning properties: proof of Proposition \ref{prop:BTP-lip-equivalence}}
\label{appendix_BTP_lip}
\paragraph*{\BTP{k} implies \lip{k}.}
% !TEX root = main.tex

%----------------------------------------------------------------------------------------------------------------------%
% BTP'-k implies LIP-k
%----------------------------------------------------------------------------------------------------------------------%

Let $W$ denote a weighted automaton over $\group$ and $Q$ its set of states.

%----------------------------------------------------------------------------------------------------------------------%

We define an object, called a \chunk{k,m}, that has a shape that resembles to an input of the $\BTP{k}'$ (with no initial outputs and additional runs at the right end after the last loops). Chunks will be assembled to build larger chunks.

\begin{definition}[\chunk{k,m}]
  A \emph{\chunk{k,m}} is a split of $k$ runs $(\rho_j)_{\ini{j}{1}{k}}$ in $2m+1$ parts, $m$ of which are loops, that is such that there are
  \begin{itemize}
    \item states $\{q_{i,j} \mid \ini{i}{0}{m+1}, \text{ and } \ini{j}{1}{k}\}$
    \item words $u_{i,j}$ for all $\ini{i}{1}{m+1}$ and $\ini{j}{1}{k}$,
    \item words $v_{i,j}$ for all $\ini{i}{1}{m}$ and $\ini{j}{1}{k}$
  \end{itemize}
  such that, for all $\ini{j}{1}{k}$, we have
  \begin{itemize}
    \item $\rho_{i,j} = q_{i-1,j} \xrightarrow{u_{i,j} \mid \alpha_{i,j}} q_{i,j}$ and
      $\rho'_{i,j} = q_{i,j} \xrightarrow{v_{i,j} \mid \beta_{i,j}} q_{i,j}$, for all $\ini{i}{1}{m+1}$,
    \item $\rho_{j} = \rho_{1,j} \rho'_{1,j} \cdots \rho_{m,j} \rho'_{m,j} \rho_{m+1,j}$,
  \end{itemize}
  The $\rho_{i,j}$ are the backbone of the chunk and the $\rho'_{i,j}$ are its loops. The \emph{backbone length} of this chunk is $\max_{\ini{j}{1}{k}} \sum_{\ini{i}{1}{m+1}} u_{i,j}$.
\end{definition}

%----------------------------------------------------------------------------------------------------------------------%

The read words and produced weights inside a chunk may be the empty word and the neutral element of $\group$. Thus, if some of the runs are shorter in a chunk, they can be completed at will with trivial $\varepsilon$ loops (that produce $\unit$).

Also, one can join two chunks $\mathcal{C}_1$ and $\mathcal{C}_2$ when the last states of $\mathcal{C}_1$ are the same as the first states of $\mathcal{C}_2$. By fusing the last part of the runs of $\mathcal{C}_1$ with the first part of the runs of $\mathcal{C}_2$, we obtain a new chunk whose number of loops is the sum of the number of loops of $\mathcal{C}_1$ and $\mathcal{C}_2$, and whose backbone length is the sum of the backbone lengths of $\mathcal{C}_1$ and $\mathcal{C}_2$.

%----------------------------------------------------------------------------------------------------------------------%

\begin{lemma}\label{lem-split-run}
  Let $\rho$ be a run in $W$ and let $n = |Q|$. If $|in(\rho)| \geq |Q|$ then there exist $\rho_0,\ldots,\rho_n$ and $\rho'_1,\ldots,\rho'_n$, such that $\rho = \rho_0 \rho'_1 \rho_1 \ldots \rho'_n \rho_n$, all the $\rho'_i$ are loops and $|in(\rho_0\ldots\rho_n)| < |Q|$.
\end{lemma}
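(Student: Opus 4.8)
The plan is to prove the lemma by a standard pumping argument, made quantitative through the following device: instead of cutting out an arbitrary repeated state at each step, we peel off loops anchored at the \emph{last occurrence} of each state visited by the backbone, which automatically caps the number of extracted loops by the number of distinct states.

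Concretely, write the state sequence of $\rho$ as $q_0, q_1, \ldots, q_m$ with $m = |in(\rho)| \geq |Q|$, and define inductively a finite sequence of positions $c_0 < c_1 < \cdots < c_r$ as follows: $c_0 = 0$; given $c_t$, let $\ell_t = \max\{ j : q_j = q_{c_t}\}$ be the last occurrence of the state $q_{c_t}$ along $\rho$; if $\ell_t = m$, stop and set $r = t$; otherwise set $c_{t+1} = \ell_t + 1$. Since $c_t \leq \ell_t$ one gets $c_{t+1} = \ell_t + 1 > c_t$ and $\ell_{t+1} \geq c_{t+1} > \ell_t$, so the $\ell_t$ are strictly increasing and bounded by $m$, hence the process terminates. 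The key point I would establish first is that the states $q_{c_0}, \ldots, q_{c_r}$ are pairwise distinct: if $q_{c_t} = q_{c_{t'}}$ with $t < t'$, then $c_{t'}$ is an occurrence of $q_{c_t}$, so $c_{t'} \leq \ell_t$, whereas $c_{t'} = \ell_{t'-1} + 1 \geq \ell_t + 1 > \ell_t$ by monotonicity of the $\ell$'s — a contradiction. In particular $r \leq |Q| - 1$, so the backbone reads strictly fewer than $|Q|$ letters.

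Next I would read off the decomposition. Since $q_{\ell_t} = q_{c_t}$ by definition of $\ell_t$, the segment of $\rho$ from position $c_t$ to position $\ell_t$ is a loop $L_t$ around $q_{c_t}$, and the transition read at position $\ell_t+1$ links $q_{c_t}$ to $q_{c_{t+1}}$; for $t = r$ the segment from $c_r$ to $\ell_r = m$ is a loop $L_r$ around $q_{c_r} = q_m$. Hence $\rho = L_0 \, e_1 \, L_1 \, e_2 \cdots e_r \, L_r$, where each $e_i$ is a single transition, each $L_t$ a (possibly trivial) loop, and the concatenation $e_1 \cdots e_r$ is a run reading $r < |Q|$ letters. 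It then remains to match this to the required shape $\rho = \rho_0 \rho'_1 \rho_1 \cdots \rho'_n \rho_n$ with $n = |Q|$ loop slots: take $\rho_0$ to be the empty run at $q_{c_0}$, set $\rho'_{t+1} = L_t$ and $\rho_{t+1} = e_{t+1}$ for $0 \leq t < r$, set $\rho'_{r+1} = L_r$, and let all remaining pieces $\rho_{r+1}, \ldots, \rho_n$ and $\rho'_{r+2}, \ldots, \rho'_n$ be the empty run, resp.\ the trivial $\varepsilon$-loop, at the final state $q_m$; this is legitimate since $r + 1 \leq |Q| = n$. Then $\rho_0 \rho_1 \cdots \rho_n$ reads exactly $r < |Q|$ letters, as required.

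The only genuine obstacle is the cardinality bound: an unconstrained iterative loop-removal merely guarantees the \emph{existence} of loops, possibly more than $|Q|$ of them, whereas the statement demands exactly $n = |Q|$ slots. The last-occurrence selection above is precisely what pins each extracted loop to a distinct state and thereby caps their number at $|Q|$; everything else (strict monotonicity of the $\ell_t$, that each collapsed segment is genuinely a loop, and the padding with trivial loops, which is explicitly sanctioned by the surrounding text) is routine. Note that the hypothesis $|in(\rho)| \geq |Q|$ is not strictly needed for the conclusion — when it fails one may take $\rho_0 = \rho$ with all loops trivial — but it guarantees the backbone is strictly shorter than $\rho$, which is how the lemma is invoked afterwards.
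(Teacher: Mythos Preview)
Your proof is correct. The paper itself states this lemma without proof, treating it as an elementary pumping-style fact, so there is no approach to compare against; your argument supplies exactly the kind of routine justification the authors omit. The ``last occurrence'' device is a clean way to ensure at most $|Q|$ loops are extracted, and your padding with empty runs at $q_m$ is consistent with how the surrounding text handles trivial $\varepsilon$-loops.
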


\begin{lemma}\label{lem-run-to-chunk}
  From $k$ runs in $W$ ($k \geq 1$), we can build a $\chunk{k,k|Q|^k}$ whose backbone length is $k|Q|^k$-bounded.
\end{lemma}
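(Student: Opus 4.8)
The plan is to build the chunk by \emph{synchronising} the $k$ runs wherever they read a common input and decomposing each synchronised block via Lemma~\ref{lem-split-run} applied to a suitable power of $W$. First I would dispose of the degenerate cases: a run (or a block of a run) whose input is shorter than the relevant threshold is simply kept as a single backbone piece, and any run that is ``shorter'' than the others at some stage is padded with trivial $\varepsilon$-loops, which the definition of a chunk explicitly permits. This reduces the problem to decomposing, for a group of $d\le k$ runs that travel together over a common input factor, the corresponding single run of the power automaton $W^{d}$, whose state set has size $|Q|^{d}\le |Q|^{k}$.

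Next I would organise the $k$ runs into a \emph{merge tree} according to their shared input prefixes: each edge corresponds to a maximal input factor over which a fixed group of runs reads the same letters, and a branching node corresponds to an input position where that group splits because the runs no longer agree on the next letter. This tree has $k$ leaves, and group sizes are non-increasing along every root-to-leaf path and strictly decrease at each branching node, so every such path crosses at most $k$ edges. For an edge carrying a group of $d$ runs, I apply Lemma~\ref{lem-split-run} inside $W^{d}$: if the factor has length at least $|Q|^{d}$ it splits into exactly $|Q|^{d}\le |Q|^{k}$ \emph{simultaneous} loops (one loop of $W^{d}$ is a loop in each member of the group) with a residual backbone of length $<|Q|^{d}\le |Q|^{k}$; if it is shorter, it is left as a single backbone piece.

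Then I would \emph{assemble} these per-edge sub-chunks in the left-to-right order of the input, following the branches of the merge tree, inserting a synchronised cut point of the whole $k$-tuple of runs at every loop produced by any group and padding each run not participating at that position with a trivial $\varepsilon$-loop. The junction condition for joining chunks (the last state-tuple of one block is the first state-tuple of the next) holds by construction, since consecutive blocks come from the same runs. Because every run passes through at most $k$ edges, each contributing at most $|Q|^{k}$ loops and backbone of length $<|Q|^{k}$, the total number of cut points is at most $k|Q|^{k}$ and every run's backbone has length $<k|Q|^{k}$; finally I pad the number of loops up to exactly $k|Q|^{k}$ with trivial loops, yielding the desired $\chunk{k,k|Q|^{k}}$ with $k|Q|^{k}$-bounded backbone length.

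I expect the main obstacle to be the bookkeeping of the assembly step: interleaving the independently chosen loop decompositions of the different groups into one global sequence of synchronised cut points without the count exceeding $k|Q|^{k}$, and verifying that inserting trivial $\varepsilon$-loops at ``foreign'' cut points keeps every resulting piece either a genuine sequence of backbone transitions or a genuine (possibly trivial) loop, so that the object really fits the definition of a \chunk{k,k|Q|^{k}}. Everything else is routine: the extraction of loops is already packaged in Lemma~\ref{lem-split-run}, and the edge count along branches follows from the fact that group sizes are non-increasing on the merge tree.
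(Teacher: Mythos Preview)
Your merge tree is exactly the recursion tree of the paper's argument, which proceeds by induction on $k$: apply Lemma~\ref{lem-split-run} in $W^k$ to the longest common prefix of all $k$ inputs (your root edge), then partition the runs by their next input letter and recurse on each strictly smaller class. So the overall strategy is the same, only phrased non-inductively.

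The assembly step, which you rightly flag as the obstacle, is where your description does not quite work. You propose to insert a global cut point ``at every loop produced by any group'' and pad the non-participating runs. Read literally, this \emph{serialises} the loops of sibling subtrees: a merge tree with $k$ leaves can have up to $2k-2$ edges, so you get up to $(2k-2)|Q|^k$ cut points, not the $k|Q|^k$ you then derive from a per-path count. The per-path bound is only valid if diverged groups \emph{share} loop indices: once two groups read different inputs, the chunk definition places no constraint linking their loops, so group $A$'s $i$-th post-divergence loop and group $B$'s $i$-th one can occupy the \emph{same} chunk index. The paper's inductive phrasing makes this parallel alignment automatic: each partition class returns, by the induction hypothesis, a chunk on its own set of runs; these chunks (on disjoint run-sets) are padded with trivial loops to a common length, laid side by side to form a single chunk on all $k$ runs, and that is joined on the right of the root-edge chunk. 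This is precisely what makes the per-path count of at most $k|Q|^k$ legitimate; once you make it explicit in your assembly, the rest of your plan goes through.
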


\begin{proof}
  We proceed by induction on the number $k$ of runs $\rho_j$ ($\ini{j}{1}{k}$).

  For the base case ($k=1$), either $|in(\rho_1)| < |Q|$ and we can build a \chunk{1,0} $B = \left( \rho_1 \right)$ which can be completed to a \chunk{1,|Q|}, or $|in(\rho_1)| \geq |Q|$ and by Lemma~\ref{lem-split-run} we can find a split of $\rho_1$ that gives a \chunk{1,|Q|}.
  In both cases, the backbone length of this \chunk{1,|Q|} is $|Q|$-bounded.

  For the induction case ($k>1$), let $w = \lcp_{\ini{j}{1}{k}} \{ in(\rho_j) \}$ and let $\rho'_j,\rho''_j$ for all $\ini{j}{1}{k}$ such that $\rho_j = \rho'_j\rho''_j$ and $in(\rho'_j) = w$.
  We will first build a \chunk{k,|Q|^k} for the $k$ runs $\rho'_j$ ($\ini{j}{1}{k}$) and then join on its right end some chunks for the $\rho''_j$ runs ($\ini{j}{1}{k}$).

  Either $|w| < |Q|^k$ and we can build a \chunk{k,0} $\left( \rho'_j \right)_{\ini{j}{1}{k}}$ that can be completed to a \chunk{k,|Q|^k} $B'$.
  Or $|w| \geq |Q|^k$ and, by Lemma~\ref{lem-split-run} applied on $W^k$, for all $\ini{j}{1}{k}$ there exist $\rho'_{0,j},\ldots,\rho'_{2n,j}$, where $n = |Q|^k$, such that
    $\rho'_j = \rho'_{0,j} \ldots \rho'_{2n,j}$,
    for all $\ini{i}{0}{n-1}$ the $\rho'_{2i+1,j}$ are loops and
    $|in(\rho'_0\rho'_2\ldots\rho'_{2n})| < |Q|^k$.
  This gives us a \chunk{k,|Q|^k} $B' = \left( \rho'_{i,j} \right)_{\ini{i}{0}{2n},\ini{j}{1}{k}}$. In both cases, the backbone of this \chunk{k,|Q|^k} is $|Q|^k$-bounded.

  We partition the runs $(\rho''_j)_{\ini{j}{1}{k}}$ by the first letter they read. Each member of the partition is of size strictly less than $k$ and their contained runs read inputs with a common prefix. We can apply the induction hypothesis to exhibit, for each member of the partition, a chunk that can be completed to a \chunk{k-1,(k-1)|Q|^{k-1}} and has a backbone length bounded by $(k-1)|Q|^{k-1}$. By adequately (w.r.t. the order of the partition) joining those chunks on the right end of $B'$ we obtain a \chunk{k,|Q|^k + (k-1)|Q|^{k-1}} that has a backbone length bounded by $|Q|^k + (k-1)|Q|^{k-1} \leq k|Q|^k$. Furthermore, it can be completed to a \chunk{k,k|Q|^k}.
\qed\end{proof}

\begin{lemma}
If $W$ satisfies $\BTP{k}$ then $\inter{W}$ satisfies $\lip{k}$.
\end{lemma}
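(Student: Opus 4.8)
The plan is to prove the implication "\(W\) satisfies \(\BTP{k}\) \(\Rightarrow\) \(\inter{W}\) satisfies \(\lip{k}\)" by taking an arbitrary family of \(k+1\) pairs in \(\inter{W}\), lifting them to \(k+1\) accepting runs in \(W\), and then extracting from these runs a structured object to which the branching twinning property can be applied. Concretely, given pairs \((w_0,\alpha_0),\dots,(w_k,\alpha_k)\in\inter{W}\), for each \(j\) fix an accepting run \(\rho_j\colon\xrightarrow{\gamma_j}q_{0,j}\xrightarrow{w_j\mid\alpha'_j}q_j\xrightarrow{\delta_j}\) so that \(\alpha_j=\gamma_j\alpha'_j\delta_j\). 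The first task is to organize these runs into an instance of the pattern in Figure~\ref{figure:BTPk} (or rather its equivalent form \(\BTP{k}'\)), which is exactly what Lemma~\ref{lem-run-to-chunk} provides: from the \(k+1\) runs \(\rho_j\) we build a \(\chunk{k+1,(k+1)|Q|^{k+1}}\) whose backbone length is bounded by \((k+1)|Q|^{k+1}\).

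First I would feed this chunk into the alternative property \(\BTP{k}'\), which by Lemma~\ref{lemma-BTP-loops} is equivalent to \(\BTP{k}\). The chunk gives, for each \(j\), a decomposition of (an extension of) \(\rho_j\) into a backbone \(\rho_{1,j}\rho_{2,j}\cdots\) interleaved with loops \(\rho'_{i,j}\), where the outputs along the backbone multiply to something differing from \(\alpha'_j\) only by the (omittable or trivial) loop contributions, and the total backbone length is bounded by \(N:=(k+1)|Q|^{k+1}\). By \(\BTP{k}'\) there exist \(j\neq j'\) such that, for every prefix of backbone stages on which the input words agree (i.e.\ for every \(i\) with \(u_{i',j}=u_{i',j'}\) and \(v_{i',j}=v_{i',j'}\) for all \(i'\le i\)), the delay between the two partial outputs is unchanged by inserting the \(i\)-th loop, namely
\[
\delay{\gamma_j\alpha_{1,j}\dotsm\alpha_{i,j}}{\gamma_{j'}\alpha_{1,j'}\dotsm\alpha_{i,j'}}=\delay{\gamma_j\alpha_{1,j}\dotsm\alpha_{i,j}\beta_{i,j}}{\gamma_{j'}\alpha_{1,j'}\dotsm\alpha_{i,j'}\beta_{i,j'}}.
\]
The key consequence, using Lemma~\ref{lemma:properties-delay}(2) (stability of equal delays under right multiplication), is that all loops lying on the common-input prefix of \(\rho_j\) and \(\rho_{j'}\) can be deleted or duplicated without altering the delay between the two outputs at the end of that common prefix. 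Since the backbone (non-loop) part of each run has input length at most \(N\), the common input prefix \(w=\lcp(w_j,w_{j'})\) contributes at most \(N\) letters of backbone; everything else on that prefix is loops which we have just shown do not affect the delay.

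Then I would finish with a direct bound: removing the loops on the common prefix yields two "reduced" runs on \(w\) whose backbone has length \(\le N\), so the Cayley size of their output delay is at most \(M_W\cdot N\) plus a constant \(2M_W\) for the initial outputs \(\gamma_j,\gamma_{j'}\); after that prefix the two runs read disjoint input, so their outputs can diverge, but only by at most \(M_W(|w_j|-|w|)+M_W(|w_{j'}|-|w|)\le M_W\,\dist(w_j,w_{j'})\), plus again \(2M_W\) for the final outputs \(\delta_j,\delta_{j'}\). Adding these (and using \(\cayley{\alpha_j}{\alpha_{j'}}=|\delay{\alpha_j}{\alpha_{j'}}|\) together with the triangle inequality for the Cayley distance) gives \(\cayley{\alpha_j}{\alpha_{j'}}\le K(\dist(w_j,w_{j'})+1)\) for a constant \(K\) depending only on \(W\) (something like \(K=M_W(N+5)\)), which is precisely \(\lip{k}\) with witness \(K\). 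The main obstacle I anticipate is the bookkeeping in the second paragraph: carefully matching the abstract loop/backbone decomposition produced by \(\chunk{}\)-surgery with the concrete runs \(\rho_j\), making sure that "deleting the common-prefix loops" really produces genuine runs of \(W\) on \(w\) with controlled output, and correctly accounting for the initial/final relation outputs \(\gamma_j,\delta_j\) which are not part of the chunk structure — once that correspondence is set up cleanly, the delay-stability argument and the length estimate are routine.
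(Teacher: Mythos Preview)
Your proposal is correct and follows essentially the same route as the paper: lift the $k+1$ pairs to accepting runs, apply Lemma~\ref{lem-run-to-chunk} to obtain a \chunk{k+1,(k+1)|Q|^{k+1}} with bounded backbone, invoke $\BTP{k}'$ (equivalent to $\BTP{k}$ by Lemma~\ref{lemma-BTP-loops}) to find $j\neq j'$, use Lemma~\ref{lemma:properties-delay}(2) to strip the loops on the common prefix, and then bound the Cayley distance by the backbone length plus $M_W\cdot\dist(w_j,w_{j'})$ plus the initial/final outputs. The constant you obtain and the one in the paper differ only cosmetically, and the ``bookkeeping obstacle'' you flag is exactly where the paper's own write-up is most delicate.
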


\begin{proof}
  If $W$ satisfies $\BTP{k}$ then it also satisfies $\BTP{k}'$.

  Let $(w_0,\alpha_0),\ldots,(w_k,\alpha_k) \in \inter{W}$.
  There are initial states $p_0,\ldots,p_k \in Q$,
    final states $q_0,\ldots,q_k \in Q$,
    runs $\rho_0,\ldots,\rho_k$,
    elements $\eta_j,\zeta_j,\theta_j \in \group$ for $\ini{j}{0}{k}$ such that,
    for all $\ini{j}{0}{k}$, $\alpha_j = \eta_j \zeta_j \theta_j$,
      $\rho_j = p_j \xrightarrow{w_j|\zeta_j} q_j$,
      $(p_j,\eta_j) \in \tinit$ and $(q_j,\theta_j) \in \tfinal$.

  Let $m = (k+1)|Q|^{k+1}$.
  By Lemma~\ref{lem-run-to-chunk}, we can build a $\chunk{k+1, m}$ whose backbone length is bounded by $m$. That is there are:
  \begin{itemize}
    \item words $u_{i,j}$ for all $\ini{i}{0}{m}, \ini{j}{0}{k}$ and words $v_{i,j}$ for $\ini{i}{1}{m}, \ini{j}{0}{k}$
      such that $w_j = u_{0,j} v_{1,j} u_{1,j} \cdots v_{m,j} u_{m,j}$ and $|u_{0,j} u_{1,j} \cdots u_{m,j}| < m$
    \item states $q_{i,j}$ for all $\ini{i}{0}{m+1}, \ini{j}{0}{k}$
      such that $q_{0,j} = p_j$ and $q_{m+1,j} = q_j$,
    \item elements of $\group$ $\alpha_{i,j}$ for all $\ini{i}{0}{m}, \ini{j}{0}{k}$
      and $\beta_{i,j}$ for all $\ini{i}{0}{m}, \ini{j}{1}{k}$
      such that $\zeta_j = \alpha_{0,j} \beta_{1,j} \alpha_{1,j} \cdots \beta_{m,j} \alpha_{m,j}$,
  \end{itemize}
  such that there are $k+1$ runs satisfying:
  \begin{itemize}
    \item for all $\ini{i}{0}{m}, \ini{j}{0}{k}$, $q_{i,j} \xrightarrow{u_{i,j} \mid \alpha_{i,j}} q_{i+1,j}$ and
    \item for all $\ini{i}{1}{m}, \ini{j}{0}{k}$, $q_{i,j} \xrightarrow{v_{i,j} \mid \beta_{i,j}} q_{i,j}$.
  \end{itemize}

  As $W$ satisfies the $\BTP{k}'$, there are $j \neq j'$ such that for all $\ini{i}{1}{m}$,
  if for every $1\leq i' \leq i$, we have $u_{i'-1,j}=u_{i'-1,j'}$ and $v_{i',j}=v_{i',j'}$, then we have
  \begin{align*}
    &\delay{\eta_{j} \alpha_{0,j} \dotsm \alpha_{i-1,j}}
        {\eta_{j'} \alpha_{0,j'} \dotsm \alpha_{i-1,j'}} \\
    =\; &\delay{\eta_{j} \alpha_{0,j} \dotsm \alpha_{i-1,j} \beta_{i,j}}
        {\eta_{j'} \alpha_{0,j'}  \dotsm \alpha_{i-1,j'} \beta_{i,j'}}.
  \end{align*}

  Let $\ini{i}{1}{m}$ be the minimum index such that $u_{i-1,j} \neq u_{i-1,j'}$ or $v_{i,j} \neq v_{i,j'}$. Then
  \begin{align*}
    & \delay{\eta_{j} \zeta_{j}}{\eta_{j'} \zeta_{j'}} \\
    =\; & \delay{\eta_{j} \alpha_{0,j} \beta_{1,j} \alpha_{1,j} \cdots \beta_{m,j} \alpha_{m,j}}
        {\eta_{j'} \alpha_{0,j'} \beta_{1,j'} \alpha_{1,j'} \cdots \beta_{m,j'} \alpha_{m,j'}} \\
    =\; & \delay{\eta_{j} \alpha_{0,j} \alpha_{1,j} \cdots \beta_{m,j} \alpha_{m,j}}
        {\eta_{j'} \alpha_{0,j'} \alpha_{1,j'} \cdots \beta_{m,j'} \alpha_{m,j'}} \\
    & \vdots \\
    =\; & \delay{\eta_{j} \alpha_{0,j} \dotsm \alpha_{i-1,j} \beta_{i,j} \alpha_{i,j} \cdots \beta_{m,j} \alpha_{m,j}}
        {\eta_{j'} \alpha_{0,j'} \dotsm \alpha_{i-1,j'} \beta_{i,j'} \alpha_{i,j'} \cdots \beta_{m,j'} \alpha_{m,j'}}.
  \end{align*}

  Also we have that
  \begin{align*}
    \dist(w_{j},w_{j'})
    & = \dist(u_{i-1,j} v_{i,j} u_{i,j} \cdots v_{m,j} u_{m,j},
                u_{i-1,j'} v_{i,j'} u_{i,j'} \cdots v_{m,j'} u_{m,j'})
  \end{align*}

  Therefore
  \begin{align*}
    & \cayley{\unit}{\alpha_{i-1,j} \beta_{i,j} \alpha_{i,j} \cdots \beta_{m,j} \alpha_{m,j}}
      + \cayley{\unit}{\alpha_{i-1,j'} \beta_{i,j'} \alpha_{i,j'} \cdots \beta_{m,j'} \alpha_{m,j'}} \\
    & \leq
      \base \dist(u_{i-1,j} v_{i,j} u_{i,j} \cdots v_{m,j} u_{m,j},
                          u_{i-1,j'} v_{i,j'} u_{i,j'} \cdots v_{m,j'} u_{m,j'}) \\
    & \leq \base \dist(w_{j},w_{j'})
  \end{align*}

  Since $|u_{0,j} u_{1,j} \cdots u_{i-2,j}| < m$ and $|u_{0,j'} u_{1,j'} \cdots u_{i-2,j'}| < m$, we have that
  \begin{align*}
    \cayley{\alpha_{j}}{\alpha_{j'}}
    & \leq 2 \base m + \base \dist(w_{j},w_{j'}) + 2 \base \\
    & \leq 2 \base (m + 1) (\dist(w_{j},w_{j'}) + 1)
  \end{align*}

  This satisfies the lipschitz property of order~$k$ for $\Kc = 2 \base ((k+1) |Q|^{k+1} + 1)$.
\qed\end{proof}

\paragraph*{\lip{k} implies \BTP{k}.}
% !TEX root = main.tex

Consider a weighted automaton $\poids$ that does not satisfy \BTP{k}. Let us prove that $\inter{W}$ does not satisfy \lip{k}. It is a consequence of the following Lemma.

\begin{lemma}
\label{lemma:nonBTP_nonlip}
If $\poids$ does not satisfy \BTP{k}, then for all positive integers $\Kc$, there are $k+1$ words $w_0,\ldots, w_k$, initial states $q_0,\ldots, q_k$, with $(q_0,\gamma_0),\ldots ,(q_k,\gamma_k) \in t_{\textsl{init}}$, states $p_0,\ldots, p_k$ and $k+1$ runs:
\[q_j \xrightarrow{w_j\mid \alpha_j} p_j \quad \text{for all }j\in \{0, \ldots, k\},\]
such that for all $j \neq j'$, $\cayley{\gamma_j \alpha_j}{\gamma_{j'} \alpha_{j'}} > \Kc (\dist(w_j,w_{j'})+1)$.
\end{lemma}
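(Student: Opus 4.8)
The plan is to prove the statement directly from a fixed witness of the failure of $\BTP{k}$, by pumping the loops of that witness with carefully chosen, \emph{non-uniform} multiplicities so that the $k+1$ outputs become pairwise as far apart as we like while the inputs stay pairwise close. Since $W$ does not satisfy $\BTP{k}$, fix states $q_{i,j}$ ($0\le i,j\le k$) with $q_{0,j}$ initial, elements $\gamma_j$ with $(q_{0,j},\gamma_j)\in\tinit$, words $u_{i,j},v_{i,j}$ and runs $q_{i-1,j}\xrightarrow{u_{i,j}\mid\alpha_{i,j}}q_{i,j}$, $q_{i,j}\xrightarrow{v_{i,j}\mid\beta_{i,j}}q_{i,j}$ such that for every pair $j\ne j'$ there is an index at which the input segments agree so far but the delay strictly changes. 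For each such pair I would single out $d=d(j,j')$, the \emph{smallest} such index. Writing $A_{i,j}=\gamma_j\alpha_{1,j}\cdots\alpha_{i,j}$, minimality of $d$ gives that for every $i<d$ we have $u_{i',j}=u_{i',j'}$, $v_{i',j}=v_{i',j'}$ for all $i'\le i$ together with $\delay{A_{i,j}}{A_{i,j'}}=\delay{A_{i,j}\beta_{i,j}}{A_{i,j'}\beta_{i,j'}}$, whereas at $i=d$ this last equality fails.

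The heart of the argument is a stability observation. From $\delay{A_{i,j}}{A_{i,j'}}=\delay{A_{i,j}\beta_{i,j}}{A_{i,j'}\beta_{i,j'}}$ and Lemma~\ref{lemma:properties-delay}(2), an immediate induction yields $\delay{A_{i,j}}{A_{i,j'}}=\delay{A_{i,j}\beta_{i,j}^{n}}{A_{i,j'}\beta_{i,j'}^{n}}$ for every $n$. Appending, again via Lemma~\ref{lemma:properties-delay}(2), the common suffix $\alpha_{i+1,j}\cdots\alpha_{d,j}$ (resp.\ primed) and iterating over $i=1,\dots,d-1$ shows that, for this pair, replacing each loop $v_{i,j}$ ($i<d$) by an arbitrary power leaves the delay at level $d$ unchanged: if $P_j$ denotes the output along the run obtained from run $j$ by pumping, read from the start up to the first arrival in $q_{d,j}$, then $\delta_d:=\delay{P_j}{P_{j'}}=\delay{A_{d,j}}{A_{d,j'}}$, a fixed group element, and by the choice of $d$ we have $\beta_{d,j}\,\delta_d\neq\delta_d\,\beta_{d,j'}$.

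Now fix $K$. For a tuple $(N_1,\dots,N_k)$, let $w_j$ be the input word and $\alpha_j$ the output of the run obtained from run $j$ by traversing each loop $v_{i,j}$ exactly $N_i$ times. For a pair with $d=d(j,j')$, since the segments up to level $d$ coincide and are pumped identically, $\dist(w_j,w_{j'})\le c_1\sum_{i>d}(N_i+1)$ for a constant $c_1$ depending only on the witness (and $=0$ when $d=k$); splitting the pumped output as $\gamma_j\alpha_j=P_j\,\beta_{d,j}^{\,N_d}R_j$, using $\cayley{\alpha}{\beta}=|\delay{\alpha}{\beta}|$, the identity $\delay{xg y}{x'g'y'}=y^{-1}\,\delay{xg}{x'g'}\,y'$ and $|z h z'|\geq|h|-|z|-|z'|$ for the size, we get
\[
\cayley{\gamma_j\alpha_j}{\gamma_{j'}\alpha_{j'}}\;\geq\;\bigl|\beta_{d,j}^{-N_d}\,\delta_d\,\beta_{d,j'}^{\,N_d}\bigr|\;-\;c_2\sum_{i>d}(N_i+1),
\]
with $c_2$ depending only on the witness. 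Applying the infinitary property of $\group$ to $(\beta_{d,j}^{-1},\delta_d,\beta_{d,j'})$: since $\beta_{d,j}^{-1}\delta_d\beta_{d,j'}\ne\delta_d$, the set $\{\beta_{d,j}^{-n}\delta_d\beta_{d,j'}^{\,n}\mid n\in\N\}$ is infinite, hence, balls being finite, $\bigl|\beta_{d,j}^{-n}\delta_d\beta_{d,j'}^{\,n}\bigr|$ exceeds any prescribed bound for all but finitely many $n$. I would then pick $N_k,N_{k-1},\dots,N_1$ \emph{in this order}: when $N_d$ is chosen, $N_{d+1},\dots,N_k$ are already fixed, so the right-hand side above and $K(\dist(w_j,w_{j'})+1)$ are bounded by an explicit constant $B_d$ for every pair branching at level $d$; taking $N_d$ large enough makes $\bigl|\beta_{d,j}^{-N_d}\delta_d\beta_{d,j'}^{\,N_d}\bigr|>B_d$ for the finitely many such pairs simultaneously. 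The runs $q_{0,j}\xrightarrow{w_j\mid\alpha_j}q_{k,j}$ (renaming $q_{0,j},q_{k,j}$ as $q_j,p_j$) then witness the conclusion.

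The main obstacle is exactly the stability observation of the second paragraph: a priori $\delta_d$ would depend on how much the earlier loops are pumped, creating a circular dependency among the choices of the $N_i$. Breaking it requires choosing $d(j,j')$ to be the \emph{smallest} violating level — so that every earlier level is non-violating for that pair — and exploiting Lemma~\ref{lemma:properties-delay}(2) to make those non-violating loops "delay-transparent"; this is precisely where the branching shape of $\BTP{k}$ (runs compared only on their common prefixes) is essential. The remaining bookkeeping (the constants $c_1,c_2,B_d$ and the descending choice of the $N_i$) is routine.
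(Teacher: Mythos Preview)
Your proof is correct and follows essentially the same approach as the paper: fix a $\BTP{k}$-violating witness, for each pair $(j,j')$ take the \emph{minimal} violating loop index, and choose the pumping multiplicities $N_k,\dots,N_1$ in decreasing order so that, at each level, the infinitary property pushes the delays beyond the bound determined by the already-fixed later multiplicities. Your write-up is in fact more explicit than the paper's on the two crucial points---the ``stability'' of earlier (non-violating) loops via Lemma~\ref{lemma:properties-delay}(2), and the invocation of the infinitary hypothesis to make $|\beta_{d,j}^{-N}\delta_d\beta_{d,j'}^{\,N}|$ arbitrarily large---which the paper's proof uses but leaves implicit.
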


\begin{proof}
The idea behind the proof is to consider a witness as described in Figure~\ref{figure:BTPk}. If \BTP{k} is not satisfied, then one can pump the loops "the right number of times" to: (1) sufficiently increase the caley distance between the weights of the runs, (2) not increase to much the distance between the corresponding labelling words.

Let $\Kc$ be a positive integer. Since $\poids$ does not satisfy \BTP{k}, then there are:
\begin{itemize}
\item states $\{q_{i,j} \mid i,j \in \{0,\dots, k\}\}$ with $q_{0,j}$ initial for all $j$,
\item pairs $(q_{0,j}, \gamma_j) \in t_{\textsl{init}}$ for $j \in \{0,\dots, k\}$,
\item words $u_{i,j}$ and $v_{i,j}$ with $i,j\in\{1,\ldots,k\}$ and $k+1$ runs
\[q_{i-1,j} \xrightarrow{u_{i,j} \mid \alpha_{i,j}} q_{i,j} \quad \text{and} \quad q_{i,j} \xrightarrow{v_{i,j} \mid \beta_{i,j}} q_{i,j} \quad \text{for} \quad 0 \leq j \leq k,\ 1 \leq i \leq k \]
\end{itemize}
such that for all $j \neq j'$, there is $i \in \{1,\ldots, k\}$ such that for all $1\leq i' \leq i$, we have $u_{i',j}=u_{i',j'}$, $v_{i',j}=v_{i',j'}$ and:
\[\delay{\gamma_j \alpha_{1,j} \dotsm \alpha_{i,j}}{\gamma_{j'} \alpha_{1,j'} \dotsm \alpha_{i,j'}}
\neq \delay{\gamma_j \alpha_{1,j} \dotsm \alpha_{i,j} \beta_{i,j}}{\gamma_{j'} \alpha_{1,j'}  \dotsm \alpha_{i,j'} \beta_{i,j'}}\]

We construct by induction (in decreasing order) a sequence of positive integers $t_k, \ldots, t_1$. Let us give the construction of $t_i$.
Let $L_i$ be the maximal length of the words $u_{i+1,j}v_{i+1,j}^{t_{i+1}} \dotsm u_{k,j} v_{k,j}^{t_k}$ over all $0 \leq j \leq k$.
Consider $T_i$ the set of pairs $(j,j')$ such that for all $1\leq i' \leq i$, we have $u_{i',j}=u_{i',j'}$, $v_{i',j}=v_{i',j'}$ and:
\[\delay{\gamma_j \alpha_{1,j} \dotsm \alpha_{i,j}}{\gamma_{j'} \alpha_{1,j'} \dotsm \alpha_{i,j'}}
\neq \delay{\gamma_j \alpha_{1,j} \dotsm \alpha_{i,j} \beta_{i,j}}{\gamma_{j'} \alpha_{1,j'}  \dotsm \alpha_{i,j'} \beta_{i,j'}}\]

One can choose an integer $N$ such that for all pairs $(j,j') \in T_i$,
\[\cayley{\gamma_{j} \alpha_{1,j} \dotsm \alpha_{i,j} (\beta_{i,j})^N}{\gamma_{j'} \alpha_{1,j'} \dotsm \alpha_{i,j'} (\beta_{i,j'})^N} > 2L_i(\base + \Kc) + \Kc\]
Set $t_i=N$.

The words $w_j = u_{1,j}v_{1,j}^{t_{1}} \dotsm u_{k,j} v_{k,j}^{t_k}$ and the corresponding runs fulfil the condition of the Lemma. Indeed, let $j\neq j'$, and $i$ the minimal index such that $(j,j') \in T_i$.
Such an index $i$ exists by hypothesis.
For $\ell \in \{j,j'\}$, set $\alpha_\ell = \alpha_{1,\ell} (\beta_{1,\ell})^{t_1} \alpha_{2,\ell} \dotsm  \alpha_{k,\ell} (\beta_{k,\ell})^{t_k}$ and $\overline{\alpha}_\ell =  \alpha_{1,\ell} \alpha_{2,\ell} \dotsm \alpha_{{i-1},\ell} \alpha_{i,\ell} (\beta_{i,\ell})^{t_i}$.

We have:
\begin{align*}
& \delay{\gamma_{j} \alpha_j}{\gamma_{j'} \alpha_{j'}} \\
= \ & \delay{\gamma_{j} \overline{\alpha}_j \alpha_{{i+1},j} \dotsm \alpha_{k,j} (\beta_{k,j})^{t_k}}{\gamma_{j'} \overline{\alpha}_{j'} \alpha_{{i+1},j'} \dotsm \alpha_{k,j'} (\beta_{k,j'})^{t_k}} \\
= \ & (\alpha_{{i+1},j} (\beta_{i+1,j})^{t_{i+1}} \dotsm \alpha_{k,j} (\beta_{k,j})^{t_k})^{-1} \delay{\gamma_{j}\overline{\alpha}_j}{\gamma_{j'}\overline{\alpha}_{j'}} \alpha_{i+1,j'} (\beta_{i+1,j'})^{t_{i+1}} \dotsm \alpha_{k,j'} (\beta_{k,j'})^{t_k}
\end{align*}

Moreover, $\dist(w_j,w_{j'}) \leq 2L_i$ by definition of $L_i$.
Then:
\[\cayley{\gamma_{j}\alpha_j}{\gamma_{j'}\alpha_{j'}} > 2L_i(\base + \Kc) + \Kc - 2 \base L_i \geq \Kc(\dist(w_j,w_{j'})+1)\]
\qed\end{proof}

\begin{lemma}
If $\poids$ does not satisfy \BTP{k} then $\inter{W}$ does not satisfy \lip{k}.
\end{lemma}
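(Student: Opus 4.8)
This is essentially the contrapositive reading of Lemma~\ref{lemma:nonBTP_nonlip}, together with the bookkeeping needed to turn the runs it produces into \emph{accepting} runs. Recall that $\inter{W}$ fails $\lip{k}$ precisely when, for every $\Kc \in \N$, there exist $k+1$ pairs in $\inter{W}$ that are \emph{pairwise far}, in the sense that any two of them, say $(w,\lambda)$ and $(w',\lambda')$, satisfy $\cayley{\lambda}{\lambda'} > \Kc(\dist(w,w')+1)$. So the plan is: fix an arbitrary $\Kc$, exhibit such a family of $k+1$ pairs, and conclude.

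First I would apply Lemma~\ref{lemma:nonBTP_nonlip} — with an auxiliary constant $\Kc'$ to be fixed at the end — to obtain $k+1$ words $w_0,\dots,w_k$, initial states $q_0,\dots,q_k$ with $(q_j,\gamma_j)\in t_{\textsl{init}}$, states $p_0,\dots,p_k$ and runs $q_j \xrightarrow{w_j\mid\alpha_j} p_j$ such that $\cayley{\gamma_j\alpha_j}{\gamma_{j'}\alpha_{j'}} > \Kc'(\dist(w_j,w_{j'})+1)$ for all $j\neq j'$. Since $W$ is trimmed, each $p_j$ lies on some accepting run, hence admits a path $p_j \xrightarrow{x_j\mid\delta_j} r_j$ to a final state $r_j$, which may be chosen with $|x_j|\le |Q|$ (shortest path), together with a final output $(r_j,\theta_j)\in t_{\textsl{final}}$. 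Setting $\widehat{w}_j = w_j x_j$ and $\widehat{\lambda}_j = \gamma_j\alpha_j\delta_j\theta_j$, we get $k+1$ pairs $(\widehat{w}_j,\widehat{\lambda}_j)\in\inter{W}$.

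Next I would check that this padding costs only constants depending on $W$. On the input side, $\dist(\widehat{w}_j,\widehat{w}_{j'}) \le \dist(w_j,w_{j'}) + |x_j| + |x_{j'}| \le \dist(w_j,w_{j'}) + 2|Q|$, so $\dist(\widehat{w}_j,\widehat{w}_{j'})+1 \le (2|Q|+1)\,(\dist(w_j,w_{j'})+1)$. On the output side, each of the at most $|Q|$ transitions in the suffix path and the final output has size at most $\base$, so $|\delta_j\theta_j| \le (|Q|+1)\base =: C_0$; the subadditivity of the group size then gives $\cayley{\widehat{\lambda}_j}{\widehat{\lambda}_{j'}} \ge \cayley{\gamma_j\alpha_j}{\gamma_{j'}\alpha_{j'}} - 2C_0$. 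Choosing $\Kc' = \Kc(2|Q|+1) + 2C_0$, we obtain for all $j\neq j'$ that $\cayley{\widehat{\lambda}_j}{\widehat{\lambda}_{j'}} > \Kc(2|Q|+1)(\dist(w_j,w_{j'})+1) \ge \Kc(\dist(\widehat{w}_j,\widehat{w}_{j'})+1)$. Hence the family $(\widehat{w}_j,\widehat{\lambda}_j)_{0\le j\le k}$ witnesses the failure of the Lipschitz inequality for the constant $\Kc$, and since $\Kc$ was arbitrary, $\inter{W}$ does not satisfy $\lip{k}$.

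The substantive difficulty — pumping the loops of a non-$\BTP{k}$ witness so that the $k+1$ outputs spread apart while the corresponding inputs stay proportionally close — is already packaged in Lemma~\ref{lemma:nonBTP_nonlip}; the only extra work here is the routine extension to accepting runs and the ensuing adjustment of constants, so I expect no real obstacle in this step.
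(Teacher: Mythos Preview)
Your proof is correct and follows essentially the same approach as the paper: apply Lemma~\ref{lemma:nonBTP_nonlip} with an inflated constant, use trimness to complete each run to an accepting one by a suffix of length at most $|Q|$, and then absorb the resulting bounded perturbations on both input distance and output Cayley distance into the choice of $\Kc'$. Your bookkeeping is in fact slightly more careful than the paper's (you track the additive $2C_0$ loss explicitly in $\Kc'$), but the structure is identical.
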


\begin{proof}
Let $\Kc$ be a positive interger and $\Kc' = \Kc (2N+1)$ where $N$ is the number of states of $\poids$.
By Lemma \ref{lemma:nonBTP_nonlip}, there are $k+1$ words $w_0,\ldots, w_k$, initial states $q_0,\ldots, q_k$, with $(q_0,\gamma_0),\ldots ,(q_k,\gamma_k) \in t_{\textsl{init}}$, states $p_0,\ldots, p_k$ and $k+1$ runs:
\[q_j \xrightarrow{w_j\mid \alpha_j} p_j \quad \text{for all }j\in \{0, \ldots, k\},\]
such that for all $j \neq j'$, $\cayley{\gamma_{j} \alpha_j}{\gamma_{j'} \alpha_{j'}} > \Kc'(\dist(w_j,w_{j'})+1)$.
These $k+1$ runs can be completed into accepting runs ending in accepting states $r_0,\ldots, r_k$ labeled by $w_0w_0',\ldots, w_kw_k'$ with weights $\alpha_0\alpha_0',\ldots, \alpha_k\alpha_k'$ such that for all $0\leq j \leq k$,
$|w_j'| \leq N$. Consider $(r_0,\beta_0),\ldots ,(r_k,\beta_k) \in t_{\textsl{final}}$.

For all $j \neq j'$,
\begin{align*}
\cayley{\gamma_j \alpha_j \alpha'_j \beta_j}{\gamma_{j'} \alpha_{j'} \alpha'_{j'} \beta_{j'}} & > \Kc'(\dist(w_j,w_{j'})+1) + 2N\base + 2\base \\
& \geq \Kc(\dist(w_j,w_{j'}) + 2N + 1) \\
& \geq \Kc(\dist(w_jw_j',w_{j'}w_{j'}')+1)
\end{align*}
\qed\end{proof}

\section{Proofs of Section~\ref{sec:result}: Main result}
\label{appendix_result}

We now prove that the branching twinning property of order~$k$ implies the $k$-sequentiality (Proposition \ref{prop:BTP-kseq}). It is developed in the rest of this Section.

% !TEX root =  main.tex

\subsection*{Proof of Proposition~\ref{prop:BTP-kseq}}

We present here the elements missing from the sketch of proof.
First, we expose the subset construction with delays, together with the proofs of its properties, adapted from \cite{BealC02} to fit our settings.
Then, we present in details the proof of the properties \prop{1} and \prop{2}.
Finally, we give the formal construction of the sequential weighted automata $\overline{V}_i$, $1 \leq i \leq k$, whose union is equivalent to $W$.

\subsubsection*{Subset construction with delays.}

Given $W = (Q, t_{\textsl{init}}, t_{\textsl{final}}, T)$ a weighted automaton, we construct an equivalent infinite sequential automaton $D_W = (Q', t'_{\textsl{init}}, t'_{\textsl{final}}, T')$ as follows.
The states of $D_W$ are the subsets of $Q \times \group$.
For every $S \in Q'$ and $a \in A$, we define the single transition starting from $S$, and labelled by the input $a$, as follows.
\begin{enumerate}
\item
First, the elements of $S$ are updated with respect to $W$.
Let
\[
S' =  \{(q,\beta \gamma) | \exists p \in Q \textup{ s.t. }(p,\beta) \in S, (p,a,\gamma,q) \in T\} \subseteq Q \times \group.
\]
\item
Then a reference pair $(q,\alpha) \in S'$ is picked, and $S'$ is normalized with respect to it.
Let $(p,\alpha) = \selec{S'}  \in Q \times \group$, where $\sele$ is a choice function.
Let
\[
S'' := \{(q,\alpha^{-1} \beta) | (q,\beta) \in S'\}  \subseteq Q \times \group.
\]
Note that, in particular, $(q,\unit) \in S''$.
\end{enumerate}
Then $(S,a,\alpha,S'') \in T'$.

For every state $S \in Q'$, let $W_S$ denote the weighted automaton obtained by replacing the initial output relation of $W$ with $S$, i.e., $W_S = (Q,S,t_{final},T)$.
Note that $W_{t_{\textsl{init}}} = W$.
Before defining the initial relation and the final relation of $D_W$, let us prove two lemmas that follow from the definition of $T'$.

\begin{lemma}\label{lemma_norm}
For every accessible state $S$ of $D_W$ that is not the initial state, there exists $q \in Q$ such that $(q,\unit) \in S$.
\end{lemma}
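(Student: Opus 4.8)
The plan is to read the statement straight off the definition of the transition relation $T'$ of $D_W$, exploiting the normalisation step. First I would isolate the following invariant: whenever $(S,a,\alpha,S'') \in T'$, the state $S''$ contains a pair of the form $(p,\unit)$ with $p \in Q$. This is immediate from the construction. Indeed, $S''$ is obtained from the intermediate set $S'$ by left-multiplying every second component by $\alpha^{-1}$, where $(p,\alpha) = \selec{S'}$ is the chosen reference pair; applying this translation to the reference pair itself yields $(p,\alpha^{-1}\alpha) = (p,\unit) \in S''$, and $p \in Q$ because the choice function $\sele$ returns an element of $S' \subseteq Q \times \group$.

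Next I would conclude as follows. Let $S$ be an accessible state of $D_W$ that is not the initial state. Since it is accessible, there is a run of $D_W$ from the initial state to $S$, and since $S$ is not the initial state this run is non-empty, hence ends with some transition $(S_0,a,\alpha,S) \in T'$. By the invariant, $S$ contains $(p,\unit)$ for the reference state $p$ selected along that transition, which proves the lemma.

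The only point requiring a word of care — and there is no genuine obstacle here — is the degenerate situation where the intermediate set $S'$ computed along a transition is empty, so that no reference pair can be selected. This produces the sink state $\emptyset$, for which the statement visibly fails; but $\emptyset$ never lies in the trimmed part of $D_W$ that is used in the sequel (no accepting run passes through it), so it is safe either to exclude it by convention or to restrict the claim to the trimmed part, as we implicitly do throughout. Apart from this, the argument is a direct unfolding of the definition of $T'$.
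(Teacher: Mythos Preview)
Your proposal is correct and follows exactly the paper's approach: the paper's proof is the single sentence ``This follows immediately from the definition of $T'$,'' relying on the observation (already made in the construction of $D_W$) that the normalisation step guarantees $(q,\unit) \in S''$. Your argument is simply a careful unfolding of this, and your remark about the degenerate empty case is a reasonable aside but not needed given the standing trimness assumption.
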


\begin{proof}
This follows immediately from the definition of $T'$.
\end{proof}

\begin{lemma}\label{lemma_witness}
Let $S \subset Q \times \group$ and let $u \in A^*$, and consider the run $S \xrightarrow{u|\alpha} S'$ of $D$.
Then
 \[
 S' = \{ (q,\beta) | \textup{ there exists a path } \rho: \xrightarrow{\gamma} p \xrightarrow{u|\delta} q \textup{ in } W_S  \textup{ s.t. } \alpha \beta = \gamma \delta \}.
 \]
\end{lemma}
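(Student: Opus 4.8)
The plan is to argue by induction on the length of $u$; this is the group-valued analogue of the corresponding statement for transducers in~\cite{BealC02}.

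For the base case $u = \varepsilon$, the run $S \xrightarrow{\varepsilon \mid \alpha} S'$ of $D_W$ consists of no transition, so $S' = S$ and $\alpha = \unit$. On the other side, a path $\xrightarrow{\gamma} p \xrightarrow{\varepsilon \mid \delta} q$ in $W_S$ uses the empty run of $W$, which forces $p = q$, $\delta = \unit$, hence $\gamma\delta = \gamma$; the condition $\alpha\beta = \gamma\delta$ then becomes $\beta = \gamma$. So the right-hand side equals $\{(q,\gamma) \mid (q,\gamma) \in S\} = S = S'$.

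For the inductive step I would write $u = u'a$ with $a \in A$ and decompose the run of $D_W$ as $S \xrightarrow{u' \mid \alpha'} S_1 \xrightarrow{a \mid \alpha_0} S'$, so that $\alpha = \alpha'\alpha_0$. By the induction hypothesis, $(p_1,\beta_1) \in S_1$ iff there are $(p,\gamma) \in S$ and a run $p \xrightarrow{u' \mid \delta_1} p_1$ of $W$ with $\alpha'\beta_1 = \gamma\delta_1$. Unfolding the definition of the transition $S_1 \xrightarrow{a\mid\alpha_0} S'$, one forms first
\[
S_1^{(1)} = \{(q, \beta_1\gamma') \mid \exists p_1,\ (p_1,\beta_1) \in S_1,\ (p_1,a,\gamma',q) \in T\},
\]
then sets $(p_0,\alpha_0) = \selec{S_1^{(1)}}$ and $S' = \{(q, \alpha_0^{-1}\beta) \mid (q,\beta) \in S_1^{(1)}\}$ (if $S_1^{(1)} = \emptyset$ both sides of the claim are empty, so we may assume it is nonempty). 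Since every run of $W$ on $u = u'a$ factors uniquely as a run on $u'$ followed by an $a$-transition, substituting the description of $S_1$ yields: $(q,\beta) \in S'$ iff $\beta = \alpha_0^{-1}\beta_1\gamma'$ for some $(p,\gamma) \in S$, run $p \xrightarrow{u'\mid\delta_1} p_1$ and transition $(p_1,a,\gamma',q) \in T$ with $\alpha'\beta_1 = \gamma\delta_1$.

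It then remains to carry out the algebra. From $\alpha'\beta_1 = \gamma\delta_1$ we get $\beta_1 = (\alpha')^{-1}\gamma\delta_1$, so with $\delta := \delta_1\gamma'$ (the output of the combined run $p \xrightarrow{u\mid\delta} q$ of $W$),
\[
\alpha_0^{-1}\beta_1\gamma' = \alpha_0^{-1}(\alpha')^{-1}\gamma\delta_1\gamma' = (\alpha'\alpha_0)^{-1}\gamma\delta = \alpha^{-1}\gamma\delta .
\]
Hence $S' = \{(q,\alpha^{-1}\gamma\delta) \mid \exists (p,\gamma) \in S,\ p \xrightarrow{u\mid\delta} q \text{ in } W\}$, which is exactly $\{(q,\beta) \mid \exists \text{ a path } \xrightarrow{\gamma} p \xrightarrow{u\mid\delta} q \text{ in } W_S \text{ with } \alpha\beta = \gamma\delta\}$, as claimed. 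I expect no genuine obstacle here: the only thing to be careful about is keeping the two inclusions synchronized while carrying the choice function $\sele$ and the group inverses through the computation, which is precisely what the subset-with-delays construction is engineered to make work, each state recording the delay $\delay{\alpha}{\gamma\delta}$ of every surviving run relative to the reference output $\alpha$.
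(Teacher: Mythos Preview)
Your proposal is correct and follows essentially the same approach as the paper: induction on the length of $u$, with a trivial base case and an inductive step that unfolds one transition of $D_W$, applies the induction hypothesis to $S_1$, and pushes the group inverses through to recover the condition $\alpha\beta=\gamma\delta$. The paper's proof is slightly terser (it writes the chain of set equalities directly rather than isolating the algebraic computation), but the structure and ideas are identical.
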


\begin{proof}

The proof is done by induction over the length of the input word $u$.
If $u = \epsilon$, then $S = S'$, $\alpha = \unit$, and the result follows immediately.
Now suppose that $u = va$ for some $v \in A^*$ and $a \in A$, and that the lemma is true for the input word $v$.
Note that the initial run can be split as follows.
\[
S \xrightarrow{v|\alpha_1} S_1 \xrightarrow{a|\alpha_2} S'.
\]
By induction hypothesis,
 \[
 S_1 = \{ (q_1,\beta_1) | \textup{ there exists a path } \rho: \xrightarrow{\gamma} p \xrightarrow{v|\delta_1} q_1 \textup{ in } W_S  \textup{ s.t. } \alpha_1 \beta_1 = \gamma \delta_1 \}.
 \]
 Moreover, by definition of the transition relation of $D_W$, 
 \[
 S' =  \{(q,\alpha_2^{-1} \beta_1 \gamma_2) | \exists q_1 \in Q \textup{ s.t. }(q_1,\beta_1) \in S_1, (q_1,a,\gamma_2,q) \in T\} \subseteq Q \times \group.
 \]
 The desired result follows, since we obtain, by applying those two equalities,
 \[
 \begin{array}{lll}
 S' & = &  \{(q,\alpha_2^{-1} \beta_1 \gamma_2) | \exists q_1 \in Q \textup{ s.t. }(p_1,\beta_1) \in S_1, (q_1,a,\gamma_2,q) \in T\}\\
 & = &  \{(q,\alpha_2^{-1} \beta_1 \gamma_2) | \exists \rho: \xrightarrow{\gamma} p \xrightarrow{v|\delta_1} q_1 \textup{ in } W_S  \textup{ s.t. } \alpha_1 \beta_1 = \gamma \delta_1 \textup{ and } (q_1,a,\gamma_2,q) \in T\}\\
  & = &  \{(q,\alpha_2^{-1} \beta_1 \gamma_2) | \exists \rho: \xrightarrow{\gamma} p \xrightarrow{u|\delta} q \textup{ in } W_S  \textup{ s.t. } \alpha_1 \beta_1 \gamma_2 = \gamma \delta \}\\
 & = &  \{(q,\beta) | \exists \rho: \xrightarrow{\gamma} p \xrightarrow{u|\delta} q \textup{ in } W_S  \textup{ s.t. } \alpha_1 \alpha_2 \beta = \gamma \delta\}\\
 & = &  \{(q,\beta) | \exists \rho: \xrightarrow{\gamma} p \xrightarrow{u|\delta} q \textup{ in } W_S  \textup{ s.t. } \alpha \beta = \gamma \delta\}.
 \end{array}
 \]
 \end{proof}

We can now define the initial relation and the final relation of $D_W$.
Since Lemma \ref{lemma_witness} guarantees that, starting from state $S$, $D_W$ accurately simulates the weighted automaton $W_S$, and $W_{t_{\textsl{init}}} = W$,  by setting $t'_{\textsl{init}} = (t_{\textsl{init}},\unit)$, and by setting 
\[
t'_{\textsl{final}} = \{ (S,\alpha \beta) | \textup{ there exists } q \in Q \textup{ s.t. } (q,\alpha) \in S \textup{ and } (q,\beta) \in t_{\textsl{final}} \},
\]
we ensure that $D_W$ is equivalent to $W$.

%\begin{proposition}
%If $W$ satisfies the twinning property, the trim part of $\mathcal{P}(W)$ is finite.
%\end{proposition}
%
%\begin{corollary}
%If $W$ satisfies the $\BTP{1}$, there exists a $1$-sequential ...
%\end{corollary}

\subsubsection*{Proof of \prop{1}}

We now demonstrate \prop{1}, using Lemma \ref{lemma_witness} and the fact that $W$ is $\ell$-valued.

\begin{lemma} \label{lemma_count}
Let $S$ be an accessible state of $D_W$.
Then $|S| \leq \ell |Q|$. 
\end{lemma}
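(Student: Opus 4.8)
The plan is to read off the shape of an accessible state of $D_W$ from Lemma~\ref{lemma_witness}, and then bound, state by state, how many delays can be attached to a fixed $q \in Q$, using that $W$ is trimmed and $\ell$-valued. Since $S$ is accessible, there is an input word $u$ and an output $\alpha \in \group$ such that $t_{\textsl{init}} \xrightarrow{u \mid \alpha} S$ in $D_W$ (for the initial state itself, take $u = \varepsilon$ and $\alpha = \unit$). Applying Lemma~\ref{lemma_witness} with starting state $t_{\textsl{init}}$, and recalling $W_{t_{\textsl{init}}} = W$, yields
\[
S = \{ (q,\beta) \mid \text{there is a run } \xrightarrow{\gamma} p \xrightarrow{u \mid \delta} q \text{ in } W \text{ with } \alpha\beta = \gamma\delta \}.
\]

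First I would fix a state $q \in Q$ and bound $|\{\beta \mid (q,\beta) \in S\}|$. Because $W$ is trimmed, $q$ lies on some accepting run, so there are a word $w$, a final state $r$, an element $(r,\theta) \in t_{\textsl{final}}$, and a run $q \xrightarrow{w \mid \zeta} r$; fix this data once, depending only on $q$. Now for any $\beta$ with $(q,\beta) \in S$, choose a witness prefix run $\xrightarrow{\gamma} p \xrightarrow{u \mid \delta} q$ (with $p$ initial and $\alpha\beta = \gamma\delta$). Prolonging it through $q \xrightarrow{w \mid \zeta} r \xrightarrow{\theta}$ gives an accepting run of $W$ on $uw$ with output $\gamma\delta\zeta\theta = \alpha\beta\zeta\theta$, hence $(uw, \alpha\beta\zeta\theta) \in \inter{W}$.

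Then I would observe that the map $\beta \mapsto \alpha\beta\zeta\theta$ is injective, being the composition of the left translation $x \mapsto \alpha x$ and the right translation $x \mapsto x\zeta\theta$, both bijective in a group. Therefore distinct values of $\beta$ with $(q,\beta) \in S$ yield distinct pairs of $\inter{W}$ sharing the first component $uw$; since $W$ is $\ell$-valued, at most $\ell$ such pairs exist, so $|\{\beta \mid (q,\beta) \in S\}| \leq \ell$. Summing over the $|Q|$ possible states, $|S| = \sum_{q \in Q} |\{\beta \mid (q,\beta) \in S\}| \leq \ell |Q|$, which is the claim.

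I do not anticipate a real obstacle: the only point needing care is that $\ell$-valuedness bounds the number of outputs for a \emph{fixed} input word, so the counting must be carried out against the single word $uw$ associated with the accessible state $S$ — but $u$ is determined by $S$ through Lemma~\ref{lemma_witness}, and $w$ depends only on $q$, so this is harmless. One could also streamline the write-up by invoking Lemma~\ref{lemma_norm} to note that $\unit$ is always among the delays (so the reference pair contributes trivially), though this is not needed for the bound.
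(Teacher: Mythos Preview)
Your proof is correct and follows essentially the same approach as the paper: fix a state $q$, use Lemma~\ref{lemma_witness} and trimness to extend each $(q,\beta)\in S$ to an accepting run of $W$ on the same input $uw$, then invoke $\ell$-valuedness and injectivity of group translations to bound the number of $\beta$'s by $\ell$, and sum over $Q$. The only cosmetic difference is that the paper enumerates the $\beta_j$ explicitly rather than phrasing the argument via the injective map $\beta\mapsto\alpha\beta\zeta\theta$.
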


\begin{proof}
For every $q \in Q$, let $n_q \in \mathbb{N}$ denote the number of times that $q$ appears in $S$, i.e., $n_q = |\{ \beta \in \group | (q,\beta) \in S \}|$.
Using the fact that $W$ is $\ell$-valued, we now prove that $n_q \leq \ell$, which implies the desired result, since $n = \Sigma_{q \in Q}n_q$.

Let $\beta_1, \ldots, \beta_{n_q}$ be an enumeration of the elements of $\{ \beta \in \group | (q,\beta) \in S \}$.
Since $S$ is accessible, there exists a run $t_{\textsl{init}} \xrightarrow{u|\alpha} S$ in $D_W$, and, by Lemma \ref{lemma_witness}, for every $0 \leq j \leq n_q$, since $(q,\beta_j) \in S$, there exists an initial run
\[
\rho_j: \xrightarrow{\gamma_j} p_j \xrightarrow{u|\delta_j} q
\]
of $W_{t_{\textsl{init}}} = W$ such that $\alpha \beta_j = \gamma_j \delta_j$.
Moreover, since $W$ is trim by supposition, it admits a final run
\[
\rho: q \xrightarrow{v|\delta} q_f \xrightarrow{\delta_f}.
\]
Then, for every $0 \leq j \leq n_q$, $(uv,\alpha \beta_j \delta \delta_f) = (uv,\gamma_j \delta_j \delta \delta_f) \in \inter{C}$.
Since the $\beta_j$ are distinct elements of the group $\group$, so are the elements $\alpha \beta_j \delta \delta_f \in \group$, hence, since $W$ is $\ell$-valued by supposition, $n_q \leq \ell$.
\end{proof}

\subsubsection*{Proof of \prop{2}}

In order to prove \prop{2}, we first introduce a new definition.

\begin{definition}
For every state $S \in Q'$, let the \textit{rank} of $S$, denoted by $\rank{S}$, be the minimal integer $k'$ such that $W_S$ satisfies the $\BTP{k'}$, where $W_S$ is the weighted automaton obtained by replacing the initial output relation of $W$ with $S$, i.e., $W_S = (Q,S,t_{final},T)$.
\end{definition}

\begin{lemma}\label{lemma_split}
Let $N_W = 2M_W|Q|^{\ell |Q|}$.
Let $S$ be an accessible state of $D_W$ that contains a pair $(q, \alpha)$ such that $|\alpha|>N_W$.
Then there exists a partition of $S$ into two subsets $S'$ and $S''$ such that $\rank{S'} + \rank{S''} \leq k$.
\end{lemma}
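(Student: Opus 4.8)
The plan is to prove the statement by contradiction on the ranks: assuming $\rank{S'} + \rank{S''} \geq k+1$, I would combine minimal counterexamples to the branching twinning property for $W_{S'}$ and for $W_{S''}$ into a single counterexample for $W$, where the partition $S = S' \sqcup S''$ is dictated by one \emph{delay-changing loop} extracted from the run of $D_W$ reaching $S$.

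\textbf{Building the split.} Fix a run $t_{\textsl{init}} \xrightarrow{u \mid \alpha_u} S$ of $D_W$. Since $|\alpha| > N_W > 0$, the state $S$ is not the initial state, so Lemma~\ref{lemma_norm} yields a pair $(p_0,\unit) \in S$, while by hypothesis $(q,\alpha) \in S$ with $|\alpha| > N_W$. By Lemma~\ref{lemma_count}, which is \prop{1}, we have $|S| \leq \ell|Q|$, and by Lemma~\ref{lemma_witness} each of the $|S|$ pairs of $S$ is realised by an initial run of $W$ over $u$; synchronising them gives a single run of $W^{|S|}$ over $u$. Applying Lemma~\ref{lem-split-run} to $W^{|S|}$, I decompose this run into a backbone reading fewer than $|Q|^{|S|} \leq |Q|^{\ell|Q|}$ letters, interleaved with loops of $W^{|S|}$. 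Along the backbone the delay between any fixed pair of coordinates changes by at most $2M_W$ per letter, and it has size at most $2M_W$ at the start; hence, for the coordinates corresponding to $(p_0,\unit)$ and $(q,\alpha)$, whose delay has size $|\alpha| > N_W = 2M_W|Q|^{\ell|Q|}$ at the end, at least one of the interleaved loops, say $\lambda^*$, must change that delay at the position where it occurs. I then set
\[
S'' = \{\, (q',\beta) \in S \mid \lambda^* \text{ changes the delay between the coordinates of } (p_0,\unit) \text{ and } (q',\beta) \,\}, \qquad S' = S \setminus S''.
\]
Then $(p_0,\unit) \in S'$ and $(q,\alpha) \in S''$, so both parts are nonempty; moreover, writing the delay between the coordinates of $(q',\beta')$ and $(q'',\beta'')$ as the inverse of the delay from $(p_0,\unit)$ to $(q',\beta')$ times the delay from $(p_0,\unit)$ to $(q'',\beta'')$, a direct computation in $\group$ shows that $\lambda^*$ changes the delay between the coordinates of \emph{every} $(q',\beta') \in S'$ and $(q'',\beta'') \in S''$.

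\textbf{Assembling the counterexample.} Suppose for a contradiction that $\rank{S'} + \rank{S''} \geq k+1$, and put $k' = \rank{S'}$, $k'' = \rank{S''}$. Both are at least $1$ (a trimmed automaton with an initial state never satisfies \BTP{0}) and at most $k$, since otherwise $W_{S'}$ or $W_{S''}$ would already violate \BTP{k} and, by prepending the run on $u$ (and invoking Lemma~\ref{lemma-BTP-loops} to absorb extra loops), $W$ would too. By minimality, fix a counterexample to \BTP{k'-1} for $W_{S'}$ --- that is, $k'$ runs with loops that are pairwise bad on their common prefixes --- and likewise $k''$ runs for $W_{S''}$. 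To each of these $k'+k''$ runs I prepend the corresponding initial run of $W$ over $u$ (Lemma~\ref{lemma_witness}), equipped with the backbone-and-loops decomposition above, padding with trivial $\varepsilon$-loops so that all runs carry the same number $m \geq k$ of loops (legitimate by Lemma~\ref{lemma-BTP-loops}). Since the prefix $u$ together with its loop decomposition is common to all these runs, and $\lambda^*$ is a loop of the full product $W^{|S|}$, all runs agree on that prefix and $\lambda^*$ sits at a level where every pair is still on a common branch. Hence a pair inside the $S'$-group is bad after $u$ (from the $W_{S'}$-counterexample), a pair inside the $S''$-group likewise, and a cross pair is bad inside $u$ via $\lambda^*$; so every pair is bad. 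Restricting to any $k+1$ of the $k'+k''$ runs then produces a counterexample to $\BTPb{k}$, hence to \BTP{k} by Lemma~\ref{lemma-BTP-loops}, contradicting the hypothesis on $W$. Therefore $\rank{S'} + \rank{S''} \leq k$, as claimed.

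\textbf{Main obstacle.} The crux is the requirement that $\lambda^*$ be a loop of the full product $W^{|S|}$, not merely of a pair automaton: only then can it be inserted, at a single level of the tree, as a loop read simultaneously by \emph{all} the runs of the combined structure, so that all cross pairs remain on a common branch there --- this is precisely where the branching nature of the twinning property, as opposed to the plain \TP{k} of~\cite{daviaud_2016}, is used, and it is also what pins down the bound $N_W = 2M_W|Q|^{\ell|Q|}$ (a backbone of at most $|Q|^{|S|} \le |Q|^{\ell|Q|}$ letters with a delay drift of $2M_W$ per letter). A secondary, routine, verification is that prepending the common prefix $u$ leaves intact all the ``bad pair'' relations of the two minimal counterexamples, which holds because the weights compose consistently and the common left factor $\alpha_u$ cancels in all delays, so the post-$u$ delays coincide with the initial delays of the $W_{S'}$- and $W_{S''}$-counterexamples.
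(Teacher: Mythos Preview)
Your proposal is correct and follows essentially the same approach as the paper: decompose the run of $D_W$ reaching $S$ in the product $W^{|S|}$ via Lemma~\ref{lem-split-run}, extract a synchronized loop that changes the delay between the reference coordinate $(p_0,\unit)$ and the large coordinate $(q,\alpha)$, split $S$ accordingly, and glue the minimal \BTP{}-counterexamples of $W_{S'}$ and $W_{S''}$ onto the common prefix $u$ to contradict the \BTP{k} for $W$. The only differences are cosmetic --- you argue by contradiction and then restrict the $k'+k''$ combined runs to $k+1$ of them (appealing to $\BTPb{k}$), whereas the paper directly builds an unsatisfied instance of $\BTP{r'+r''+1}$ and invokes the monotonicity of the \BTP{} --- and you are in fact more explicit than the paper on one point: the verification, via $\delay{b}{c}=\delay{a}{b}^{-1}\delay{a}{c}$, that the chosen loop $\lambda^*$ separates \emph{every} cross pair, which the paper states ``by definition of the partition'' without spelling out the computation.
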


\begin{proof}
By Lemma \ref{lemma_norm}, we know that there exists a pair $(q_1,\unit) \in S$. 
Moreover, by supposition, there exists a pair $(q_2,\alpha_2) \in S$ such that $|\alpha_2| > N_W$.

Let $(q_3,\alpha_3), \ldots, (q_m,\alpha_m)$ be an enumeration of the elements
of $S$ distinct from $(q_1,\unit)$ and $(q_2,\alpha_2)$.
By Lemma \ref{lemma_count}, $m \leq \ell |Q|$.
Since $S$ is accessible, there exists a run $t_{\textsl{init}} \xrightarrow{u|\alpha} S$ in $D_W$.
%, for some $u = a_1 \ldots a_n$.
Then, by Lemma \ref{lemma_witness}, for every $1 \leq j \leq m$, since $(q_j, \alpha_j) \in S$, there exist a run
%\[
%\rho_j: \xrightarrow{\alpha_{0.j}} q_{0.j} \xrightarrow{a_1|\alpha_{1.j}} q_{1.j} \xrightarrow{a_2|\alpha_{2.j}} \cdots \xrightarrow{a_n|\alpha_{n.j}} q_{n.j}
%\]
\[
\rho_j: \xrightarrow{\gamma_{0.j}} q_{0.j} \xrightarrow{u|\gamma_{j}}  q_{j}
\]
of $W$ over $u$ such that $\alpha \alpha_j = \gamma_{0.j} \gamma_{j}$.

The proof is now done in two steps.
First, we expose a decomposition of $u$ into three words $wv_{\inds}w'$ such that every run $\rho_j$ loops over $v_{\inds}$, and the delay between the outputs of the runs $\rho_1$ and $\rho_2$ is modified along $v_{\inds}$.
This allows us to define a partition $\{S', S''\}$ of $S$, splitting the elements $(q_j,\alpha_j)$ of $S$ depending on whether or not the delay between $\rho_1$ and $\rho_j$ changes along $v_{\inds}$.
Then, we prove that $\rank{S'} + \rank{S''} \leq k$, using the following idea.
Let $r' = \rank{S'}-1$, $r'' = \rank{S''}-1$.
By combining a witness of the non satisfaction of the $\BTP{r'}$ by $W_{S'}$ and a witness of the non satisfaction of the $\BTP{r''}$ by $W_{S''}$, we build a witness of the non satisfaction of the $\BTP{r' + r'' + 1}$ by $W_{S}$.
This implies that
\[
\rank{S'} + \rank{S''} - 1  = r' + r'' + 1 < k,
\]
and the desired result follows.

\begin{enumerate}
\item
By applying Lemma \ref{lem-split-run} to the product of $m$ copies of $W$, we obtain a subdivision $u_0v_1u_1 \dotsm v_{|Q|^m}u_{|Q|^m}$ of the word $u$ such that each run $\rho_j$ loops over each input $v_{s}$, and $|u_0 \dotsm u_{|Q|^m}| < |Q|^{m} \leq |Q|^{\ell |Q|}$.

Note that the distance between the outputs of $\rho_1$ and $\rho_2$ after reading the input $u$ is
\[
\cayley{\alpha_1}{\alpha_2} = \cayley{\unit}{\alpha_2}  > N_W = 2M_W|Q|^{\ell |Q|}.
\]
In order for the distance to increase by at least $N_W$, there has to exist an integer $1 \leq \inds \leq m$ such that the delay between the outputs of $\rho_1$ and $\rho_2$ changes along $v_{\inds}$, since $M_W$ is greater than or equal to the maximal size of the output of a transition of $W$, and $|u_0 \dotsm u_{m}| < |Q|^{\ell |Q|}$.
Let $w = u_0v_1u_1 \dotsm v_{\inds-1}u_{\inds-1}$, $w' = u_{\inds}v_{\inds+1}u_{\inds+1} \dotsm v_{|Q|^{m}}u_{|Q|^{m}}$, and for every $1 \leq j \leq m$, consider the following decomposition of the run $\rho_j$:
\[
\rho_j : \xrightarrow{\gamma_{0,j}} q_{0,j} \xrightarrow{w|\gamma_{1,j}} p_j \xrightarrow{v_{\inds}|\gamma_{2,j}} p_j \xrightarrow{w'|\gamma_{3,j}} q_j.
\]
Let $S'$ be the set of pairs $(q_j,\alpha_j)$ corresponding to the indices $j$ such that the delay between $\rho_1$ and $\rho_j$ stays the same along $v_{\inds}$, i.e.,
\[
\delay{\gamma_{0,1} \gamma_{1,1}}{\gamma_{0,j} \gamma_{1,j}} = \delay{\gamma_{0,1} \gamma_{1,1}\gamma_{2,1}}{\gamma_{0,j} \gamma_{1,j}\gamma_{2,j}},
\]
Then, let $S'' = S \setminus S'$.
By definition of $v_{\inds}$, $S''$ is not empty since it contains $(q_2,\alpha_2)$, hence $\{S', S''\}$ is a partition of $S$.

\item

\newcommand{\state}{p}
\newcommand{\inpr}{w}
\newcommand{\outr}{\delta}
\newcommand{\inpl}{x}
\newcommand{\outl}{\zeta}

Let $r' = \rank{S'}-1$, $r'' = \rank{S''}-1$, and $r = r' + r''$.
By definition of the rank, there exists an unsatisfied instance of the \BTP{r'} over the weighted automaton $W_{S'}$.
Let $\phi_0, \ldots, \phi_{r'}$ be the runs of $W_{S'}$ forming this instance.
Similarly, there exists an unsatisfied instance of the \BTP{r''} over the weighted automaton $W_{S''}$.
Let $\phi_{r' + 1}, \ldots, \phi_{r + 1}$ be the runs of $W_{S''}$ forming this instance.
For every $1 \leq \indr \leq r+1$, we add loops over the empty word at the end of the run $\phi_{\indr}$ in order for it to contain exactly $r$ loops, yielding the run
\begin{tikzpicture}[->,>=stealth',shorten >=1pt,auto,node distance=2.8cm, scale=0.8]
    \tikzstyle{every state}=[fill=purple!10,text=black,minimum width=1cm,scale=0.9]
    \tikzstyle{every edge}=[draw=black,font=\small]
    \tikzstyle{initial}=[initial by arrow, initial where=left, initial text=]

    \node (I) at (-3.25,1.5) {$\phi'_{\indr}$ : };
    \node (A0) at (-1,1.5) {$\state_{0,\indr}$};
    \node (A1)  at (2.25,1.5) {$\state_{1,\indr}$};
 %   \node (A2)  at (5.5,1.5) {$\state_{2,\indr}$};
    \node (A)  at (5.75,1.5) {$\dotsm$};
    \node (A3)  at (9.25,1.5) {$\state_{r',\indr}.$};

    \path (I) edge %[decorate,decoration=snake]
    	 node {$\outr_{0,\indr}$} (A0);
    \path (A0) edge %[decorate,decoration=snake]
    	 node {$\inpr_{1,\indr} \mid \outr_{1,\indr}$} (A1);
    \path (A1) edge %[decorate,decoration=snake]
    	 node {$\inpr_{2,\indr} \mid \outr_{2,\indr}$} (A);
 %   \path (A2) edge %[decorate,decoration=snake]
 %  	 node {$\inpr_{3,\indr} \mid \outr_{3,\indr}$} (A);
    \path (A) edge %[decorate,decoration=snake]
    	 node {$\inpr_{r,\indr} \mid \outr_{r,\indr}$} (A3);

    \path (A1) edge [loop above] node {$\inpl_{1,\indr} \mid \outl_{1,\indr}$} (A1);
%    \path (A2) edge [loop above] node {$\inpl_{2,\indr} \mid \outl_{2,\indr}$} (A2);
    \path (A3) edge [loop above] node {$\inpl_{r,\indr} \mid \outl_{r,\indr}$} (A3);

  \end{tikzpicture}

Note that, since both $S'$ and $S''$ are subsets of $S$, each $\phi'_{\indr}$ can be seen as a run over the weighted automaton $W_S$.
By extending those runs on the left, we now construct an instance of the $\BTP{r+1}$ for $W$ that is not satisfied.
For every $0 \leq \indr \leq r$,  $(\state_{0,\indr}, \outr_{0,\indr}) \in S$, therefore there exists $1 \leq j_{\indr} \leq m$ such that $(\state_{0,\indr}, \outr_{0,\indr}) = (q_{j_{\indr}},\alpha_{j_{\indr}})$, hence we can compose the run $\rho_{j_{\indr}}$ with $\phi_{\indr}$, as follows.

\begin{tikzpicture}[->,>=stealth',shorten >=1pt,auto,node distance=2.8cm, scale=0.8]
    \tikzstyle{every state}=[fill=purple!10,text=black,minimum width=1cm,scale=0.9]
    \tikzstyle{every edge}=[draw=black,font=\small]
    \tikzstyle{initial}=[initial by arrow, initial where=left, initial text=]

    \node (I) at (-3.25,1.5) {$\psi_{\indr}$ : };
    
    \node (A0) at (-1,1.5) {$q_{0,{j_{\indr}}}$};
    \node (A1)  at (2.25,1.5) {$p_{j_{\indr}}$};
    \node (A2)  at (5.5,1.5) {$\state_{1,\indr}$};
    \node (A)  at (7.75,1.5) {$\dotsm$};
    \node (A3)  at (10,1.5) {$\state_{r,\indr}$,};

    \path (I) edge %[decorate,decoration=snake]
    	 node {$\gamma_{0,{j_{\indr}}}$} (A0);
    \path (A0) edge %[decorate,decoration=snake]
    	 node {$w \mid \gamma_{1,{j_{\indr}}}$} (A1);
    \path (A1) edge %[decorate,decoration=snake]
    	 node {$w_{\indr} \mid \gamma_{\indr}$} (A2);
    \path (A2) edge %[decorate,decoration=snake]
    	 node {$\inpr_{2,\indr} \mid \outr_{2,\indr}$} (A);
    \path (A) edge %[decorate,decoration=snake]
    	 node {$\inpr_{r,\indr} \mid \outr_{r,\indr}$} (A3);

    \path (A1) edge [loop above] node {$v_{\inds} \mid \gamma_{2,{j_{\indr}}}$} (A1);
    \path (A2) edge [loop above] node {$\inpl_{1,\indr} \mid \outl_{1,\indr}$} (A2);
    \path (A3) edge [loop above] node {$\inpl_{r,\indr} \mid \outl_{r,\indr}$} (A3);

  \end{tikzpicture}

where $w_{\indr} = v_{\inds} w' \inpr_{1,\indr}$ and $\gamma_{\indr} = \gamma_{2,j_{\indr}} \gamma_{3,j_{\indr}} \outr_{1,\indr}$.
Note that $\psi_{\indr}$ is a run of $W$.
In order to conclude the proof, we need to prove that the instance of the $\BTP{r+1}$ for $W$ formed by the runs $\psi_0, \ldots, \psi_r$ is not satisfied.
For every $0 \leq \indr < \indr' \leq r+1$, we now expose a loop differentiating $\psi_{\indr}$ and $\psi_{\indr'}$, i.e., a loop along which the delay between the outputs of $\psi_{\indr}$ and $\psi_{\indr'}$ changes, and that occurs on the part of the runs where the inputs are identical.
%Note that for any, 
%\[
%\delay{\alpha_{0,{j_{\indr}}} \gamma_{1,{j_{\indr}} } = 
%\]
%Therefore, the loops differentiating 
We consider three possibilities.
\begin{itemize}
\item
If $0 \leq \indr \leq r' < \indr' \leq r+1$, then $(q_{j_{\indr}},\alpha_{j_{\indr}}) \in S'$ and $(q_{j_{\indr'}},\alpha_{j_{\indr'}}) \in S''$, and, by definition of the partition $\{S',S''\}$ of $S$,
\[
\delay{\gamma_{0.{j_{\indr}}} \gamma_{1.{j_{\indr}}}}{\gamma_{0.{j_{\indr'}}} \gamma_{1.{j_{\indr'}}}} \neq \delay{\gamma_{0.{j_{\indr}}} \gamma_{1.{j_{\indr}}}\gamma_{2.{j_{\indr}}}}{\gamma_{0.j_{{\indr'}}} \gamma_{1.{j_{\indr'}}}\gamma_{2.{j_{\indr'}}}},
%\delay{\chi_{0.{j_{\indr}}} \gamma_{1.{j_{\indr}}}}{\chi_{0.{j_{\indr'}}} \gamma_{1.{j_{\indr'}}}} \neq \delay{\chi_{0.{j_{\indr}}} \gamma_{1.{j_{\indr}}} \gamma_{2.{j_{\indr}}}}{\chi_{0.{j_{\indr'}}} \gamma_{1.{j_{\indr'}}} \gamma_{2.{j_{\indr'}}}}.
\]
Therefore, the first loop of $\psi_{\indr}$ and $\psi_{\indr'}$ can be used to differentiate them.
\item
If $0 \leq \indr < \indr' \leq r'$, since the runs $\phi_0, \ldots, \phi_{r'}$ form a non satisfied instance of the $\BTP{r'}$ for $W_{S'}$, we can find a loop differentiating $\phi_{\indr}$ and $\phi_{\indr'}$, and use the corresponding loop to differentiate $\psi_{\indr}$ and $\psi_{\indr'}$.
\item
If $r' + 1 \leq \indr < \indr' \leq r+1$, since the runs $\phi_{r' +1}, \ldots, \phi_{r+1}$ form a non satisfied instance of the $\BTP{r''}$ for $W_{S''}$, we can find a loop differentiating $\phi_{\indr}$ and $\phi_{\indr'}$, and use the corresponding loop to differentiate $\psi_{\indr}$ and $\psi_{\indr'}$.
\end{itemize}
\end{enumerate}
\end{proof}

Finally, \prop{2} is obtained as a corollary.

\begin{corollary}
Let $S$ be an accessible state of $D_W$ that contains a pair $(q, \alpha)$ such that $|\alpha|>N_W$.
Then $W_S$ is $k$-sequential.
\end{corollary}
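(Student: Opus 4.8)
The plan is to derive the corollary as an immediate consequence of Lemma~\ref{lemma_split} and the induction hypothesis of Proposition~\ref{prop:BTP-kseq}; recall that we are inside the induction step, so the proposition is assumed to hold for every integer strictly smaller than $k$.

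First, apply Lemma~\ref{lemma_split} to the state $S$. Since $S$ contains a pair $(q,\alpha)$ with $|\alpha| > N_W$, the lemma yields a partition $\{S',S''\}$ of $S$ with $\rank{S'} + \rank{S''} \leq k$. Set $k' = \rank{S'}$ and $k'' = \rank{S''}$. Both blocks of the partition are non-empty and, $W$ being trimmed, each of $W_{S'}$ and $W_{S''}$ admits accepting runs, so $\BTP{0}$ fails for both and hence $k', k'' \geq 1$. Combined with $k' + k'' \leq k$ (and $k \geq 2$, since the base case $k=1$ is treated separately), this gives $1 \leq k' \leq k-1 < k$ and likewise $1 \leq k'' < k$.

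Next, by the definition of the rank, $W_{S'}$ satisfies the $\BTP{k'}$ and $W_{S''}$ satisfies the $\BTP{k''}$. As $k' < k$ and $k'' < k$, the induction hypothesis applies to each: $\inter{W_{S'}}$ is $k'$-sequential and $\inter{W_{S''}}$ is $k''$-sequential. It then remains to glue the two pieces. From the definition of the relation computed by a weighted automaton, every accepting run of $W_S$ is initialized by a pair belonging to $S = S' \cup S''$, so $\inter{W_S} = \inter{W_{S'}} \cup \inter{W_{S''}}$. The union of a $k'$-sequential automaton and a $k''$-sequential one is $(k'+k'')$-sequential, hence $k$-sequential because $k' + k'' \leq k$ (pad with sequential automata computing the empty relation if the inequality is strict). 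Therefore $\inter{W_S}$ is $k$-sequential, which is the claim.

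The substantive work is entirely contained in Lemma~\ref{lemma_split}; the corollary itself is a short assembly argument whose only two points deserving a word are that the ranks $k'$ and $k''$ are strictly below $k$ (so that the induction hypothesis may legitimately be invoked) and that the semantics of $W_S$ splits along the partition of the initial output relation — both of which are straightforward. I do not expect any real obstacle here, the difficulty having been absorbed by the proof of Lemma~\ref{lemma_split}.
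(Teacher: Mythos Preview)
Your proposal is correct and follows essentially the same route as the paper: apply Lemma~\ref{lemma_split}, observe that each part has rank at least $1$ and hence strictly less than $k$, invoke the induction hypothesis on each part, and take the union. You add slightly more justification (for $\rank{S'},\rank{S''}\geq 1$ and for the padding when $k'+k''<k$) than the paper does, but the argument is the same.
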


\begin{proof}
By the previous lemma, there exists a partition of $S$ into two subsets $S'$ and $S''$ such that $\rank{S'} + \rank{S''} \leq k$.
Note that $\rank{S'} \geq 1$ and $\rank{S''} \geq 1$, hence $\rank{S''} < k$ and $\rank{S'}<k$
Therefore, the induction hypothesis can be applied, proving that $W_{S'}$ is $\rank{S'}$-sequential, and $W_{S''}$ is $\rank{S''}$-sequential.
Finally, as $S$ is equal to the union of $S'$ and $S''$, $W_S$ is equivalent to the union of $W_{S'}$ and $W_{S''}$.
Therefore $W_S$ is $k$-sequential.
\end{proof}

\subsubsection*{Final construction }
We construct $k$ sequential weighted automata $\overline{V}_1, \ldots, \overline{V}_k$ whose union is equivalent to $W$. 
Let $U$ denote the set containing the accessible states $S$ of $D_W$ that contain only pairs $(q,\alpha)$ satisfying $|\alpha| \leq N_W$.
Moreover, let $U'$ be the set of states of $D_W$ accessible in one step from $U$, i.e.,
\[
U' = \{ S' | \exists S \in U, a \in A, \alpha \in \group \textup{ s.t. } (S,a,\alpha,S') \in T' \}.
\]
As there are only finitely many $\alpha \in \group$ such that $|\alpha| \leq N_W$, \prop{1} implies that $U$ is finite.
Note that this implies the finiteness of $U'$.
By $\prop{2}$, for every state $S \in U'$ that is not in $U$, $W_{S}$ can be expressed as the union of $k$ sequential weighted automata $V_i(S)$, with $1 \leq i \leq k$.
For every $1 \leq i \leq k$, let $\overline{V}_i$ be defined as the union of $D_W$ restricted to the states $S \in U$, and all the $V_i(S')$, for $S' \in U' \setminus U$, with the two following differences.
First, the only initial state of $\overline{V}_i$ is the initial state of $D_W$.
Second, for every transition $(S,a,\alpha,S')$ of $D_W$ between states $S \in U$ and $S' \in U' \setminus U$, we add a transition $(S,a, \alpha \alpha_{S'.i},q_{S'.i})$, where $\{(q_{S'.i},\alpha_{S'.i})\}$ is the initial relation of $V_{i}(S')$.

\begin{proposition}
The weighted automaton $W$ is equivalent to the union of the $\overline{V}_i$, $1 \leq i \leq k$.
\end{proposition}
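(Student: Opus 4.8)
The plan is to reduce the statement to a single-automaton picture and then argue by a case analysis on the unique run of a sequential machine. Since $D_W$ has already been shown equivalent to $W$, it suffices to prove $\inter{D_W} = \bigcup_{i=1}^{k} \inter{\overline{V}_i}$. Before doing so I would record two facts to be used throughout. First, each $\overline{V}_i$ is sequential: its states are the $U$-states of $D_W$ together with the states of the automata $V_i(S')$ for $S' \in U'\setminus U$ (pairwise disjoint after renaming), it has the single initial state $S_0$ of $D_W$, and from any state and letter at most one transition leaves, since this holds in $D_W$ and in each $V_i(S')$ and since each glued transition $(S,a,\alpha\alpha_{S'.i},q_{S'.i})$ merely replaces the unique transition $(S,a,\alpha,S')$ of $D_W$; hence $\overline{V}_i$ is deterministic and admits at most one run on any input. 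Second, for every accessible state $S$ of $D_W$, the automaton $D_W[S]$ obtained by declaring $S$ initial with output $\unit$ is equivalent to $W_S$; this is exactly the content of Lemma~\ref{lemma_witness} together with the definition of $t'_{\textsl{final}}$, used above to prove that $D_W$ itself is equivalent to $W$. Applied to $S'\in U'\setminus U$ (which is accessible and contains a pair of size $>N_W$), the corollary of \prop{2} gives $W_{S'}=\bigcup_{i=1}^{k}V_i(S')$, hence $\inter{D_W[S']}=\bigcup_{i=1}^{k}\inter{V_i(S')}$.

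For the inclusion $\inter{D_W}\subseteq\bigcup_i\inter{\overline{V}_i}$, I would take $(w,\beta)\in\inter{D_W}$ and look at the unique accepting run of $D_W$ on $w$. If it never leaves $U$ it is literally an accepting run of every $\overline{V}_i$ (which copies $D_W$ on $U$, final relation included), so $(w,\beta)\in\inter{\overline{V}_1}$. Otherwise, let $u$ be the longest prefix along which the run stays in $U$ and write $w=uav$: after reading $u$ the run sits in some $S\in U$, reading $a$ it enters some $S'\notin U$, hence $S'\in U'\setminus U$. Splitting the run as $S_0\xrightarrow{u|\alpha_u}S\xrightarrow{a|\alpha_a}S'\xrightarrow{v|\alpha_v}S_f$ with $(S_f,\delta)\in t'_{\textsl{final}}$ and $\beta=\alpha_u\alpha_a\alpha_v\delta$, the suffix $S'\xrightarrow{v|\alpha_v}S_f$ is an accepting run of $D_W[S']$ of total output $\alpha_v\delta$, so $(v,\alpha_v\delta)\in\inter{D_W[S']}=\bigcup_i\inter{V_i(S')}$; picking an index $i$ realizing it yields an accepting run of $V_i(S')$ from $q_{S'.i}$ (initial output $\alpha_{S'.i}$) reading $v$ with transition output $\sigma$ and final output $\tau$, with $\alpha_{S'.i}\sigma\tau=\alpha_v\delta$. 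Then the run $S_0\xrightarrow{u|\alpha_u}S\xrightarrow{a|\alpha_a\alpha_{S'.i}}q_{S'.i}\xrightarrow{v|\sigma}q'$ of $\overline{V}_i$, using the glued transition, is accepting with total output $\alpha_u(\alpha_a\alpha_{S'.i})\sigma\tau=\alpha_u\alpha_a\alpha_v\delta=\beta$. The reverse inclusion is the same argument run backwards: given $(w,\beta)\in\inter{\overline{V}_i}$, its unique accepting run either stays among the $U$-states, hence is an accepting run of $D_W$, or enters some $V_i(S')$ through a glued transition $(S,a,\alpha_a\alpha_{S'.i},q_{S'.i})$ at a decomposition $w=uav$; then $(v,\alpha_{S'.i}\sigma\tau)\in\inter{V_i(S')}\subseteq\inter{D_W[S']}$, and by determinism of $D_W[S']$ the suffix $S'\xrightarrow{v|\alpha_v}S_f$ of the corresponding $D_W$-behaviour satisfies $\alpha_v\delta=\alpha_{S'.i}\sigma\tau$, so splicing it after the genuine $D_W$-transition $(S,a,\alpha_a,S')$ — which exists, since it is precisely the one the glued transition replaced — reconstructs an accepting run of $D_W$ on $w$ of output $\alpha_u\alpha_a\alpha_v\delta=\beta$.

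The argument is in the end a bookkeeping exercise, so the main obstacle is not conceptual but notational: one must keep the group elements straight across the splice point, verifying that the output $\alpha_a\alpha_{S'.i}$ of the glued transition factors as the reference output $\alpha_a$ of the corresponding $D_W$-transition times the initial output $\alpha_{S'.i}$ of $V_i(S')$, and using that $D_W[S']$ is deterministic so that the suffix run there is unique and the two halves glue back consistently. Two hygiene points also deserve a line: the copy of $D_W$ restricted to $U$ and the various $V_i(S')$ must be taken on pairwise disjoint state sets (renaming), and one should observe that once $\overline{V}_i$ enters a block $V_i(S')$ it never returns to $U$ or to another block — which is exactly why only the finitely many states of $U'\setminus U$, and not the infinitely many ``big'' states of $D_W$, ever need to be replaced, making each $\overline{V}_i$ a genuine finite sequential automaton and yielding Proposition~\ref{prop:BTP-kseq}.
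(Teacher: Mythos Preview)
Your proof is correct and follows essentially the same approach as the paper: reduce to showing $\inter{D_W}=\bigcup_i\inter{\overline{V}_i}$, split the unique run of $D_W$ at the first exit from $U$, and use that $W_{S'}$ (equivalently, your $D_W[S']$) is the union of the $V_i(S')$ to transfer the suffix in both directions. You are in fact more explicit than the paper, which only writes out the forward inclusion and dispatches the converse with ``similar arguments''; your added hygiene remarks on disjointness of state sets and non-return from the $V_i(S')$ blocks are correct and worth keeping.
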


\begin{proof}
We prove that $D_W$ is equivalent to the union of the $\overline{V}_i$, $1 \leq i \leq k$, which implies the desired result, since $W$ is equivalent to $D_W$.

First, we show that the relation defined by $D_W$ is included into the union of the $\inter{\overline{V}_i}$.
Let $\rho: \xrightarrow{} S_0 \xrightarrow{u|\alpha} S_f \xrightarrow{\beta}$ be an accepting run of $D_W$.
We now expose an integer $i \in \{ 1, \ldots, k \}$ such that the output of the run of $\overline{V}_i$ over the input $u$ is $\alpha \beta$.
If all the states visited by $\rho$ are in $U$, $\rho$ is present in each $\overline{V}_i$, and we are done.
Otherwise, let us split $\rho$ as follows.
\[
\rho: \xrightarrow{} S_0 \xrightarrow{u_1| \alpha_1} S \xrightarrow{a|\alpha'} S' \xrightarrow{u_2|\alpha_2} S_f \xrightarrow{\beta},
\] 
where $S'$ is the first state encountered along $\rho$ that is not in $U$.
Then $S' \in U'$.
Moreover, by Lemma \ref{lemma_witness}, the definition of the final relation of $D_W$, and the fact that $W_{S'}$ is equivalent to the union of the $V_i(S')$, there exists $1 \leq i \leq k$ and a run 
\[
\rho': \xrightarrow{\gamma_0} q_0 \xrightarrow{u_2 | \gamma_1} q_f \xrightarrow{\gamma_2}
\]
in $V_i(S')$ such that $\gamma_0 \gamma_1 \gamma_2 = \alpha_2 \beta$.
Then the run
\[
\xrightarrow{} S_0 \xrightarrow{u_1| \alpha_1} S \xrightarrow{a|\alpha' \gamma_0} q_0 \xrightarrow{u_2|\gamma_1} q_f \xrightarrow{\gamma_2}
\]
over the input $u$ is in $\overline{V}_i$, and the associated output is $\alpha_1 \alpha' \gamma_0 \gamma_1 \gamma_2 = \alpha \beta$, which proves the desired result.

Conversely, we can prove, using similar arguments, that for every $1 \leq i \leq k$, $\inter{\overline{V}_i}$ is included into $\inter{D_W}$, which concludes the proof.
\end{proof}

\section{Proofs of Section~\ref{sec:cra}: Cost register automata with independent registers}
\label{appendix_cra}
% !TEX root = main.tex
%

\paragraph*{Proof of Proposition~\ref{prop:kseq-crak-equivalence}.}

%\note{Attention, il faut faire une remarque sur la fonction d'entrée: même si l'automate est séquentiel, il peut y avoir un poids en entrée (contrairement aux cas des CRA) - soit on le passe sous le tapis, soit on dit qu'on met un état spécial et qu'on met ce poids dans la première transition (technique classique)}

\begin{proof}
  From a $k$-sequential weighted automaton $\cup_{i \in \{ 1, \ldots, k \}} W_i$, we can build a CRA $C_i$ with $1$ register for each of the $W_i$. We obtain a CRA with $k$ independent registers by making the product of those $C_i$.
  From a CRA $C$ with $k$ independent registers $\{ X_i \}_{i \in \{ 1, \ldots, k \}}$, for each $i$ we can produce a trim projection of $C$ on the register $X_i$. Each of these $1$-register machines can be expressed as a weighted automaton, and their union is a $k$-sequential weighted automaton.
\qed\end{proof}

\section{Proofs of Section~\ref{sec:transducers}: The case of transducers}
\label{appendix_trans}
% !TEX root = main.tex
%

\paragraph*{Notations.}

Let $\monoid=(M,\otimes,\unit)$ be a monoid.
Given $O,O'\subseteq M$, $O \otimes O'$
(or simply $OO'$) is the set $\{\alpha \beta \mid \alpha\in
O,\ \beta\in O'\}$, $O^k$ denotes the set $\underbrace{O O\dotsm
O}_{k \text{ times}}$, $O^{<k}=\cup_{0 \leq i < k} O^i$ and $O^{\leq
k} = O^{<k}\cup O^k$.

Moreover, if $\monoid$ is a group and $O\subseteq M$, $O^{-1}$ denotes the set
$\{\alpha^{-1} \mid \alpha \in O\}$.

\paragraph*{Proof of Proposition \ref{proposition:positivite}.}

Let us sketch the main ideas.
  Consider a cost register automaton $\reg$ over $\free$ that computes a relation in
  $A^*\times B^*$. One can prove that there is a bound $N'$ such that along
  the runs of $\reg$, the values stored in the registers always belong
  to $\alphab^*(\alphab \cup \alphab^{-1})^{\leq N'}$.
 This
  intuitively relies on the fact that for every run that
  can be completed into
  an accepting run, there exists a ``short'' completion, and this completion
  should lead to a weight in $\alphab^*$.
  At anytime during a computation, the values stored in registers
are thus of the form $\alpha_1\alpha_2$ with $\alpha_1
  \in \alphab^*$ and $\alpha_2 \in (\alphab \cup \alphab^{-1})^{\leq
    N'}$.
 For a given register $X$, the idea is then to associate with $X$
 the shortest $\alpha_1$ satisfying
  these conditions, and to store the value $\alpha_2$ in the states of the
  automaton. This ensures that every continuation of the computation
  will be compatible with the value $\alpha_1$ already computed. We use
  here the fact that the weights are elements of the free group in order
  to prove the existence of a ``shortest'' $\alpha_1$.
  This construction preserves the fact that $C$ uses $k$ independent registers.

Let us give a formal proof now.
Let $(Q, q_{init}, \Reg, \delta, \mu)$ denote a register automaton over $\free = (B\cup B^{-1})^*$ computing a relation $F \subseteq A^* \times B^*$.
The Cayley distance is defined in $(\free, B)$.

Let $N=|Q|m+s$ where:
\begin{itemize}
\item $m$ is the maximum of the set:
\begin{align*}
\{|\alpha| \mid & \delta(q,a)=(p,h), h(Y)=(X,\alpha), q,p\in Q, a\in \alphabet, X, Y \in \Reg \}
\end{align*}
\item $s$ is the maximum of the set:
$$\{|\alpha| \mid (q,X,\alpha) \in \mu, X \in \Reg, q\in Q\}$$
\end{itemize}

We construct now an equivalent register automaton $\reg'$ over $\alphab^*$.
Set $\mathcal{G}$ the set of functions $\Reg \to (\alphab \cup \alphab^{-1})^{\leq N}$.

The set of states of $\reg'$ is $Q' = Q\times \mathcal{G}$. The initial state is $(q_{init},r_{init})$ where $r_{init}$ is the function that associates each register to $\unit$ and the set of registers is $\Reg$.

For $\alpha \in \alphab^* (\alphab \cup \alphab^{-1})^{\leq N}$, let us denote by $\alpha_1$ and $\alpha_2$ the two elements such that $\alpha=\alpha_1 \alpha_2$ and $\alpha_1$ is the shortest word in $\alphab^*$ such that $\alpha_2 \in (\alphab \cup \alphab^{-1})^{\leq N}$. Remark that $\alpha_1$ and $\alpha_2$ always exist in this case.

The transition function $\delta'$ is defined in the following way: given $q\in Q$, $a\in \alphabet$, let $(p,g) = \delta(q,a)$. We set:
$\delta'((q,r),a)=((p,t),h)$ where $h(Y)=(X,(r(X)\alpha)_1)$, $t(Y)=(r(X)\alpha)_2$ for $X,Y \in \Reg$ such that $g(Y)=(X,\alpha)$, if $r(X)\alpha \in \alphab^* (\alphab \cup \alphab^{-1})^{\leq N}$. If $r(X)\alpha \in \alphab^* (\alphab \cup \alphab^{-1})^{\leq N}$ (we will see that it is never the case for accessible, co-accessible states), then we set: $\delta'((q,r),a)=((p,t),h)$ where $h(Y)=(X,r(X)\alpha)$, $t(Y)=\unit$ for $X,Y \in \Reg$ such that $g(Y)=(X,\alpha)$.

The output relation $\mu'$ is defined by the triplets $((q,r),X,r(X)\alpha)$ for all $(q,X,\alpha) \in \mu$.

\bigskip

First, it is easy to check that $\reg$ and $\reg'$ compute the same relation. It relies on the following lemma:

\begin{lemma}
\label{lemma:induction}
For all words $w$, there is a run in $\reg$ on $w$ from $(q_{init},\nu_{init})$ to some $(q,\nu)$ if and only if there is a run in $\reg'$ on $w$ from $((q_{init},r_{init}),\nu_{init})$ to $((q,r),\sigma)$ such that for all registers $X$, $\nu(X)=\sigma(X)r(X)$.
\end{lemma}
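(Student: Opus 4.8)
The plan is to prove the equivalence by induction on $|w|$, simultaneously establishing the slightly stronger statement that the run of $\reg'$ can be chosen so that the first component of each of its states coincides with the corresponding state of the run of $\reg$. In particular the runs of $\reg$ on $w$ and the runs of $\reg'$ on $w$ will be in bijection, so that the whole content of the lemma is the invariant $\nu(X) = \sigma(X)\,r(X)$ on register contents. For the base case $w = \varepsilon$, both sides reduce to the single initial configuration, namely $(q_{init},\nu_{init})$ for $\reg$ and $((q_{init},r_{init}),\nu_{init})$ for $\reg'$; since $\nu_{init}(X) = r_{init}(X) = \unit$ for every $X \in \Reg$, the identity $\nu_{init}(X) = \nu_{init}(X)\,r_{init}(X)$ holds.

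For the inductive step I would write $w = w'a$ with $a \in \alphabet$ and assume the statement for $w'$. A run of $\reg$ on $w$ decomposes as a run on $w'$ reaching $(q',\nu')$ followed by the transition given by $\delta(q',a) = (q,g)$, so that $\nu(Y) = \nu'(X)\alpha$ whenever $g(Y) = (X,\alpha)$. The induction hypothesis supplies a run of $\reg'$ on $w'$ reaching $((q',r'),\sigma')$ with $\nu'(Z) = \sigma'(Z)\,r'(Z)$ for all $Z$. Applying the transition $\delta'((q',r'),a)$ — which exists since $\delta$ is total, and whose target state has first component $q$ by construction — one reaches $((q,t),\sigma)$, and for $g(Y) = (X,\alpha)$ the defining equations of $\delta'$ give either $(\sigma(Y),t(Y)) = \bigl(\sigma'(X)\,(r'(X)\alpha)_1,\ (r'(X)\alpha)_2\bigr)$ in the main branch or $(\sigma(Y),t(Y)) = \bigl(\sigma'(X)\,r'(X)\alpha,\ \unit\bigr)$ in the degenerate branch; in both cases $\sigma(Y)\,t(Y) = \sigma'(X)\,r'(X)\,\alpha = \nu'(X)\,\alpha = \nu(Y)$, which is exactly the required invariant. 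This one-line computation is the crux of the step. The converse implication is the same argument read backwards: a run of $\reg'$ on $w$ splits at its last transition, the induction hypothesis yields the matching run of $\reg$ on $w'$, and since that last transition of $\reg'$ was built from a transition $\delta(q',a) = (q,g)$ of $\reg$, extending the $\reg$-run along $g$ reproduces the same equalities and hence the invariant.

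The proof is essentially routine and presents no real obstacle; the points deserving attention are bookkeeping ones. First, one must note that the decomposition $\beta = \beta_1\beta_2$ used to define $\delta'$ is well defined on $\alphab^*(\alphab\cup\alphab^{-1})^{\leq N}$, which rests on the uniqueness of reduced words in $\free$ already invoked in the surrounding construction. Second, one should keep both branches of $\delta'$ in play, while observing — as a separate consequence of the choice $N = |Q|m + s$ together with the existence of short accepting completions from trim configurations — that the degenerate branch is never taken on configurations reachable inside an accepting run, so that restricting $\reg'$ to its trim part indeed yields a register automaton over $\alphab^*$. Neither of these remarks affects the inductive argument above, and the equivalence of the relations computed by $\reg$ and $\reg'$ then follows by combining this lemma with the definition of $\mu'$.
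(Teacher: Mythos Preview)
Your proposal is correct and follows essentially the same approach as the paper: induction on $|w|$, decomposing $w=w'a$, applying the induction hypothesis to the prefix, and then checking in each of the two branches of $\delta'$ that $\sigma(Y)\,t(Y)=\sigma'(X)\,r'(X)\,\alpha=\nu'(X)\,\alpha=\nu(Y)$. The additional bookkeeping remarks you make (well-definedness of the $(\cdot)_1,(\cdot)_2$ split and the status of the degenerate branch) are accurate but belong to the surrounding argument rather than to this lemma itself, exactly as in the paper.
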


\begin{proof}
The proof is made by induction on the length of $w$. By construction the property holds for $w = \varepsilon$. Suppose now that $w=w'a$ for some $a \in \alphabet$.

Suppose that there is a run in $\reg$ on $w$ from $(q_{init},\nu_{init})$ to $(q,\nu)$. Set $(q',\nu')$ the configuration such that there is a run in $\reg$ on $w'$ from $(q_{init},\nu_{init})$ to $(q',\nu')$ and $\delta(q',a)=(q,g)$. By induction hypothesis, there is a run in $\reg'$ on $w'$ from $((q_{init},r_{init}),\nu_{init})$ to $((q',r'),\sigma')$ such that for all registers $X$, $\nu'(X)=\sigma'(X)r'(X)$. Moreover, by construction, we are in one of the following case:
\begin{itemize}
\item $\delta'((q',r'),a)=((q,r),h)$ where $h(Y)=(X,(r'(X)\alpha)_1)$, $r(Y)=(r'(X)\alpha)_2$ for $X,Y \in R$ such that $g(Y)=(X,\alpha)$ if $r(X)\alpha \in \alphab^* (\alphab \cup \alphab^{-1})^{\leq N}$. Thus there is a run in $\reg$ on $w$ from $((q_{init},r_{init}),\nu_{init})$ to $((q,r),\sigma)$ with $\sigma(Y)r(Y) = \sigma'(X)(r'(X)\alpha)_1(r'(X)\alpha)_2$ for $X,Y \in \Reg$ such that $g(Y)=(X,\alpha)$. Thus, $\sigma(Y)r(Y) = \nu'(X)\alpha = \nu(Y)$.
\item  $\delta'((q',r'),a)=((p,t),h)$ where $h(Y)=(X,r'(X)\alpha)$, $r(Y)=\unit$ if $r(X)\alpha \notin \alphab^* (\alphab \cup \alphab^{-1})^{\leq N}$. Thus there is a run in $\reg$ on $w$ from $((q_{init},r_{init}),\nu_{init})$ to $((q,r),\sigma)$ with $\sigma(Y)r(Y) = \sigma'(X)(r'(X)\alpha)$ for $X,Y \in \Reg$ such that $g(Y)=(X,\alpha)$. Thus, $\sigma(Y)r(Y) = \nu'(X)\alpha = \nu(Y)$.
\end{itemize}

Conversely, suppose that there is a run in $\reg'$ on $w$ from $((q_{init},r_{init}),\nu_{init})$ to $((q,r),\sigma)$. Set $((q',r'),\sigma')$ the configuration such that there is a run in $\reg'$ on $w'$ from $((q_{init},r_{init}),\nu_{init})$ to $((q',r'),\sigma')$ and $\delta'((q',r'),a)=((q,r),h)$. Then by induction hypothesis, there is a run in $\reg$ on $w'$ from $(q_{init},\nu_{init})$ to $(q',\nu')$ such that for all registers $X$, $\nu'(X)=\sigma'(X)r'(X)$. Moreover, by construction,
$\delta(q,a)=(q,g)$ and if $g(Y)=(X,\alpha)$ then, suppose that we are in the first case, $h(Y)=(X,(r'(X)\alpha)_1)$, $r(Y)=(r'(X)\alpha)_2$ for $X,Y \in \Reg$. Thus, there is a run in $\reg$ on $w$ from $(q_{init},\nu_{init})$ to $(q,\nu)$ with $\nu(Y) = \nu'(X)\alpha = \sigma'(X)r'(X) \alpha = \sigma'(X)(r'(X)\alpha)_1(r'(X)\alpha)_2 = \sigma(Y)r(Y)$. The second case is similar.
\qed\end{proof}

By the previous lemma, we can now prove that $\inter{\reg} = \inter{\reg'}$.
Consider a run in $\reg$ on a word $w$ from $(q_{init},\nu_{init})$ to $(q,\nu)$ and $(q,Y,\alpha) \in \mu$. Then, by Lemma \ref{lemma:induction}, there is a run in $\reg'$ on $w$ from $((q_{init},r_{init}),\nu_{init})$ to $((q,r),\sigma)$ for some $\sigma$ such that $\nu(Y)=\sigma(Y)r(Y)$. Thus, $\nu(Y)\alpha = \sigma(Y)r(Y)\alpha$ and by construction, $((q,r),Y,r(Y)\alpha) \in \mu'$.

Conversely, suppose that there is a run in $\reg'$ on $w$ from $((q_{init},r_{init}),\nu_{init})$ to $((q,r),\sigma)$ and a register $Y$ such that $((q,r),Y,r(Y)\alpha) \in \mu'$. Then by Lemma \ref{lemma:induction}, there is a run in $\reg$ on $w$ from $(q_{init},\nu_{init})$ to $(q,\nu)$ such that $\nu(Y)=\sigma(Y)r(Y)$. Thus, $\sigma(Y)r(Y)\alpha = \nu(Y)\alpha$ and by construction, $(q,Y,\alpha) \in \mu$.

Thus, $\reg'$ computes the same relation as $\reg$. Moreover, by construction, $\reg'$ uses $k$ independent registers.
What is left is to prove that it only uses elements in $\alphab^*$.

\bigskip

Set $E$ the minimal subset of $Q\times \Reg$ such that:
\begin{itemize}
\item $(q,X) \in E$ if there is $\alpha\in \free$ such that $(q,X,\alpha) \in \mu$,
\item $(q,X) \in E$ if there is $(p,Y) \in E$, $a\in \alphabet$ such that $\delta(q,a)=(p,h)$ for some $h$ such that $h(Y)=(X,\alpha)$ for some $\alpha \in \free$.
\end{itemize}

We say that a register $X$ is \intro{alive} in $q$ if $(q,X) \in E$.

\begin{lemma}
\label{lemma:trans}
For all configurations $(q,\nu)$, for all runs from $(q_{init}, \nu_{init})$ to $(q,\nu)$, for all alive registers $X$ in $q$, $\nu(X)$ belongs to $\alphab^*(\alphab \cup \alphab^{-1})^{\leq N}$.
\end{lemma}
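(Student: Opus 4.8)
The plan is to use the defining property of $E$ --- a register alive in a state is one whose current content is eventually read off as an output --- to turn the desired algebraic membership into the existence of a short accepting completion of the run, and then to invoke $\inter{\reg}\subseteq A^*\times\alphab^*$.

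First I would unfold the inductive definition of $E$. Since $E$ is the \emph{least} set closed under the two rules, the fact $(q,X)\in E$ comes with a finite derivation: there are states $q=q_0,q_1,\dots,q_\ell$, letters $a_1,\dots,a_\ell\in A$ with $\delta(q_{i-1},a_i)=(q_i,h_i)$, and registers $X=X_0,X_1,\dots,X_\ell$ with $h_i(X_i)=(X_{i-1},\alpha_i)$, ending at a base case $(q_\ell,X_\ell,\beta)\in\mu$. Using that the registers of $\reg$ are independent, $h_i(X_i)=(X_{i-1},\alpha_i)$ forces $X_{i-1}=X_i$; hence all $X_i$ equal $X$, and the derivation is simply a path $q_0\xrightarrow{a_1}q_1\xrightarrow{a_2}\cdots\xrightarrow{a_\ell}q_\ell$ along which every update maps $X$ to $(X,\alpha_i)$ with $|\alpha_i|\le m$, terminating at a state carrying the output $(q_\ell,X,\beta)$ with $|\beta|\le s$. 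If a state repeats, say $q_i=q_j$ with $i<j$, I splice out the loop: $\delta$ being a function, $\delta(q_i,a_{j+1})=\delta(q_j,a_{j+1})=(q_{j+1},h_{j+1})$, so the shortened sequence is still such a derivation. Iterating, I may assume $q_0,\dots,q_\ell$ pairwise distinct, so $\ell\le|Q|-1$.

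Next I would run this path from the configuration $(q,\nu)$ at hand. As $\reg$ is deterministic, reading $a_1\cdots a_\ell$ from $(q,\nu)$ leads to some $(q_\ell,\nu')$, and since $X$ is independent its content evolves by right-multiplication only, so $\nu'(X)=\nu(X)\,\alpha_1\cdots\alpha_\ell$ (the empty product being $\unit$ when $\ell=0$). Appending the output rule $(q_\ell,X,\beta)$, and prepending the given run from $(q_{init},\nu_{init})$ to $(q,\nu)$ --- say it reads $u$ --- yields an accepting run of $\reg$ on $u\,a_1\cdots a_\ell$ with value $\nu(X)\,\alpha_1\cdots\alpha_\ell\,\beta$. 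Since $\inter{\reg}\subseteq A^*\times\alphab^*$, this value equals some $w'\in\alphab^*$, whence $\nu(X)=w'\,(\alpha_1\cdots\alpha_\ell\,\beta)^{-1}$. As $|(\alpha_1\cdots\alpha_\ell\,\beta)^{-1}|=|\alpha_1\cdots\alpha_\ell\,\beta|\le\ell m+s\le(|Q|-1)m+s\le N$, the element $(\alpha_1\cdots\alpha_\ell\,\beta)^{-1}$ lies in $(\alphab\cup\alphab^{-1})^{\le N}$, and therefore $\nu(X)\in\alphab^*(\alphab\cup\alphab^{-1})^{\le N}$, as required.

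The main --- in fact essentially the only --- point of care, and the reason the bound $N=|Q|m+s$ does not involve $|\Reg|$, is the use of register independence to collapse the derivation chain to a simple path in the state space; without it, cutting repeated pairs $(q_i,X_i)$ would merely give $\ell\le|Q|\cdot|\Reg|$. The remaining ingredients --- that $E$ is generated by finite derivations, that loop-excision preserves the shape of a derivation, and that the completed run is genuinely accepting with the claimed output --- are routine bookkeeping. \qed
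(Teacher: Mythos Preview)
Your argument is correct and follows essentially the same strategy as the paper: complete the run to a short accepting run using the definition of ``alive'', observe that the resulting output lies in $\alphab^*$ because $\inter{\reg}\subseteq A^*\times\alphab^*$, and right-multiply by the inverse of the short tail to place $\nu(X)$ in $\alphab^*(\alphab\cup\alphab^{-1})^{\le N}$. The paper's proof is a three-line sketch that asserts the existence of a completion with weight of size at most $|Q|m$ without justification; your version is more explicit, and in particular your observation that register independence is what collapses the $(q_i,X_i)$-chain to a simple path in $Q$ (so that $N$ need not depend on $|\Reg|$) is exactly the point the paper leaves implicit.
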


\begin{proof}
Consider a run from $(q_{init}, \nu_{init})$ to $(q,\nu)$ and an alive register $X$ in $q$ such that $\nu(X)=c \in \free$.
The run can be completed in a run that ends in some $(q',\nu')$ such that there are a register $Y$, $\alpha \in (\alphab \cup \alphab^{-1})^{\leq |Q|m}$, $\beta \in (\alphab \cup \alphab^{-1})^{\leq s}$ with $\nu'(Y) = c \alpha$, $(q',Y,\beta) \in \mu$ and thus $c \alpha \beta \in \alphab^*$.
Finally, $c \in \alphab^* (\alphab \cup \alphab^{-1})^{\leq N}$.
\qed\end{proof}

To prove that the updates of $\reg'$ only use elements in $\alphab^*$, we need to prove that for all accessible $(q,r)$, for all $X,Y \in \Reg$, $X$ alive in $q$, $\alpha \in \group$ such that $g(Y)=(X,\alpha)$, we have $r(X)\alpha$ belongs to $\alphab^* (\alphab \cup \alphab^{-1})^{\leq N}$. We prove it by induction on the length of the shortest run ending in $(q,r)$. It is true for $(q_{init},r_{init})$ by definition of $N$.

By contradiction, if it is not true for $(q,r)$, then $r(X) \alpha = u a^{-1} v$ with $u\in \alphab^*$, $a \in \alphab$, $v\neq av'$ for all $v'\in \free$, and $v \notin (\alphab \cup \alphab^{-1})^{< N}$. By induction hypothesis and completing the run, there is a word $w$ such that $(w,u' u a^{-1} v v') \in \inter{\reg}$ with $u' \in \alphab^*$ and $v' \in (\alphab \cup \alphab^{-1})^{\leq N}$. This output is supposed to be in $\alphab^*$.
We are in one of the two following cases:
\begin{itemize}
\item The word $u$ ends with a letter of $\alphab$ different from $a$. In this case, $u' u a^{-1} v v'$ cannot be in $\alphab^*$ since $v\neq av'$ for all $v'\in \free$, $v \notin (\alphab \cup \alphab^{-1})^{< N}$ and $v' \in (\alphab \cup \alphab^{-1})^{\leq N}$. That is a contradiction.
\item The word $u$ is empty. Since $v\neq av'$ for all $v'\in \free$, $v \notin (\alphab \cup \alphab^{-1})^{< N}$ and $v' \in (\alphab \cup \alphab^{-1})^{\leq N}$, then the last letter of $u'$ must be an $a$. By definition of $\delta'$, if this $a$ has been used to update the registers, it means that in $\reg$, the run corresponding to this output has a sequence of weights: $a, \alpha_1,\ldots, \alpha_s, a^{-1}$ such that $\alpha_1 \dotsm \alpha_s = \unit$, $\alpha_1 \dotsm \alpha_s \neq a^{-1}\beta$ for all $\beta\in \group$ and by definition of $\delta'$, $\alpha_1 \dotsm \alpha_s = \beta \gamma$ with $\beta \notin B^*(\alphab \cup \alphab^{-1})^{\leq N}$. Under these conditions, $u'\beta \notin B^*(\alphab \cup \alphab^{-1})^{\leq N}$, that is in contradiction with Lemma~\ref{lemma:trans}. $(\star)$
\end{itemize}

\medskip

Finally, we need to prove that the output relation of $\reg'$ also uses only elements in $\alphab^*$. Thus, let us prove that for all accessible $(q,r)$ and $(q,X,\alpha) \in \mu$, $r(X)\alpha$ belongs to $\alphab^*$. By contradiction, if not, then $r(X)\alpha = u a^{-1} v$ with $a\in \alphab$, $u \in \alphab^*$ that does not end with $a$, $v \in \free, v\neq av'$ for some $v' \in \free$. Let us treat the two following cases:
\begin{itemize}
\item The word $u$ ends with a letter $b\neq a$. In this case, by completing the run, one of the images of a word is of the form $\beta r(X) \alpha$ with $\beta \in \alphab^*$. But in this case, $\beta r(X) \alpha$ would not belong to $\alphab^*$. Thus, $\reg$ would not compute a relation in $A^*\times B^*$, that is a contradiction.
\item The word $u$ is empty. By completing the run into an accepting run in $\reg'$, we get a sequence of weights of transitions $\alpha_1, \ldots, \alpha_s \in \alphab^*$ such that one of the output is $\alpha_1\dotsm \alpha_s a^{-1} v$. Since the output is supposed to be in $\alphab^*$, it means that the word $\alpha_1\dotsm \alpha_s$ ends with an $a$. And we can use a similar argument as $(\star)$.
\end{itemize}

\section{Proofs of Section~\ref{sec:decision}: Decidability}
\label{appendix_decision}
% !TEX root =  main.tex

First observe that when we described the skeleton of a counter-example,
we claimed that we can look for a counter-example with more than $k$
loops. This claim is exactly Lemma~\ref{lemma-BTP-loops},
proven in Appendix, Section~\ref{appendix_loops_BTP}.

\paragraph{Case of commutative groups.}
We omitted the proof of the lower bound, so we give details now.

\begin{lemma}
Over the group $(\mathbb{Z},+)$, the \BTP{k} problem is \PSPACE-hard~($k$ given in unary).
\end{lemma}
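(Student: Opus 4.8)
The plan is to reduce from the \PSPACE-complete problem of deciding, for given complete deterministic finite automata $\mathcal{A}_1,\dots,\mathcal{A}_k$ over an alphabet $\Sigma$, whether $\bigcap_{i=1}^k L(\mathcal{A}_i)=\emptyset$ (non-emptiness of the intersection of a family of DFAs is PSPACE-complete, and so is its complement since \PSPACE\ is closed under complement). Since the $\mathcal{A}_i$ are given explicitly, the number $k$ of automata is bounded by the input size, so having $k$ in unary costs nothing. From such an instance I build, in logarithmic space, a weighted automaton $W$ over $(\mathbb{Z},+)$ such that $W$ satisfies $\BTP{k}$ if and only if $\bigcap_{i=1}^k L(\mathcal{A}_i)=\emptyset$, which yields the claimed hardness.

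The automaton $W$ works over $\Sigma\cup\{\$\}$ and is the disjoint union of $k+1$ \emph{modes}: mode $0$ is a single state $p_0$ with a weight-$0$ self-loop on every letter of $\Sigma$, and for $1\le i\le k$, mode $i$ is a copy of $\mathcal{A}_i$ in which every transition has weight $0$. I add $k+1$ \emph{sink} states $s_0,\dots,s_k$, where $s_i$ has a self-loop on $\$$ of weight $i$ and is final with output $0$; finally I put $p_0\xrightarrow{\$\mid 0}s_0$ and, for each state of the copy of $\mathcal{A}_i$ that copies an accepting state of $\mathcal{A}_i$, a transition to $s_i$ reading $\$$ with weight $0$. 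All initial weights are $0$, and I trim $W$. By construction, a run first reads a $\Sigma$-word inside its mode, may then move to a sink on a $\$$ — which for mode $i\ge 1$ is possible exactly when the $\Sigma$-word read so far is in $L(\mathcal{A}_i)$ — and afterwards stays in that sink reading $\$$'s; the only cycles of $W$ are weight-$0$ cycles inside a $\Sigma$-part and the weight-$i$ self-loops at the sinks $s_i$.

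For the easy direction, if $w\in\bigcap_{i=1}^k L(\mathcal{A}_i)$, take for $0\le j\le k$ the run $\rho_j$ that reads $w$ in mode $j$, moves to $s_j$ on $\$$, and then loops on $\$$ at $s_j$; padding with trivial loops so that each $\rho_j$ has exactly $k$ loops produces an instance of $\BTP{k}$ (Figure~\ref{figure:BTPk}) in which $u_{1,j}=w\$$ and $v_{1,j}=\$$ for \emph{all} $j$, and in fact all the $u_{i,j}$ and $v_{i,j}$ are independent of $j$. Hence for every pair $(j,j')$ the common input prefix is the entire run, so satisfying $\BTP{k}$ would require every loop to preserve the delay; but, by the observation in the proof of Theorem~\ref{thm:BTP} that over a commutative group a loop changes the delay exactly when its two weights differ, the first loop already changes it since $\beta_{1,j}=j\neq j'=\beta_{1,j'}$. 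So no pair witnesses $\BTP{k}$, i.e.\ $W$ violates it. Conversely, suppose $W$ violates $\BTP{k}$ and fix a witnessing family of $k+1$ runs in which every pair is differentiated within its common input prefix. A loop shared by two runs reads one fixed word $v$; as no state of $W$ loops on a word mixing $\Sigma$ and $\$$, either $v=\varepsilon$, or $v\in\Sigma^+$ with both runs in weight-$0$ states (so $\beta=\beta'=0$), or $v\in\{\$\}^+$ with both runs sitting in sinks $s_a,s_b$ (with weights $a|v|$ and $b|v|$) — and only the last case, with $a\neq b$, can change the delay. Therefore every run of the family eventually reaches a sink, the $k+1$ sink indices are pairwise distinct, hence equal to $\{0,1,\dots,k\}$, and for each pair the common input prefix must extend past both sink-entry $\$$'s, which forces the $\Sigma$-words read before those $\$$'s to coincide. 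Propagating this over all pairs gives a single $\Sigma$-word $w$ read by all $k+1$ runs before entering sinks, and since the sink indices cover $\{1,\dots,k\}$, $w\in L(\mathcal{A}_i)$ for every $i$, i.e.\ $\bigcap_{i=1}^kL(\mathcal{A}_i)\neq\emptyset$ — a contradiction.

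The reduction is plainly logspace, so what remains is to make the two directions rigorous. The easy direction is a direct verification. The hard part, and the real content, is the converse: one must show that because in $W$ the only loops that can move a delay are the $\$$-loops inside the sinks, any $\BTP{k}$-violating configuration is forced to be a "broom" rooted at a word accepted by all the $\mathcal{A}_i$. This rests on the case analysis of what a loop common to two runs can look like, on the observation that each run reaches at most one sink, and on tracking common input prefixes across all $\binom{k+1}{2}$ pairs simultaneously; note also that if some $L(\mathcal{A}_i)=\emptyset$, trimming removes $s_i$, so fewer than $k+1$ distinct sink indices are available and the structural argument correctly reports that $\BTP{k}$ holds.
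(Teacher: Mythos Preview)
Your proof is correct, and the construction is essentially the one in the paper (the paper uses $k$ modes and reduces to the $\BTP{k-1}$ problem, whereas you add a trivial extra mode~$0$ so as to reduce directly to the $\BTP{k}$ problem; this is just an index shift). The genuine difference lies in the argument for the direction ``intersection empty $\Rightarrow$ the property holds''. The paper does not analyse $\BTP{k}$ instances at all: it instead exhibits, when the intersection is empty, a cost register automaton with $k-1$ independent registers computing $\inter{W}$, and then invokes Theorem~\ref{theorem:main} (via Proposition~\ref{prop:kseq-crak-equivalence}) to conclude that the $\BTP{k-1}$ holds. Your argument is more elementary and self-contained: you reason directly on the shape of a hypothetical $\BTP{k}$ counterexample, using the commutative-group observation from Theorem~\ref{thm:BTP} that a shared loop can change a delay only if the two loop weights differ, to force all $k+1$ runs into pairwise distinct sinks and then to pin down a single $\Sigma$-word lying in every $L(\mathcal{A}_i)$. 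The trade-off is that the paper's route is shorter but leans on the main theorem of the paper, whereas yours is independent of that theorem and would still stand even before Theorem~\ref{theorem:main} is proved.
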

\begin{proof}
We present a reduction of the emptiness of $k$ deterministic finite state automata
to the \BTP{k} problem, using similar ideas to a lower bound proved in~\cite{DBLP:conf/icalp/AlurR13}.
Let $D_1,\ldots,D_k$ be $k$ deterministic finite state automata over some alphabet $A$.
Let $\#$ and $\$$ be two fresh symbols not in $A$. For each $i$,
we build a deterministic weighted automaton $W_i$ over the group $(\mathbb{Z},+)$.
The semantics of $W_i$ is defined as:
$$
\inter{W_i} = \{ (u\# \$^m, i*m) \mid u \in \dom(D_i), m \in \mathbb{N}\}
$$
We consider now the weighted automaton defined as $W = \bigcup_i W_i$.
We claim that $W$ does not satisfy  $\BTP{k-1}$ iff the intersection
of the languages defined by the $D_i$'s is non-empty.

First, suppose that the intersection of the languages is empty. We describe the construction
of a CRA with $k-1$ independent registers $X_1,\ldots,X_{k-1}$ that realizes the relation $\inter{W}$.
We consider the deterministic finite state automaton $D$ obtained as the product of the $D_i$'s.
We add to $D$ the set of states $Q' = 2^{\{1,\ldots,k\}}$.
From each state $(q_1,\ldots, q_k)$ of $D$, we add a transition on symbol $\#$ that goes to
the state $\{i \mid q_i \textup{ is a final state of }D_i\} \in Q'$.
As the intersection of the languages is empty, every reachable state $q'\in Q'$ is such that
$|q'|\leq k-1$. We can thus restrict $D$ to elements of $Q'$ of size at most $k-1$.
Last, given a state $q'\in Q'$, we add a self-loop labelled by the symbol $\$$.

We describe now the updates of the registers. The only non-trivial updates are on self-loops of
states in $Q'$. Let $q'=\{i_1,\ldots,i_m\} \in Q'$. As explained above, we have $m\leq k-1$.
We suppose that indices $i_j$'s are sorted by increasing order. Intuitively, we use register
$X_j$ to represent index $i_j$, so we define the update $X_j:=X_j*i_j$. The final output function
of state $q'$ is simply defined as the set $\{X_1,\ldots,X_m\}$.
It is easy to verify that this CRA realizes the relation $\inter{W}$.

Conversely, suppose now that the intersection of the languages is non-empty, and let $u$
be a word in this intersection. We proceed by contradiction, and suppose that
there exists an equivalent CRA with $k-1$ independent registers.
By definition, the relation $\{(u\#\$^m, i*m) \mid m \geq 0, i\in \{1,\dots, k\}\}$
is included in $\inter{W}$. In particular we can find input words with $k$ output values that are pairwise
arbitrarily far. It is easy to show that a CRA with $k-1$ registers is unable to accept such a relation, hence
a contradiction.
\qed\end{proof}

\paragraph{Case of transducers.}

We describe the condition on the skeleton that we can ensure when we are considering transducers.

Let $\Kc$ be a positive integer.
We say that two runs $\rho_1$ and $\rho_2$ on the same input word $u$ are $\Kc$-close if
for every prefix $u'$ of $u$, the restrictions $\rho'_1$ and $\rho'_2$ of the two runs on the
input $u'$ are such that $\cayley{\alpha_1}{\alpha_2}\leq \Kc$, where $\alpha_i$ is the output word produced
by the run $\rho'_i$.

\begin{lemma}[Nice shape]
Let $T$ be a transducer, then $T$ violates $\BTP{k}$ iff there exists a counter example given as follows: there are
\begin{itemize}
\item  states $\{q_{i,j}\}_{0\leq i \leq m, 0\leq j \leq k}$ with $k \leq m \leq k^2$, and $q_{0,j}$ initial for all $j$,
\item  words  $\{u_{i,j}\}_{0\leq i \leq m, 0\leq j \leq k}$ and $\{v_{i,j}\}_{1\leq i \leq m, 0\leq j \leq k}$ such that there are $k+1$ runs satisfying for all $0 \le j \le k$, for all $1 \le i \le m$, $q_{i-1,j} \xrightarrow{u_{i,j} \mid \alpha_{i,j}} q_{i,j}$ and $q_{i,j} \xrightarrow{v_{i,j} \mid \beta_{i,j}} q_{i,j}$
\end{itemize}
and such that for all $0\leq j <j' \leq k$, there exists $1\leq i \leq m$ such that for all $1\leq i' \leq i$, we have $u_{i',j}=u_{i',j'}$ and $v_{i',j}=v_{i',j'}$, and
\begin{enumerate}[$a)$]
\item either  $|\beta_{i,j}|\neq |\beta_{i,j'}|$,
\item or  $|\beta_{i,j}| = |\beta_{i,j'}| \neq 0$,
the words $\alpha_{1,j}\ldots \alpha_{i,j}$ and $\alpha_{1,j'}\ldots \alpha_{i,j'}$ have a
mismatch, and the runs $q_{0,j} \xrightarrow{u_1\ldots u_i}q_{i,j}$
and $q_{0,j'} \xrightarrow{u_1\ldots u_i}q_{i,j'}$ are $M_T.n^{k+1}$-close.
\end{enumerate}
\end{lemma}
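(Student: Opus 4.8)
The plan is to prove the two implications separately; the direction ``a counter-example of the stated shape yields a violation of $\BTP{k}$'' is a short computation in $\free$, while the converse adapts the analysis of Weber and Klemm~\cite{DBLP:journals/iandc/WeberK95} for the classical twinning property. Throughout, $n=|Q|$ and $M_T$ is the maximal size of a weight appearing in $T$. By Lemma~\ref{lemma-BTP-loops} it is equivalent to reason with $\BTP{k}'$, which is handy since the object above carries $m$ loops with $k\leq m\leq k^2$.

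\emph{Nice-shaped object $\Rightarrow$ violation.} Fix $j<j'$ and the index $i$ given by the hypothesis, and set $P=\alpha_{1,j}\cdots\alpha_{i,j}$, $P'=\alpha_{1,j'}\cdots\alpha_{i,j'}$ (initial outputs folded into $\alpha_{1,\cdot}$), $\beta=\beta_{i,j}$, $\beta'=\beta_{i,j'}$, $\delta=\delay{P}{P'}=P^{-1}P'$. A direct computation gives $\delay{P\beta^n}{P'(\beta')^n}=\beta^{-n}\delta(\beta')^n$, which is independent of $n$ iff $\beta\delta=\delta\beta'$; hence the delay at loop $i$ changes (\ie differs after one traversal of the loop) iff $\beta\delta\neq\delta\beta'$, and it suffices to exclude $\beta\delta=\delta\beta'$ under $a)$ and $b)$. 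Under $a)$, $\beta\delta=\delta\beta'$ would force $\beta^n\delta=\delta(\beta')^n$ for all $n$, hence $n(|\beta|-|\beta'|)\leq 2|\delta|$ and $n(|\beta'|-|\beta|)\leq 2|\delta|$, impossible for large $n$ since $|\beta|\neq|\beta'|$. Under $b)$, the mismatch between the positive words $P$ and $P'$ makes $\delta$ a reduced word starting with a negative generator and ending with a positive one, so $|\delta\beta'|=|\delta|+|\beta'|$; if $\beta\delta=\delta\beta'$ then, using $|\beta|=|\beta'|$, there is no cancellation between $\beta$ and $\delta$, whence $\beta\delta$ starts with a positive generator while $\delta\beta'$ starts with a negative one, a contradiction. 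So in either case the delay changes at loop $i$, and since $u_{i',j}=u_{i',j'}$, $v_{i',j}=v_{i',j'}$ for all $i'\leq i$, running over all pairs gives a violation of $\BTP{k}'$.

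\emph{Violation $\Rightarrow$ nice-shaped object.} Starting from a violation of $\BTP{k}$, I would put it in the normal form discussed in Section~\ref{sec:decision}: a witness with $m_0=k(k+1)/2\leq k^2$ loops in which loop $\ell$, attached to a pair $(j_\ell,j'_\ell)$, is its differentiating loop (the inputs of $j_\ell,j'_\ell$ agree up to level $\ell$ and the delay changes at loop $\ell$). Each loop $\ell$ is then processed separately. If its two outputs have distinct lengths it is already of type $a)$. Otherwise they have equal length, hence (as above) both non-empty, and I would \emph{pump} it: replace the single occurrence of $v_{\ell,\cdot}$ by $v_{\ell,\cdot}^{N+1}$, absorbing $v_{\ell,\cdot}^{N}$ into the preceding backbone segment and keeping one copy as the loop. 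The identity $\delay{P\beta^n}{P'(\beta')^n}=\beta^{-n}\delta(\beta')^n$ shows that pumping preserves every delay (in)equality of the witness, so the differentiating structure survives; and a Fine--Wilf / Lyndon--Sch\"utzenberger argument shows that if $P\beta^n$ stayed a prefix of $P'(\beta')^n$ (or conversely) for all $n$, then $\beta$, $\beta'$ and the gap word would be powers of a common word, forcing $\beta\delta=\delta\beta'$, which is excluded. Hence, for an effectively bounded $N$, a mismatch appears between the pumped prefixes and loop $\ell$ becomes of type $b)$. Pumping creates no new loop, so $m\leq k^2$.

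Finally, the closeness clause of case $b)$: for each pair I would take the \emph{earliest} loop convertible to type $a)$ or $b)$ by the above; along the common-input path leading to it, $\cayley{\cdot}{\cdot}$ between the outputs of $j$ and $j'$ cannot exceed $M_T\,n^{k+1}$, for otherwise a repeated state in the $(k+1)$-fold product $T^{k+1}$ (state set of size $n^{k+1}$) would expose an intermediate loop across which the delay already changes, \ie an earlier differentiating loop, contradicting minimality. The main obstacle is precisely this combinatorial part together with the Fine--Wilf step: one must make ``pump until a mismatch appears'' effective (bound $N$ in terms of $n$ and $M_T$) and keep it compatible with the $M_T\,n^{k+1}$-closeness, \ie ensure the surgeries performed for distinct pairs do not interfere — which is exactly where the one-loop-per-pair normal form is used. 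The remaining checks are routine combinatorics on words, in the spirit of~\cite{DBLP:journals/iandc/WeberK95}.
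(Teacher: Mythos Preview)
Your overall approach matches the paper's. You spell out the reverse direction carefully (and correctly) where the paper simply calls it ``trivial''; your free-group computations for cases $a)$ and $b)$ are fine. For the forward direction, both you and the paper normalize the witness to one loop per pair, case-split on whether the loop outputs have equal length, and pump/unfold the differentiating loop to manufacture a mismatch in the prefix outputs for case~$b)$; your Fine--Wilf framing is a legitimate way to fill in what the paper only sketches.

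The one substantive divergence is the closeness argument, and here your version has a gap. The paper's route is: if the runs leading to the $b)$-loop are not $M_T n^{k+1}$-close, then the common-input word is long enough that one can extract a synchronized loop in the $(k{+}1)$-fold product whose outputs on components $j,j'$ have \emph{distinct lengths}, and one reverts to case~$a)$ for that pair---which carries no closeness requirement at all. You instead take the ``earliest'' differentiating loop and claim that a failure of closeness would expose, via a repeated product state, an earlier loop ``across which the delay already changes''. But a repeated product state only gives you a synchronized loop; nothing forces the delay to change across it---the distance may spike inside the loop and return to its starting value at the loop's endpoints, so removing such a loop can strictly decrease the maximum distance without ever witnessing a delay change. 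To repair this you would need to iterate loop removal and argue that \emph{some} removed loop must change the delay, and then still splice that loop into the witness without disturbing the other pairs. The paper sidesteps all of this: once the output-length discrepancy exceeds what the short backbone can account for, some synchronized cycle must have unequal output lengths, and that cycle can be used directly as a case-$a)$ witness. Replace your minimality argument by this ``fall back to case $a)$'' mechanism and the proof goes through.
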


\begin{proof}[Sketch]
The reverse implication is trivial, so we focus on the direct one. We consider a counter-example to
the $\BTP{k}$ and aim at deriving a counter example satisfying the above properties.

Let us consider a pair $(j,j')$ of runs indices with $0\leq j <j' \leq k$.
By definition, there exists an index $i$ (satisfying $i\leq \chi(j,j')$) such that the loop $i$ induces a different delay.
If this is due to the length of the output words, then we are done as we are in case $a)$.

Otherwise, let us assume that every loop of index at most $\chi(j,j')$ has output words of same length on components
$j$ and $j'$. Consider a loop that induces different delays. Then it induces a mismatch between the non-empty output words
of the loop on components $j$ and $j'$. This loop can be unfolded to move the mismatch on output words of the runs leading to the loop.
It remains to show that the runs are $M_T.n^{k+1}$-close. If this is the case, then we are done
and have proven that case $b)$ is satisfied.
Otherwise, we can prove that the input word  has length at least $n^{k+1}$ and thus that there exists
a synchronized loop (on the $k$ runs) whose output words on components
$j$ and $j'$ have distinct length. But then we are back in case $a)$.
\qed\end{proof}

\fi

\end{document}